\newif\iflong
\longtrue

\iflong
\newcommand{\refappendix}[1]{\Cref{#1}}
\else
\newcommand{\refappendix}[1]{the SI}
\fi

\documentclass[%
 reprint,
superscriptaddress,
 amsmath,amssymb,
 aps,
 prx,
]{revtex4-2}

\usepackage{graphicx}%
\usepackage{dcolumn}%
\usepackage{bm}%
\usepackage[version=3]{mhchem} %
\usepackage{xcolor}
\usepackage{amsmath}
\usepackage{amsthm}
\usepackage[caption=false]{subfig}
\usepackage[ruled,vlined,linesnumbered]{algorithm2e}

\providecommand{\DontPrintSemicolon}{\dontprintsemicolon}
\usepackage{tikz}
\usepackage{xcolor}
\usepackage[export]{adjustbox}
\usepackage{hyperref}
\usepackage{cleveref}

\begin{document}

\preprint{APS/123-QED}

\title{Quadratic Scaling Bosonic Path Integral Molecular Dynamics}
\author{Yotam M. Y. Feldman}%
\affiliation
{School of Chemistry, Tel Aviv University, Tel Aviv 6997801, Israel.}
\affiliation{The Ratner Center for Single Molecule Science, Tel Aviv University, Tel Aviv 6997801, Israel.}
\author{Barak Hirshberg}%
\email{hirshb@tauex.tau.ac.il}%
\affiliation{School of Chemistry, Tel Aviv University, Tel Aviv 6997801, Israel.}%
\affiliation{The Ratner Center for Single Molecule Science, Tel Aviv University, Tel Aviv 6997801, Israel.}
\affiliation{The Center for Computational Molecular and Materials Science, Tel Aviv University, Tel Aviv 6997801, Israel.}%

\date{\today}%
\newcommand{\TODO}[1]{\textcolor{red}{TODO: {#1}}}
\newcommand{\yotam}[1]{\textcolor{blue}{\bf {#1}}}
\newcommand{\notice}[1]{\textcolor{red}{\bf {#1}}}
\newcommand{\barak}[1]{\textcolor{purple}{\bf {#1}}}
\newcommand{\yotamsmall}[1]{{\footnotesize\color{magenta}[{\bf Yotam}: #1]}}
\newcommand{\alternative}[1]{\textcolor{magenta}{#1}}

\newcommand{\set}[1]{\{{#1}\}}
\newcommand{\card}[1]{\left|{#1}\right|}
\newcommand{\sumcond}[2]{\substack{{#1}, \\ {#2}}}
\newcommand{\bigO}{\mathcal{O}}

\newcommand{\fact}[1]{{#1}!}
\newcommand{\pfact}[1]{\fact{\left(#1\right)}}
\newcommand{\Symset}[1]{\mathcal{S}({#1})}
\newcommand{\Symfromto}[2]{\mathcal{S}[{#1},{#2}]}
\newcommand{\SymN}[1]{\mathcal{S}[1,{#1}]}
\newcommand{\cycleof}[2]{c_{#1}({#2})}
\newcommand{\nextof}[3]{{#3}({#1})={#2}}
\newcommand{\prevparticle}[1]{{#1}^{-}}
\newcommand{\nextparticle}[1]{{#1}^{+}}
\newcommand{\particleup}{v}
\newcommand{\particledown}{u}
\newcommand{\cyclenotate}[1]{({#1})}

\newcommand{\pos}{{\bf R}}
\newcommand{\posbead}[2]{{\bf r}_{#1}^{#2}}
\newcommand{\beadpos}[2]{\posbead{#1}{#2}}
\newcommand{\mass}{m}

\newcommand{\springconstant}{\omega_P}
\newcommand{\springforceprefix}{\frac{1}{2} \mass \springconstant^2}
\newcommand{\rdiffsquared}[4]{\left(\beadpos{#1}{#2} - \beadpos{#3}{#4}\right)^2}
\newcommand{\interparticleforce}[1]{\springforceprefix \sum_{j=1}^{P-1}{\rdiffsquared{#1}{j+1}{#1}{j}}}

\newcommand{\repsym}{G}
\newcommand{\rep}[1]{\repsym[{#1}]}

\newcommand{\boltzmann}[1]{e^{-\beta {#1}}}

\newcommand{\Vfromto}[2]{V^{[{#1},{#2}]}}
\newcommand{\Vfrom}[1]{\Vfromto{#1}{N}}
\newcommand{\Vto}[1]{\Vfromto{1}{#1}}
\newcommand{\Vall}{\Vfromto{1}{N}}

\newcommand{\originalpotential}[1]{\Vto{#1}}
\newcommand{\originalpotentialnotation}[1]{V_B^{({#1})}}

\newcommand{\Eperm}[1]{E^{#1}}
\newcommand{\Efromto}[2]{E^{[{#1},{#2}]}}
\newcommand{\Enk}[2]{\Efromto{{#1}-{#2}+1}{{#1}}}
\newcommand{\Einterior}[1]{E_{\textit{int}}^{({#1})}}
\newcommand{\permutationthree}[3]{\left(\begin{smallmatrix} 
1 & 2 & 3\\
{#1} & {#2} & {#3}
\end{smallmatrix}\right)}

\newcommand{\snip}[2]{\rep{#1} \textit{ snips at } {#2}}
\newcommand{\fullsnip}[3]{\rep{#1} \textit{ snip from } {#2} \textit{ to } {#3}}
\newcommand{\project}[2]{{#1}_{| {#2}}}

\newcommand{\beadderive}[3]{\beadderiveexplicit{#1}{#2}{#3}}
\newcommand{\beadderiveexplicit}[3]{\nabla_{\beadpos{#1}{#2}} {#3}}
\newcommand{\beadforce}[3]{- \beadderive{#1}{#2}{#3}}

\newcommand{\Prrep}{\Prerepperm{\sigma}}
\newcommand{\Prerepperm}[1]{\Pr(\rep{#1})}
\newcommand{\Prrepnext}[2]{\Pr\left[\nextof{#1}{#2}{\rep{\sigma}}\right]} 
\begin{abstract}
Bosonic exchange symmetry leads to fascinating quantum phenomena, from exciton condensation in quantum materials to the superfluidity of liquid \ce{^{4}He}. Unfortunately, path integral molecular dynamics (PIMD) simulations of bosons are computationally prohibitive beyond $\mathord{\sim} 100$ particles, due to a cubic scaling with the system size. We present an algorithm that reduces the complexity from cubic to quadratic, allowing the first simulations of thousands of bosons using PIMD.
Our method is orders of magnitude faster, with a speedup that scales linearly with the number of particles and the number of imaginary time slices (beads).
Simulations that would have otherwise taken decades can now be done in days. 
In practice, the new algorithm eliminates most of the added computational cost of including bosonic exchange effects, making them almost as accessible as PIMD simulations of distinguishable particles.
\end{abstract}

\maketitle

\section{Introduction}
Path integral molecular dynamics (PIMD) simulations are a powerful computational tool for studying quantum condensed phases at thermal equilibrium~\cite{doi:10.1063/1.446740,doi:10.1063/1.465188}. They are also the basis of several prominent methods for approximating quantum transport properties~\cite{Althorpe2021}.
For distinguishable particles, PIMD samples the partition function of the quantum system through classical molecular dynamics (MD) simulations of ``ring polymers'', formed by connecting $P$ copies of each particle (``beads'') through harmonic springs~\cite{doi:10.1063/1.441588}.
State-of-the-art PIMD simulations can infer properties of quantum solids and liquids with up to $\mathord{\sim} 1000$ particles~\cite{doi:10.1063/5.0042572,https://doi.org/10.1002/adts.202000241,C6CP05968F,doi:10.1021/acs.jpclett.7b00391,doi:10.1063/1.3167790,doi:10.1021/acs.jpclett.5b01899,Cheng2016,Ceriotti2013,Cheng2021,Riera2019,Li2022,Gaiduk2018,Eltareb2022,doi:10.1021/acs.jctc.2c01233}, employing empirical force fields or interaction potentials that are based on density functional theory~\cite{Markland2018,GU202229,https://doi.org/10.1002/adts.202000258,doi:10.1021/acs.jpclett.6b00777}. 

Exchange symmetry between indistinguishable particles is central to quantum phenomena such as Bose-Einstein condensation and superfluidity. %
However, enforcing it poses a formidable challenge to PIMD.
It requires summing over all permutations of the identical particles in the quantum partition function, leading to a different set of ring polymers for each permutation where the rings of exchanged particles are connected~\cite{ceperley1995path}. 
As a result, a straightforward implementation of PIMD for indistinguishable particles scales exponentially, $\bigO(P\fact{N})$, where $N$ is the number of particles.
For that reason,
while efficient sampling of permutations is possible in Path Integral Monte Carlo (PIMC)~\cite{ceperley1995path,PhysRevLett.56.351,PhysRevLett.96.070601,PhysRevE.74.036701,doi:10.1063/5.0008309,doi:10.1063/1.1638997,MIURA1999115,DelMaestro2010,DelMaestro2011,adrian_del_maestro_2022_7271914}, exchange effects were not included in PIMD simulations until recently, except for very few particles~\cite{doi:10.1063/1.481652,PhysRevLett.121.140602,doi:10.1063/1.479789}.

In previous work, we showed~\cite{hirshberg2019path} that bosonic PIMD simulations can be performed with only \emph{cubic} scaling, $\bigO(PN^3)$.
Based on a recurrence relation for the ring polymer potential and forces, the algorithm allowed the first PIMD simulations of moderately large bosonic systems; the largest application to date consists of $N=128$ atoms~\cite{PhysRevLett.128.045301}. %
Fermionic systems whose sign problem is not too severe were also simulated using a related approach~\cite{doi:10.1063/5.0008720,doi:10.1063/5.0106067,doi:10.1063/5.0030760,doi:10.1063/5.0093472,doi:10.1063/5.0102460,PhysRevE.106.025309}.
Nevertheless, even for bosonic systems, the cubic scaling hindered applications to systems with more than $\mathord{\sim} 100$ particles.

In this paper, we present an algorithm that reduces the scaling of bosonic PIMD \emph{from cubic to quadratic}, $\bigO(N^2 + PN)$.
Our new algorithm generates the exact same potential, forces, and trajectories as the previous algorithm, only more efficiently.
The outcome is that bosonic PIMD simulations are $\bigO(PN)$ times faster than before. %
This significant improvement allows us to perform simulations of thousands of bosonic particles, which were previously beyond reach using PIMD. 
For example, a simulation of $N=1372$ \ce{^{4}He} atoms in the superfluid liquid phase takes $\mathord{\sim} 7$ days, but would require over 14 years with the previous algorithm.
We therefore dramatically reduce the added computational overhead of bosonic exchange in PIMD simulations in comparison to distinguishable particles.

In~\Cref{sec:theory}, we first describe the theory behind the original bosonic PIMD algorithm, and then %
explain how to reduce the scaling from cubic to quadratic in the evaluation of the ring polymer potential and forces. %
In~\Cref{sec:empirical}, we verify the improved scaling of the algorithm numerically,  and %
present the observed speedup thanks to our method in simulations of $>1000$ of ultracold trapped atoms and superfluid liquid \ce{^{4}He} atoms.
\Cref{sec:conclusions} presents our conclusions while \Cref{sec:methods} provides additional computational details.

\section{Results and Discussion}
\subsection{Theory}
\label{sec:theory}
This section explains the new algorithm and how it achieves the reduced scaling.
We start by reviewing bosonic PIMD, and the original, cubic-scaling algorithm.

\subsubsection{Background: cubic scaling bosonic PIMD}
We consider a standard quantum Hamiltonian of $N$ bosons of mass $\mass$, interacting through the potential $\hat{U}$, which is given by
$
\hat{H} = \frac{1}{2\mass} \sum_{\ell=1}^N {\bf \hat{p}}^2_\ell + \hat{U}({\bf r}_1,...,{\bf r}_N).
\label{eq:H}
$
The canonical partition function at inverse temperature $\beta = (k_BT)^{-1}$ is $\mathcal{Z} = \mathop{\mathrm{Tr}}{[ e^{-\beta \hat{H}}]}$,
where the trace is taken over a properly-symmetrized position basis.
The resulting path integral expression~\cite{tuckerman2010statistical} for this partition function is
\begin{equation}
\label{eq:int-of-sum-all-permutations}
    \mathcal{Z}_P(N, \beta) \propto \int{
        d\pos_1 \ldots d\pos_N \, 
        \frac{1}{\fact{N}}
        \sum_{\sigma}{\boltzmann{\left(\Eperm{\sigma} + \bar{U}\right)}}
    }.
\end{equation}

\Cref{eq:int-of-sum-all-permutations} is the
isomorphism~\cite{doi:10.1063/1.441588} between the partition function of bosonic quantum particles and that of a system of classical ring polymers, where $\pos_\ell = \beadpos{\ell}{1},\ldots,\beadpos{\ell}{P}$ are the positions of the $P$ polymer beads that correspond to particle $\ell$.
It is exact in the limit $P \to \infty$, 
and should be interpreted as follows:
1) The summation is over all the permutations $\sigma$ of the $N$ particles.
2) Beads of the same index associated with different particles interact through a scaled potential $\bar{U} = \frac{1}{P}\sum_{j=1}^{P}{U(\beadpos{1}{j},\ldots,\beadpos{N}{j})}$, which is invariant under particle permutations. We do not consider the potential $\bar{U}$ further since it is the same for bosonic and distinguishable particles.
3) $\Eperm{\sigma} = \Eperm{\sigma}(\pos_{1},...,\pos_N)$ is the spring energy of the ring polymer configuration that corresponds to $\sigma$, as we now explain.

Recall that a \emph{permutation} $\sigma$ maps each element $\ell$ in $1,\ldots,N$ to an element $\sigma(\ell)$ in $1,\ldots,N$ without repetition, i.e., $\sigma(\ell) \neq \sigma(\ell')$ for $\ell \neq \ell'$.
In $\Eperm{\sigma}$, the ring polymers of different particles are joined together by harmonic springs of frequency $\springconstant = \frac{\sqrt{P}}{\beta \hbar}$ according to $\sigma$, connecting the last bead of particle $\ell$ to the first bead of particle $\sigma(\ell)$:
\begin{equation}
\label{eq:eperm}
    \Eperm{\sigma} = \springforceprefix \sum_{\ell=1}^{N}{ \sum_{j=1}^{P}{\rdiffsquared{\ell}{j}{\ell}{j+1}}}, \mbox{ where $\beadpos{\ell}{P+1} = \beadpos{\sigma(\ell)}{1}$.}
\end{equation}
\Cref{fig:all-permutations} depicts all the permutations of $N=3$ particles, and the ring polymer configurations that correspond to them.
\begin{figure*}[t]
\center
\begin{minipage}{0.9\textwidth}
\center
\captionsetup[subfloat]{labelformat=empty}
\begin{minipage}{.33\textwidth}
\center
  \subfloat[$\permutationthree{1}{2}{3} = \cyclenotate{1}\cyclenotate{2}\cyclenotate{3}$]{\includegraphics[trim={3.5in 1in 3.5in 1in}, clip, width=0.8\textwidth, cfbox=blue, frame]{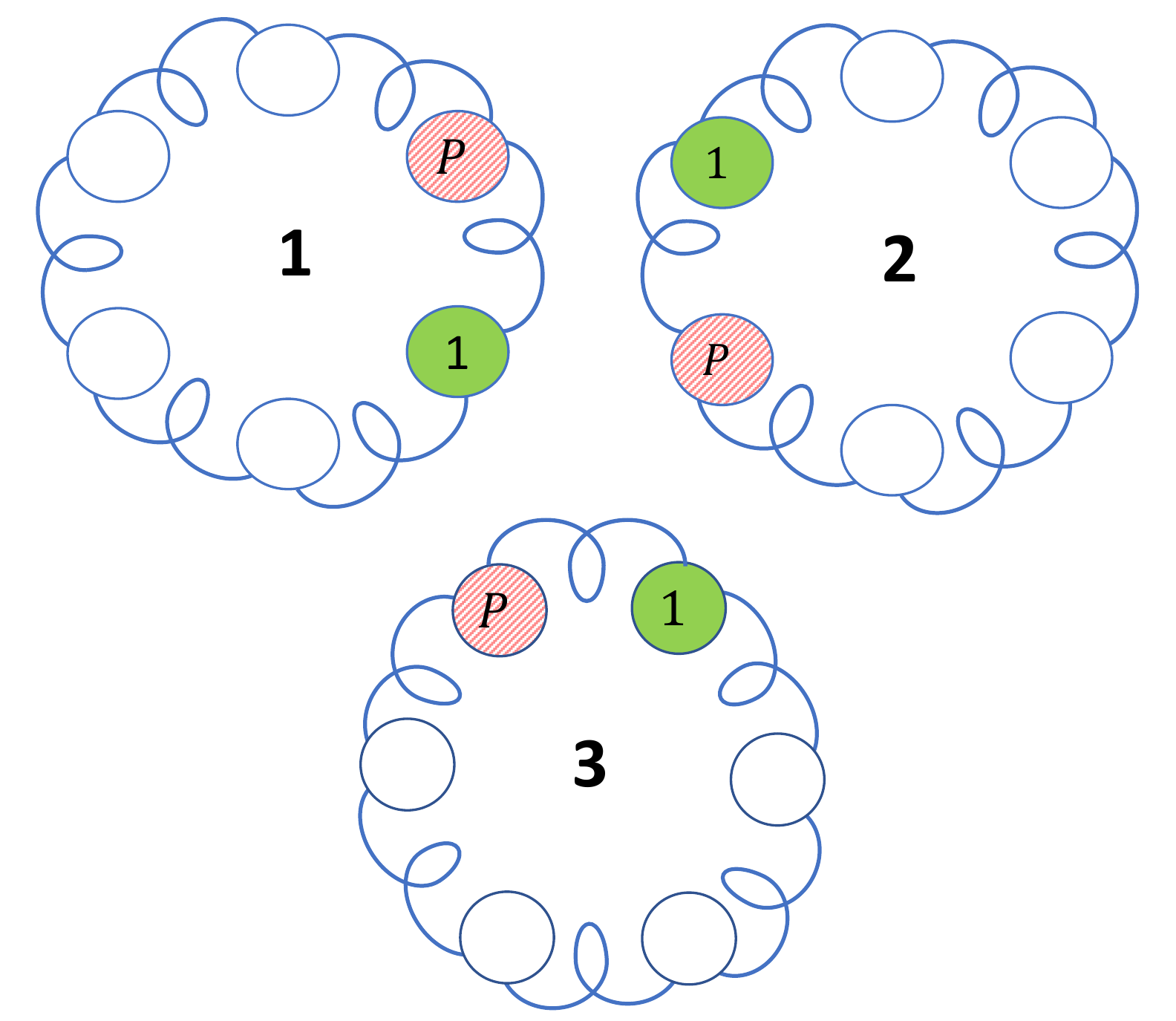}
    }
\end{minipage}%
\begin{minipage}{.33\textwidth}
\center
    \subfloat[$\permutationthree{2}{3}{1} = \cyclenotate{123}$]{\includegraphics[trim={3.5in 1in 3.5in 1in}, clip, width=0.8\textwidth, cfbox=blue, frame]{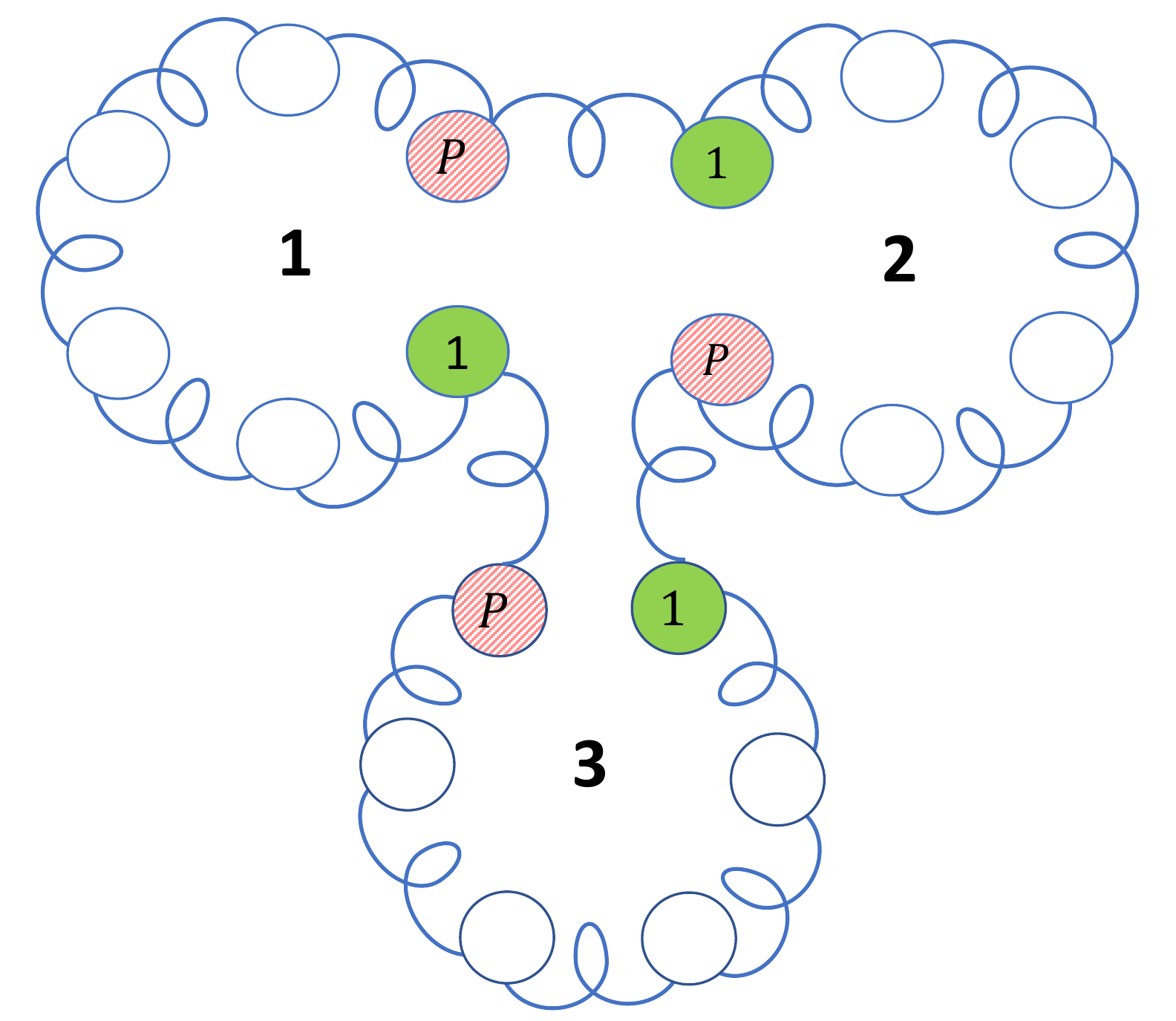}}
\end{minipage}%
\begin{minipage}{.33\textwidth}
\center
    \subfloat[$\permutationthree{3}{1}{2} = \cyclenotate{132}$]{\includegraphics[trim={3.5in 1in 3.5in 1in}, clip, width=0.8\textwidth]{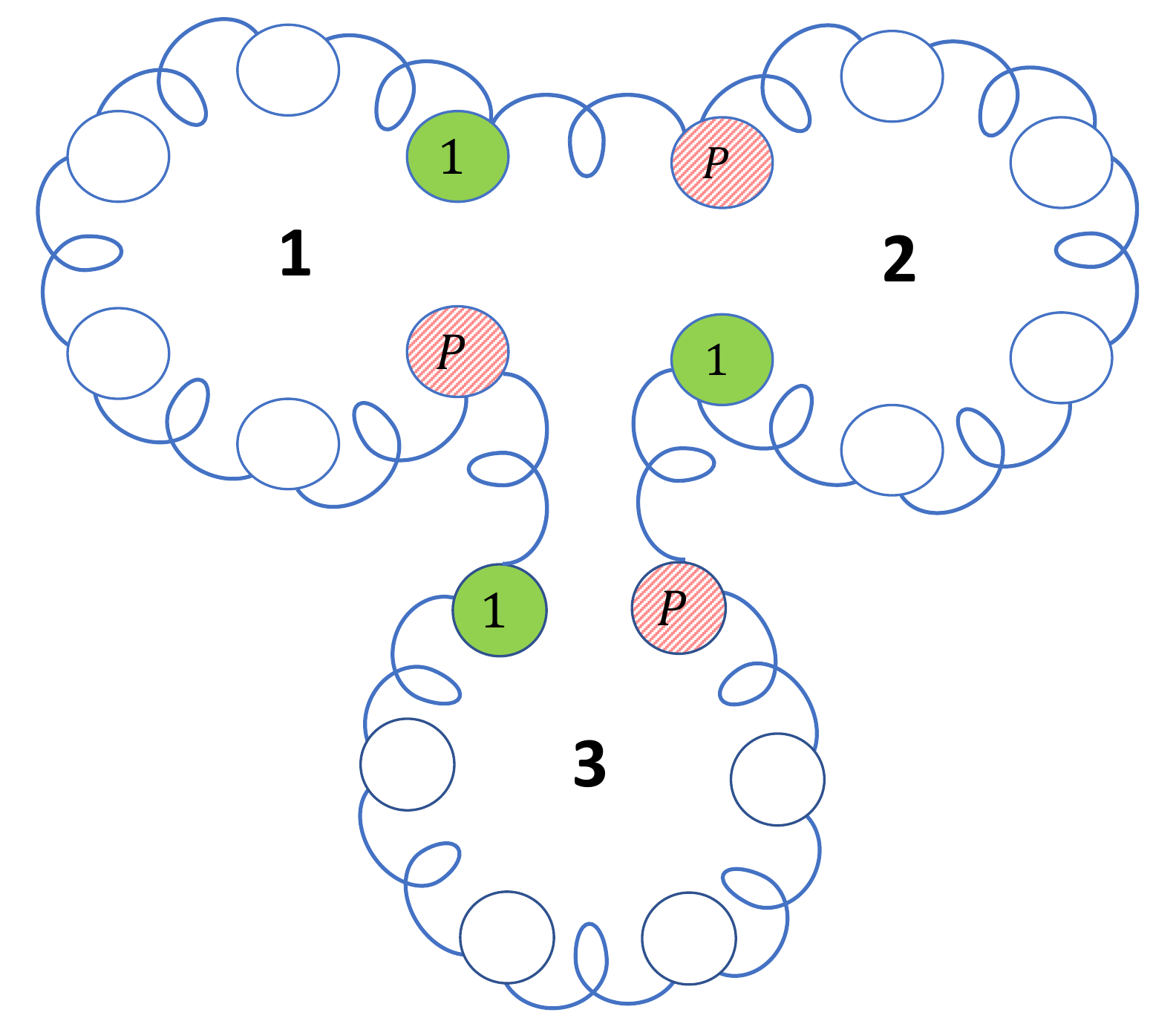}}
\end{minipage}
\\
\vspace{0.1in}
\begin{minipage}{.33\textwidth}
\center
    \subfloat[$\permutationthree{2}{1}{3} = \cyclenotate{12}\cyclenotate{3}$]{\includegraphics[trim={3.5in 1in 3.5in .9in}, clip, width=0.8\textwidth, cfbox=blue, frame]{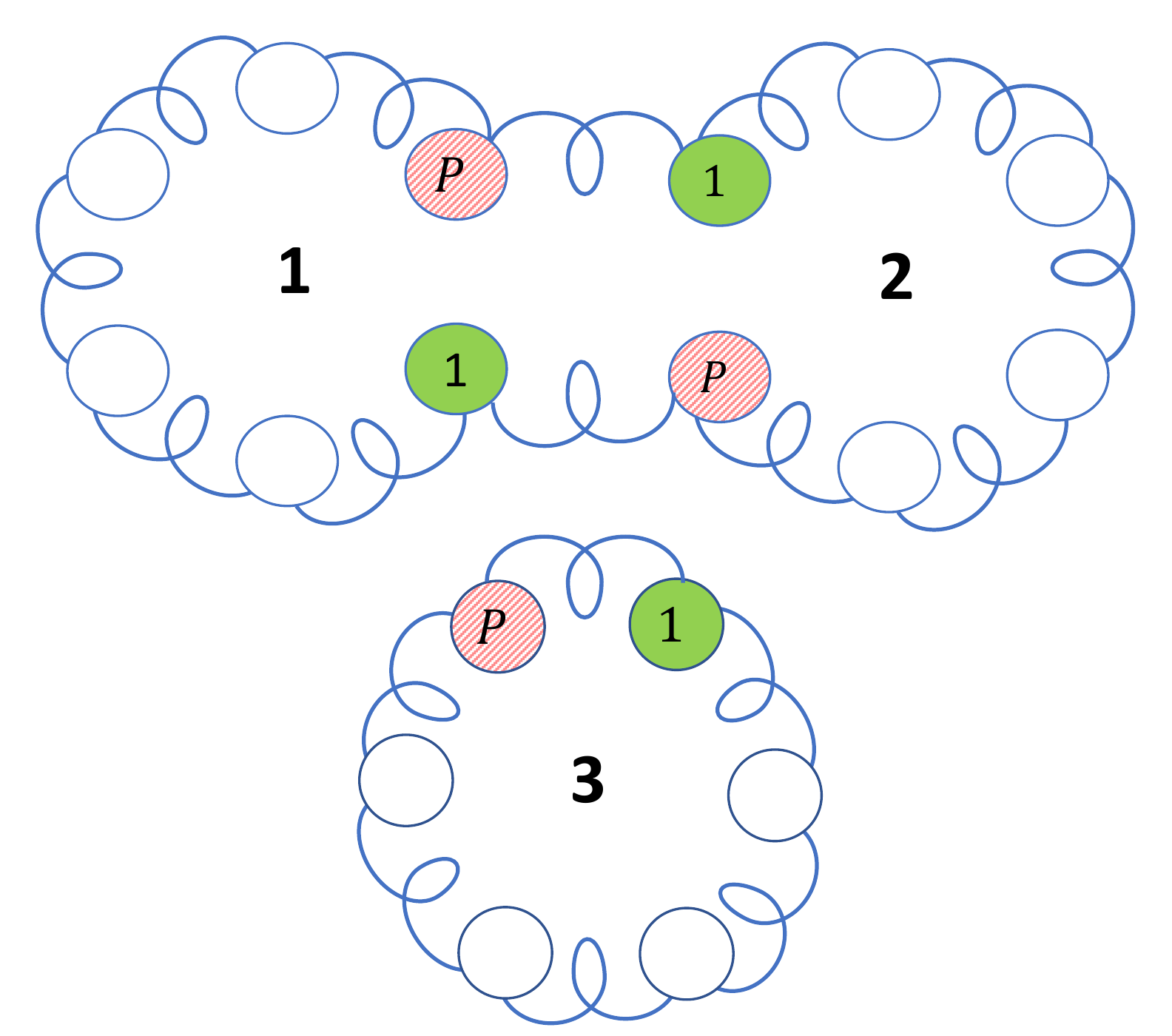}}
\end{minipage}%
\begin{minipage}{.33\textwidth}
\center
    \subfloat[$\permutationthree{3}{2}{1} = \cyclenotate{13}\cyclenotate{2}$]{\includegraphics[trim={3.5in 1in 3.5in .9in}, clip, width=0.8\textwidth]{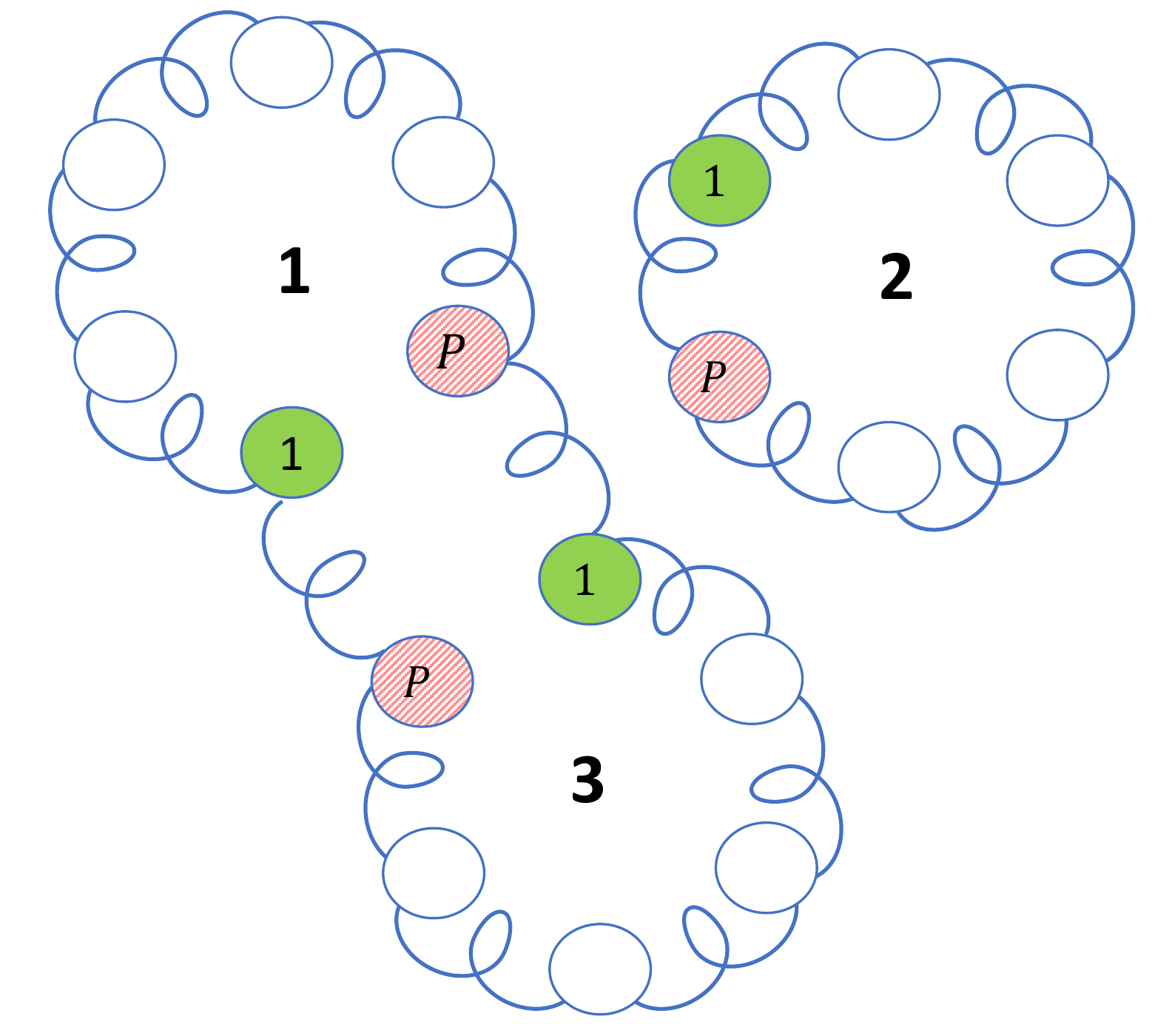}}
\end{minipage}%
\begin{minipage}{.33\textwidth}
\center
    \subfloat[$\permutationthree{1}{3}{2} = \cyclenotate{1}\cyclenotate{23}$]{\includegraphics[trim={3.5in 1in 3.5in .9in}, clip, width=0.8\textwidth, cfbox=blue, frame]{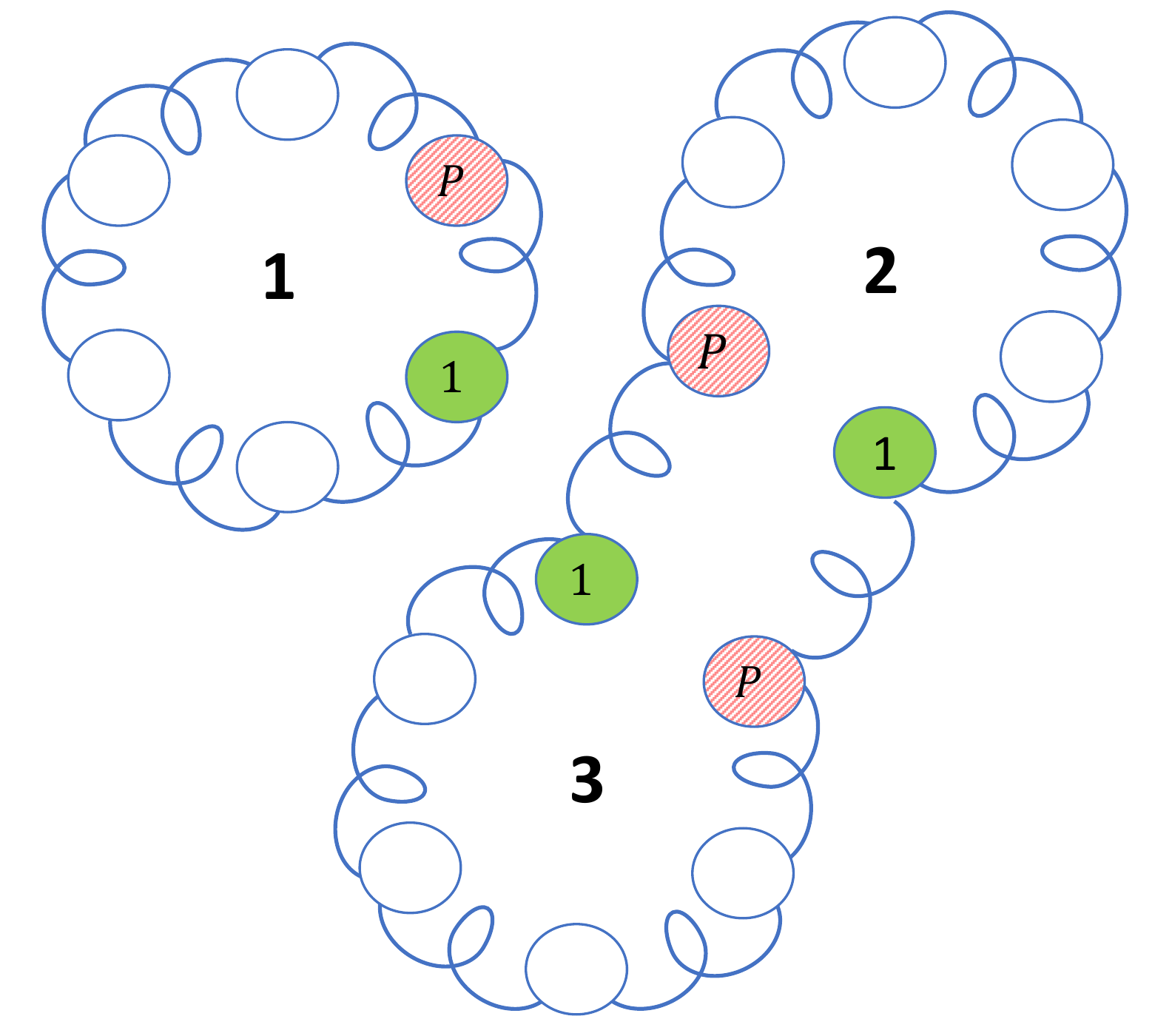}}
\end{minipage}
\end{minipage}
\caption{Ring polymer configurations for all permutations over three particles. 
The first and last beads are marked in green (solid) and red (dashed), respectively. 
Each permutation is given in two notations: the two-line notation, 
${\left(\begin{smallmatrix} 
1 & 2 & \ldots & N\\
{\sigma(1)} & {\sigma(2)} & \ldots & {\sigma(N)}
\end{smallmatrix}\right)}$,
and the cycle notation, where each cycle is a sequence $(\ell, \sigma(\ell), \sigma(\sigma(\ell)), \ldots)$, which must close due to the properties of permutations, and each element appears in exactly one cycle (see the supporting information, SI). %
Permutations that contribute to the potential $\Vall$ are enclosed in a square.}
\label{fig:all-permutations}
\end{figure*} 

\Cref{eq:int-of-sum-all-permutations} provides an expression for the bosonic partition function that can be sampled, at least in principle, through %
simulations of a classical system with ring polymer potential 
$-\frac{1}{\beta} \ln{
    \left[
        \frac{1}{\fact{N}} 
        \sum_{\sigma}{\boltzmann{\Eperm{\sigma}}} %
    \right]
    }
.$
However, the number of permutations $\fact{N}$ grows exponentially, making this naive approach intractable.

Recently, we showed~\cite{hirshberg2019path} that the exponential scaling can be avoided %
by sampling the ring polymer potential $\Vall = \Vall(\pos_1,\ldots,\pos_N)$ defined through the following recurrence relation:
\begin{equation}
\label{eq:our-forward-potential-recurrence}
\begin{alignedat}{2}
    &\boltzmann{\Vall} &&= \frac{1}{N} \sum_{k=1}^{N}{\boltzmann{\left(\Vto{N-k} + \Enk{N}{k}\right)}}.
\end{alignedat}
\end{equation}
The recursion is terminated by setting $\Vto{0} = 0$.
In \Cref{eq:our-forward-potential-recurrence},
$\Enk{N}{k} = \Enk{N}{k}(\pos_{N-k+1},...,\pos_N)$ is the spring energy of the ring polymer obtained by connecting the beads of particles $N-k+1,N-k+2,...,N$ in a cycle,
\begin{equation}
\Enk{N}{k} = \springforceprefix \sum_{\ell=N-k+1}^{N}{\sum_{j=1}^{P}{\rdiffsquared{\ell}{j+1}{\ell}{j}}},
\label{eq:Ek}
\end{equation}
where $\beadpos{\ell}{P+1} = \beadpos{\ell+1}{1}$ except for $\beadpos{N}{P+1} = \beadpos{N-k+1}{1}$. 
In the following, we refer to $\Enk{N}{k}$ as \emph{cycle energies}~\footnote{
    $\Vto{N}$ and $\Enk{N}{k}$ in our current notation correspond to $\originalpotentialnotation{N}$ and $E_N^{(k)}$ of Hirshberg~\textit{et al.}~\cite{hirshberg2019path}, respectively.
}.
The potential $\Vall$ is different from the potential obtained by the sum over all permutations, but it samples the bosonic partition function in an equivalent manner~\cite{hirshberg2019path}. 

The recurrence relation defining $\Vall$ allows to compute both the potential and the derived forces in $\bigO(PN^3)$ time.
To evaluate the \emph{potential}, %
first the $\bigO(N^2)$ cycle energies are evaluated, each of which costs $\bigO(PN)$ due to the double sum in~\Cref{eq:Ek}, resulting in $\bigO(PN^3)$ overall.
The \emph{forces} due to the potential, i.e., $-\beadderive{\ell}{j}{\Vall}$ for all $\ell$ and $j$, are also computed in $\bigO(P N^3)$ time using a recurrence relation that is obtained from~\Cref{eq:our-forward-potential-recurrence} by taking the derivative.
See the Supporting Information of ref.~\citenum{hirshberg2019path} for the full complexity analysis.

In this paper, we improve on this complexity and show that the potential and forces can be computed in $\bigO(N^2 + PN)$ time, leading to significantly faster bosonic PIMD simulations. As in the previous method, the algorithm first computes the potential and then the forces on all beads. %
We achieve a faster evaluation of the potential by a recursive formula also for the cycle energies $\Enk{N}{k}$,
and a faster evaluation of the forces by writing a probabilistic expression for them.

\subsubsection{Reducing the potential to $\bigO(N^2 + PN)$}
\label{sec:faster-potential}

\begin{figure*}[t]
    \center
    \sbox0{
    \captionsetup[subfloat]{labelformat=empty}
    \begin{minipage}{.38\textwidth}
    \center
      \begin{minipage}{\linewidth}
      \center
        \begin{minipage}{0.5\textwidth}
        \center
          \subfloat[\footnotesize $\Efromto{1}{1}$]{\includegraphics[trim={3.5in 3.4in 6.75in .9in}, clip, width=0.5\textwidth]{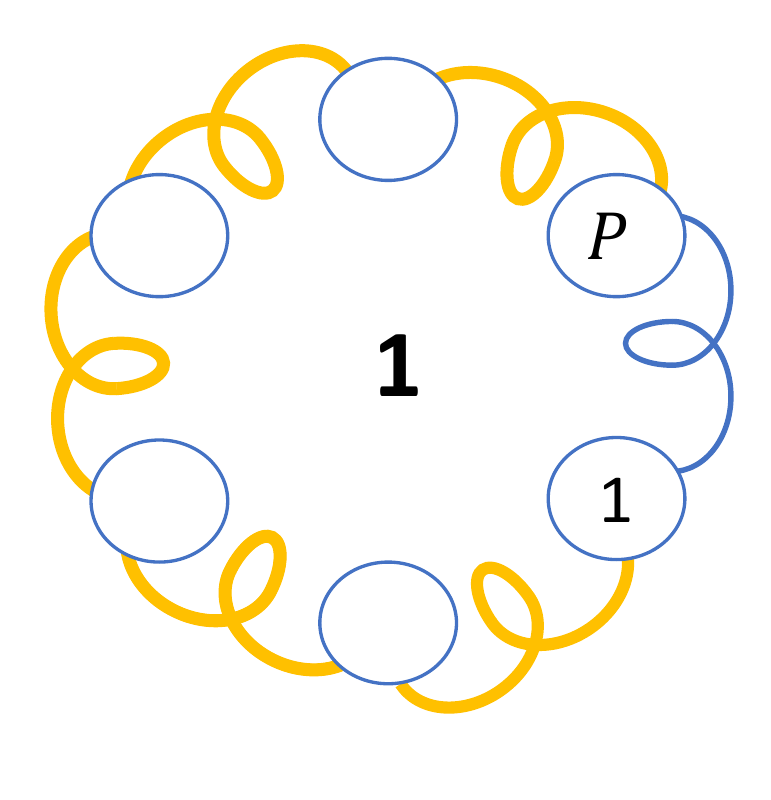}%
          }
        \end{minipage}%
        \begin{minipage}{0.5\textwidth}
        \center
          \subfloat[\footnotesize $\Efromto{1}{2}$]{\includegraphics[trim={3.5in 3.4in 3.5in .9in}, clip, width=\textwidth]{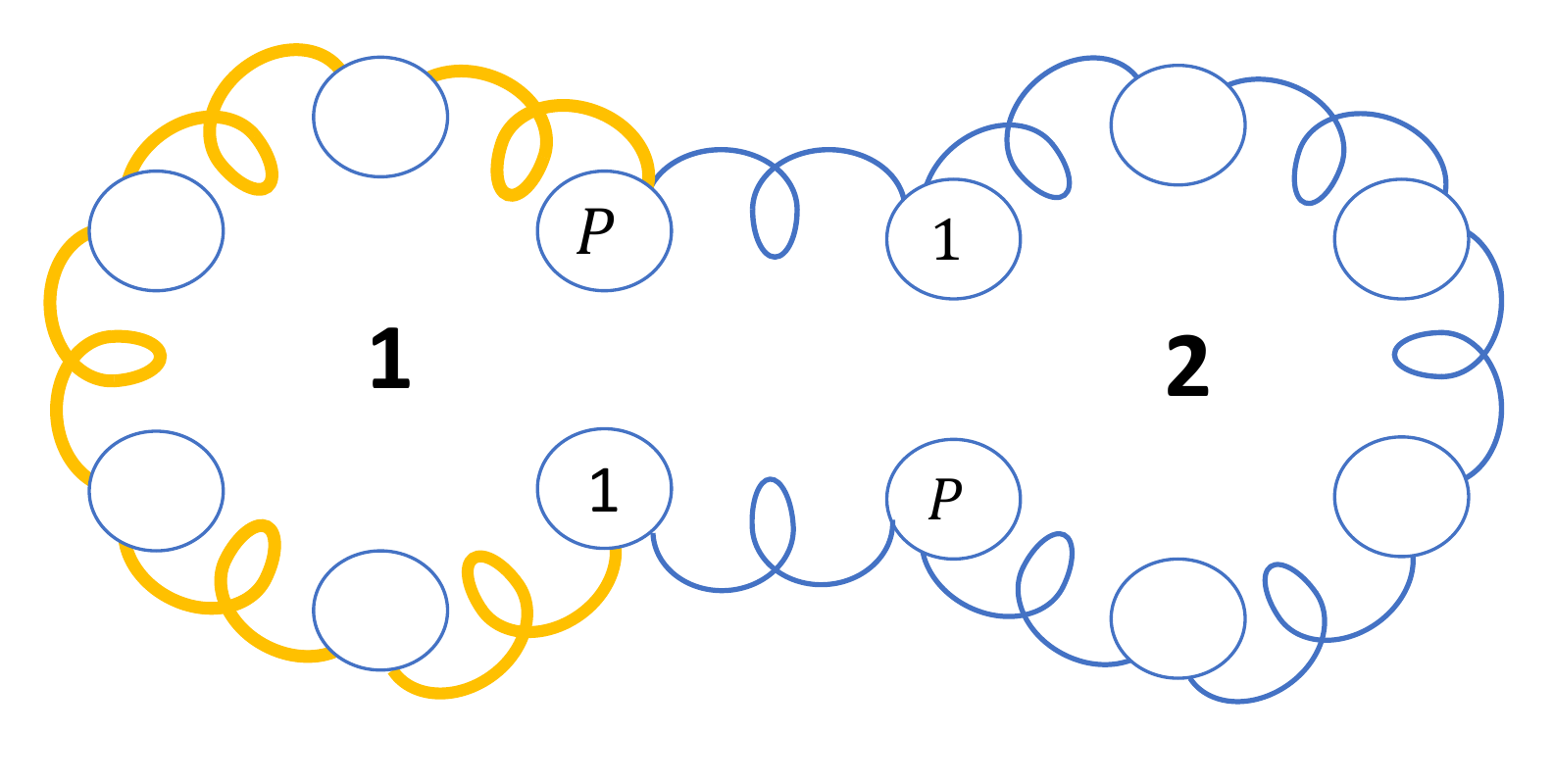}}
        \end{minipage}
      \end{minipage}
      \begin{minipage}{\linewidth}
      \center
          \subfloat[\footnotesize $\Efromto{1}{3}$]{\includegraphics[trim={3.5in 1in 3.5in .9in}, clip, width=0.5\textwidth]{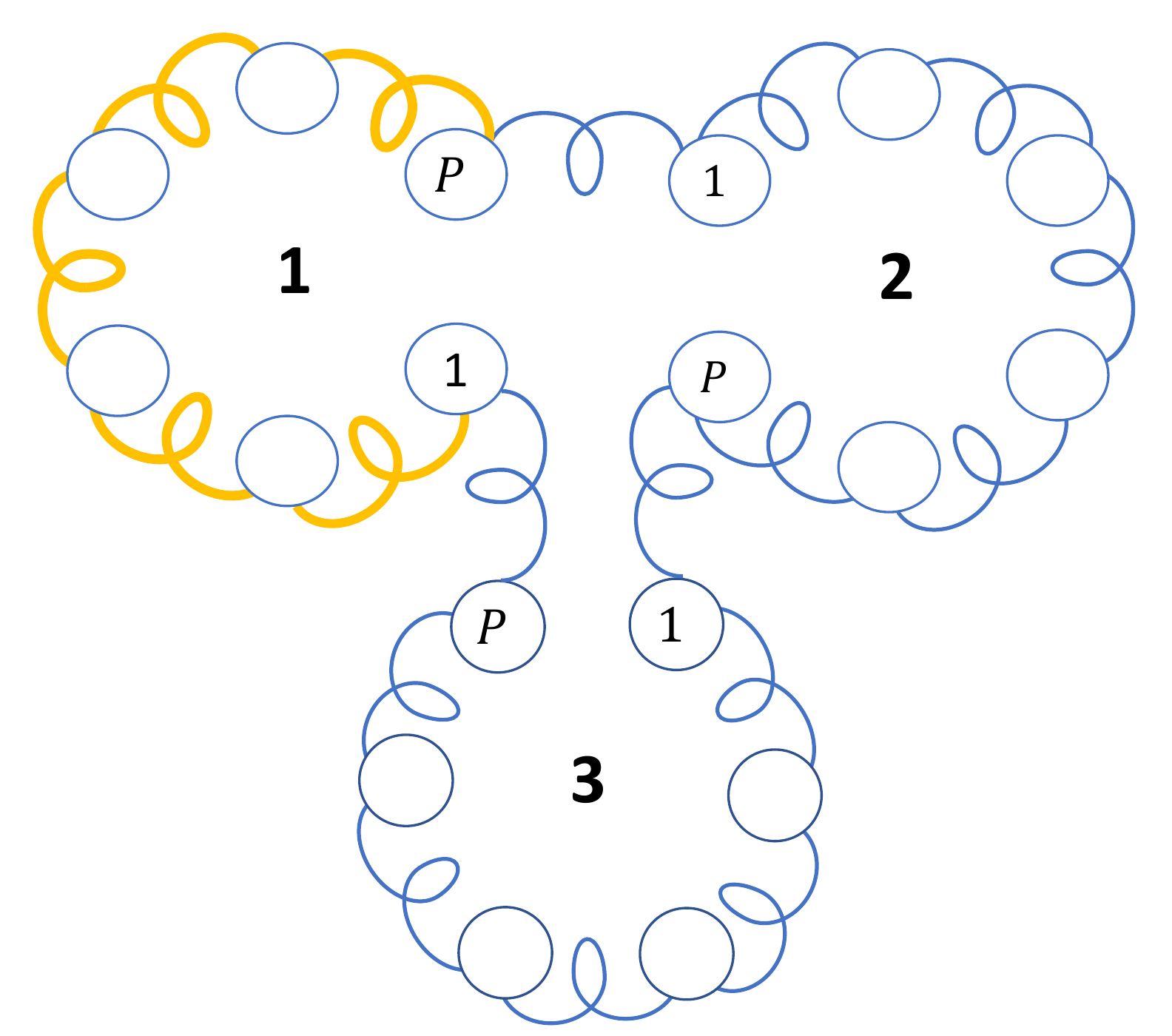}}
        \end{minipage}
    \end{minipage}%
    \setcounter{subfigure}{0}
    }
    \sbox1{
    \begin{minipage}{.38\textwidth}
    \center
      \includegraphics[trim={3.5in 1in 3.5in .9in}, clip, width=0.5\textwidth]{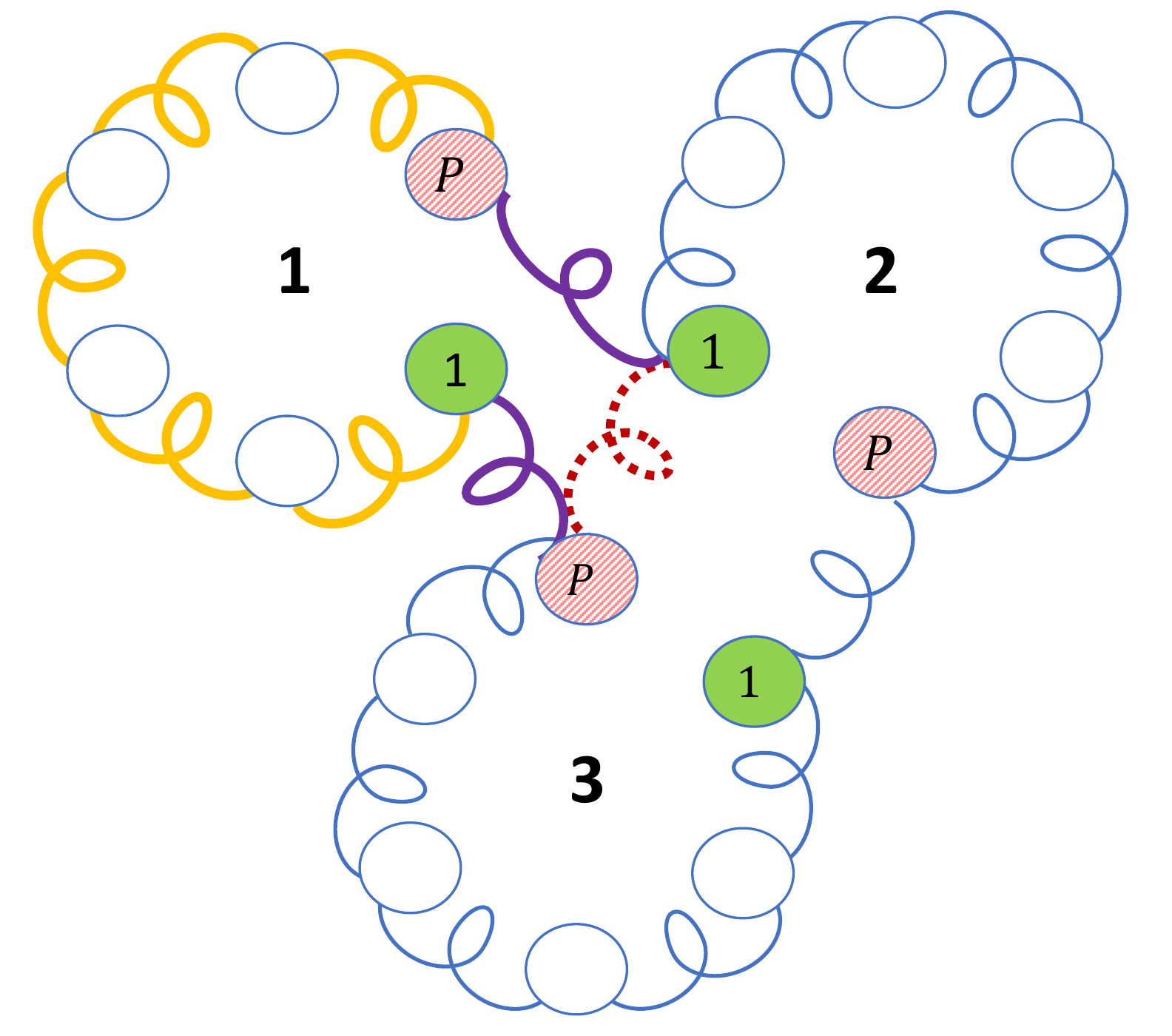}
    \end{minipage}
    }

\subfloat[\!\!\!\!\label{fig:cycle-base}]{\usebox0}
\subfloat[\label{fig:cycle-step}]{\usebox1}
\caption{
  (a) The interior springs of particle $1$ are the same in all the cycles in which the particle participates.
  (b) The cycle energy $\Efromto{1}{3}$ can be obtained from $\Efromto{2}{3}$ by removing the term corresponding to the the spring closing the cycle from $3$ to $2$ (red, dotted), connecting particle $1$ to the beginning of $2$ (purple, solid), adding the interior springs of particle $1$ (gold, solid), and closing the cycle from $3$ to $1$ (purple, solid).
}
\end{figure*}

 To reduce the computational cost of evaluating the potential, %
we evaluate the cycle energies by incrementally extending cycles one particle at a time.
We obtain each $\Efromto{\particledown}{\particleup}$ from $\Efromto{\particledown+1}{\particleup}$ by ``breaking'' and ``reforming'' the cycle:
\begin{equation}
    \label{eq:e-recurrence-step}
        \begin{aligned}
        \Efromto{\particledown}{\particleup} = \Efromto{\particledown+1}{\particleup}
                            &- \springforceprefix \rdiffsquared{\particleup}{P}{\particledown+1}{1}
                            \\
                            & + \springforceprefix \rdiffsquared{\particledown}{P}{\particledown+1}{1}
                            + \Einterior{\particledown}
                            \\
                            & + \springforceprefix \rdiffsquared{\particleup}{P}{\particledown}{1}.
        \end{aligned}
\end{equation}
In~\Cref{eq:e-recurrence-step}, we first open the ring by removing the spring between the last bead of $\particleup$ and the first bead of $\particledown+1$. We then connect $\particledown$ to $\particledown+1$, add the contribution from the interior springs of $\particledown$ ($\Einterior{\particledown}$ defined below), and then close the ring by connecting $\particleup$ to $\particledown$. 
\Cref{fig:cycle-step} illustrates this graphically.
The term $\Einterior{\particledown} = \Einterior{\particledown}(\pos_{\particledown})$ is defined by %
\begin{equation}
\label{eq:e-interior}
    \Einterior{\particledown} = \interparticleforce{\particledown},
\end{equation}
summing over the springs that are unaffected by particle exchange (see~\Cref{fig:cycle-base}).
Note that $j=P$ is not included in the summation.
Extending cycles begins with cycles that contain just one particle:
\begin{equation}
    \label{eq:e-recurrence-base}
    \Efromto{\particleup}{\particleup} = \Einterior{\particleup} + \springforceprefix \rdiffsquared{\particleup}{P}{\particleup}{1}.
\end{equation}

With this procedure, all the cycle energies are computed in $\bigO(N^2 + PN)$ time: 1) summing over the $P-1$ interior springs of $N$ particles (\Cref{eq:e-interior}) takes $\bigO(PN)$ time. 2) Forming the single-particle cycles of~\Cref{eq:e-recurrence-base} takes additional $\bigO(N)$, and 3) evaluating each cycle energy $\Efromto{\particledown}{\particleup}$ from $\Efromto{\particledown+1}{\particleup}$ (\Cref{eq:e-recurrence-step}) is $\bigO(1)$, with $\bigO(N^2)$ such cycles.

Once the cycle energies are known, evaluating $\Vall$ takes $\bigO(N^2)$ only, since it amounts to evaluating all the $N$ potentials $\Vto{\particleup}$, each of which costs $\bigO(N)$ due to the sum in~\Cref{eq:our-forward-potential-recurrence}.
\iflong
This part of the algorithm is shown in~\crefrange{ln:potential-start}{ln:potential-end} of~\Cref{alg:full-pseudocode} in~\Cref{sec:alg-pseduocode}.
\else
We provide a pseudocode of the potential evaluation algorithm in Algorithm 1 of the supporting information (SI). %
\fi

\subsubsection{Reducing the force to $\bigO(N^2 + PN)$}
Reducing the time required to compute the potential does not directly reduce the scaling of the evaluation of the forces. 
In the original algorithm, the recurrence
for the force is $\bigO(N^2)$ %
once the cycle energies are known. However, it needs to be evaluated separately per bead, which amounts to $\bigO(PN^3)$ overall.
The techniques we use above cannot improve this further.
To overcome this, we need to rewrite the forces in a new, probabilistic form.

\paragraph{Potential as a sum over representative permutations.}
To this end, we rewrite 
the potential of~\Cref{eq:our-forward-potential-recurrence}.
We show that %
it
can be equivalently defined by a sum over the permutations, provided that they are transformed by a function $\repsym$: %
\begin{equation}
\label{eq:vall-sum-representatives}
    \boltzmann{\Vall} = \frac{1}{\fact{N}} \sum_{\sigma}{\boltzmann{\Eperm{\rep{\sigma}}}}.
\end{equation}
Given a permutation $\sigma$, the function $\repsym$ returns a permutation $\rep{\sigma}$ that is called the \emph{representative permutation} of $\sigma$. %
For example, for three particles, we choose $\repsym$ such that $\rep{\cyclenotate{132}}=\cyclenotate{123}$, $\rep{\cyclenotate{13}\cyclenotate{2}}=\cyclenotate{1}\cyclenotate{23}$, and $\rep{\sigma}=\sigma$ otherwise (the permutations are identified through their cycle notation; see~\Cref{fig:all-permutations} and~\refappendix{sec:cycle-notation}). With this choice, \Cref{eq:vall-sum-representatives} for $N=3$ gives 
\begin{equation}
\label{eq:vall-3-sum-permutations}
\begin{alignedat}{2}
    \boltzmann{\Vto{3}} &= \frac{1}{6}\boltzmann{\Eperm{\cyclenotate{1}\cyclenotate{2}\cyclenotate{3}}} 
              &&+ \frac{1}{3}\boltzmann{\Eperm{\cyclenotate{123}}}
              \\
              &+ \frac{1}{6}\boltzmann{\Eperm{\cyclenotate{12}\cyclenotate{3}}}
              &&+ \frac{1}{3}\boltzmann{\Eperm{\cyclenotate{1}\cyclenotate{23}}},
\end{alignedat}
\end{equation}
which is exactly the result of expanding the recurrence of~\Cref{eq:our-forward-potential-recurrence}.
For a general $N$, $\rep{\sigma}$ is defined by first ordering the cycles of $\sigma$ is ascending order due to their largest element, then replacing all elements by $1,...,N$ consecutively.
The precise definition appears in~\hyperref[sec:methods]{Methods}.
The proof of~\Cref{eq:vall-sum-representatives} appears in
\iflong
\Cref{thm:forward-potential-recurrence}
\else
the SI.
\fi

The reason that~\Cref{eq:vall-sum-representatives}  transforms permutations using $\repsym$ is that $\Vall$ does \emph{not} include \emph{all} permutations. For example, \Cref{eq:vall-3-sum-permutations}
does not include terms for the permutations $\cyclenotate{132}$ and $\cyclenotate{13}\cyclenotate{2}$  
(see%
~\Cref{fig:all-permutations}).
However, \Cref{eq:vall-sum-representatives} proves that $\Vall$ indeed samples the correct quantum partition function, since $\rep{\sigma}$ preserves the topology of $\sigma$ by construction. 

It might seem counter-intuitive to rewrite the recursive potential as a sum over representative permutations, because their number also grows exponentially (as shown in~\refappendix{sec:combinatorics}).
However, we never use this expression to evaluate the potential, but rather %
to derive an efficient expression for the force, as we now explain.

\paragraph{Force as an expectation value.}
Since the potential is an average over representative permutations, per~\Cref{eq:vall-sum-representatives}, taking the derivative yields that
the force is a \emph{weighted average} over the representative permutations:
\begin{equation}
\label{eq:force-expected-force}
    \beadforce{\ell}{j}{\Vall} = \sum_{\sigma}{\Prerepperm{\sigma} \cdot \beadforce{\ell}{j}{\Eperm{\rep{\sigma}}}},
\end{equation}
where
\begin{equation}
\label{eq:representative-distribution}
    \Prerepperm{\sigma} = \frac{\boltzmann{\Eperm{\rep{\sigma}}}}{\sum_{\sigma}{\boltzmann{\Eperm{\rep{\sigma}}}}} = 
    \frac{\boltzmann{\Eperm{\rep{\sigma}}}}{\fact{N} \cdot \boltzmann{\Vall}}.
\end{equation}
For given positions %
of the beads, \Cref{eq:representative-distribution} defines a Boltzmann distribution of permutations. %
\Cref{eq:force-expected-force} states that the force on each bead $\beadpos{\ell}{j}$ is the \emph{expectation value} of the force over this distribution.

In itself, this expression is still not amenable to efficient computation due to the sum over the permutations. 
However, it can be greatly simplified by observing that, in any permutation,
the force on a bead depends only on its immediate neighbors.
When taking the expectation value, %
we can group together all permutations that have the same force exerted on the bead. %
The calculation then splits to two cases: the case of \emph{interior beads} ($j=2,\ldots,P-1$), and the case of \emph{exterior beads} ($j=1,P$).

\paragraph{Force on interior beads.}
The force on each interior bead is the same in all %
permutations: $\beadpos{\ell}{j}$ is always connected to $\beadpos{\ell}{j+1},\beadpos{\ell}{j-1}$, for $j = 2, \ldots, P-1$ and any $\ell$ (see~\Cref{fig:cycle-base}).
Therefore, its expectation value is trivial, given by the standard PIMD expression,
\begin{equation}
\label{eq:intermediate-bead-force-main-text}
    \beadforce{\ell}{j}{\Vall} = -\mass \springconstant^2 (2\posbead{\ell}{j} - \posbead{\ell}{j+1} - \posbead{\ell}{j-1}).
\end{equation}
This expression requires $\bigO(1)$ time to evaluate, resulting in $\bigO(PN)$ time %
for all the interior beads. 
Thus, our algorithm computes the force on most beads the same way as in ordinary PIMD, which already reduces much of the added cost of exchange effects. But to achieve quadratic scaling of the entire algorithm, we must compute the force on exterior beads efficiently as well.

\paragraph{Force on exterior beads.}
For beads $j=1,P$ of each particle $\ell$, different permutations \emph{do} exert different forces, depending on its neighbors. We %
group together all permutations in which the neighbor is the same:
\begin{align}
\label{eq:force-on-1-sum-neighbors}
    &\beadforce{\ell}{1}{\Vall} =
    \\
    \nonumber &\qquad -\sum_{\ell'=1}^{N}\Prrepnext{\ell'}{\ell}\cdot \mass \springconstant^2 (2\posbead{\ell}{1} - \posbead{\ell}{2} - \posbead{\ell'}{P}),
    \\
    \intertext{and}
\label{eq:force-on-P-sum-neighbors}
    &\beadforce{\ell}{P}{\Vall} = 
    \\
    \nonumber &\qquad -\sum_{\ell'=1}^{N}{\Prrepnext{\ell}{\ell'} \cdot \mass \springconstant^2 (2\posbead{\ell}{P} - \posbead{\ell'}{1} - \posbead{\ell}{P-1})}.
\end{align}
In these expressions, $\Prrepnext{\ell}{\ell'}$ represents the probability that bead $P$ of particle $\ell$ is connected to bead $1$ of particle $\ell'$.
It is defined by summing the probabilities of~\Cref{eq:representative-distribution} over all permutations in which this condition holds in $\rep{\sigma}$.

To demonstrate~\Cref{eq:force-on-P-sum-neighbors} for $N=3$, the force on bead $P$ of particle $1$ is
\begin{align*}
    \beadforce{1}{P}{\Vto{3}} = 
        &-\Prrepnext{1}{1} \cdot \mass \springconstant^2 (2\posbead{1}{P} - \posbead{1}{1} - \posbead{1}{P-1})
        \\
        &-\Prrepnext{1}{2} \cdot \mass \springconstant^2 (2\posbead{1}{P} - \posbead{2}{1} - \posbead{1}{P-1})
        \\
        &-\Prrepnext{1}{3} \cdot \mass \springconstant^2 (2\posbead{1}{P} - \posbead{3}{1} - \posbead{1}{P-1})
\end{align*}
where
\begin{equation*}
\begin{alignedat}{2}
\Prrepnext{1}{1} \, &\propto \,
                          \frac{1}{6} \cdot \boltzmann{\Eperm{\cyclenotate{1}\cyclenotate{2}\cyclenotate{3}}}
                          &&+
                          \frac{1}{3} \cdot \boltzmann{\Eperm{\cyclenotate{1}\cyclenotate{23}}}
                          ,
\\
\Prrepnext{1}{2} \, &\propto \,  
                          \frac{1}{6} \cdot \boltzmann{\Eperm{\cyclenotate{12}\cyclenotate{3}}} 
                          &&+ 
                          \frac{1}{3} \cdot \boltzmann{\Eperm{\cyclenotate{123}}},
\\
\Prrepnext{1}{3} \, &= \, 0 &&
\end{alignedat}
\end{equation*}
(see~\Cref{eq:vall-3-sum-permutations} and~\Cref{fig:all-permutations}).

\Cref{eq:force-on-1-sum-neighbors,eq:force-on-P-sum-neighbors} imply that once the connection probabilities $\Prrepnext{\ell}{\ell'}$ are known for all $\ell,\ell'$, %
the force on all exterior beads can be computed in $\bigO(N^2)$ time: there are $2N$ of them, and each is computed in~$\bigO(N)$ time due to the sum in~\Cref{eq:force-on-1-sum-neighbors,eq:force-on-P-sum-neighbors}.
Our final goal then is to precompute the connection probabilities efficiently. %
This is a mighty challenge, because the connection probability sums over %
an exponential number of permutations
(as shown in~\refappendix{sec:combinatorics}).

\paragraph{Connection probabilities.}
To derive efficient expressions for the connection probabilities, 
we define another set of potentials, $\Vfrom{N},\Vfrom{N-1},\ldots,\Vfrom{1}$ through a similar recursion relation,
\begin{equation}
\label{eq:backward-potential-recursion-main-text}
    \boltzmann{\Vfrom{u}} = \sum_{\ell=u}^{N}{\frac{1}{\ell} \boltzmann{\left(\Efromto{u}{\ell} + \Vfrom{\ell+1}\right)}}.
\end{equation}
Note that the first index changes, as opposed to~\Cref{eq:our-forward-potential-recurrence}, and that the recursion is terminated by setting $\Vfrom{N+1} = 0$.
As before, these potentials can be computed in $\bigO(N^2)$, since we already have the cycle energies.

Once the potentials $\Vfrom{\particledown}$ are known, 
we show that all the connection probabilities can be computed in $\bigO(N^2)$ time: there are $N^2$ such probabilities,
and we provide expressions that compute each one in $\bigO(1)$ time. %
The intuition is that $\Vfrom{\ell+1}$ accounts for all the permutations over the particles $\ell+1,\ldots,N$, similarly to how $\Vto{\ell}$ accounts for permutations over $1,\ldots,\ell$. 
In this way, for example, we can express the connection probability for $\ell'=\ell+1$ as
\begin{equation}
\label{eq:direct-link-probability-main-text}
\begin{alignedat}{1}
    &\Prrepnext{\ell}{\ell+1} = 
    \\
    &\qquad \qquad 1 - \frac{1}{\boltzmann{\Vall}} \boltzmann{\left(\Vto{\ell} + \Vfrom{\ell + 1}\right)},
\end{alignedat}
\end{equation}
which is the complement of the probability that $\ell,\ell+1$ \emph{do not} belong to the same cycle in representative permutations.
Expressions for the rest of the connection probabilities appear in~\hyperref[sec:methods]{Methods}.
We provide a pseudocode of the force evaluation algorithm in Algorithm 1 of the SI.
\subsection{Benchmarks \& Applications}
\label{sec:empirical}

This section presents our numerical results for the improved performance of the algorithm.

\begin{figure*}
\center
\subfloat[\!\!\!\!\!\!\!\!\!\!\!\!\!\!\!\!\!\!\!\!\!\!\!\!\!\!\label{fig:speedup-scaling}]{\includegraphics[width=0.42\textwidth]{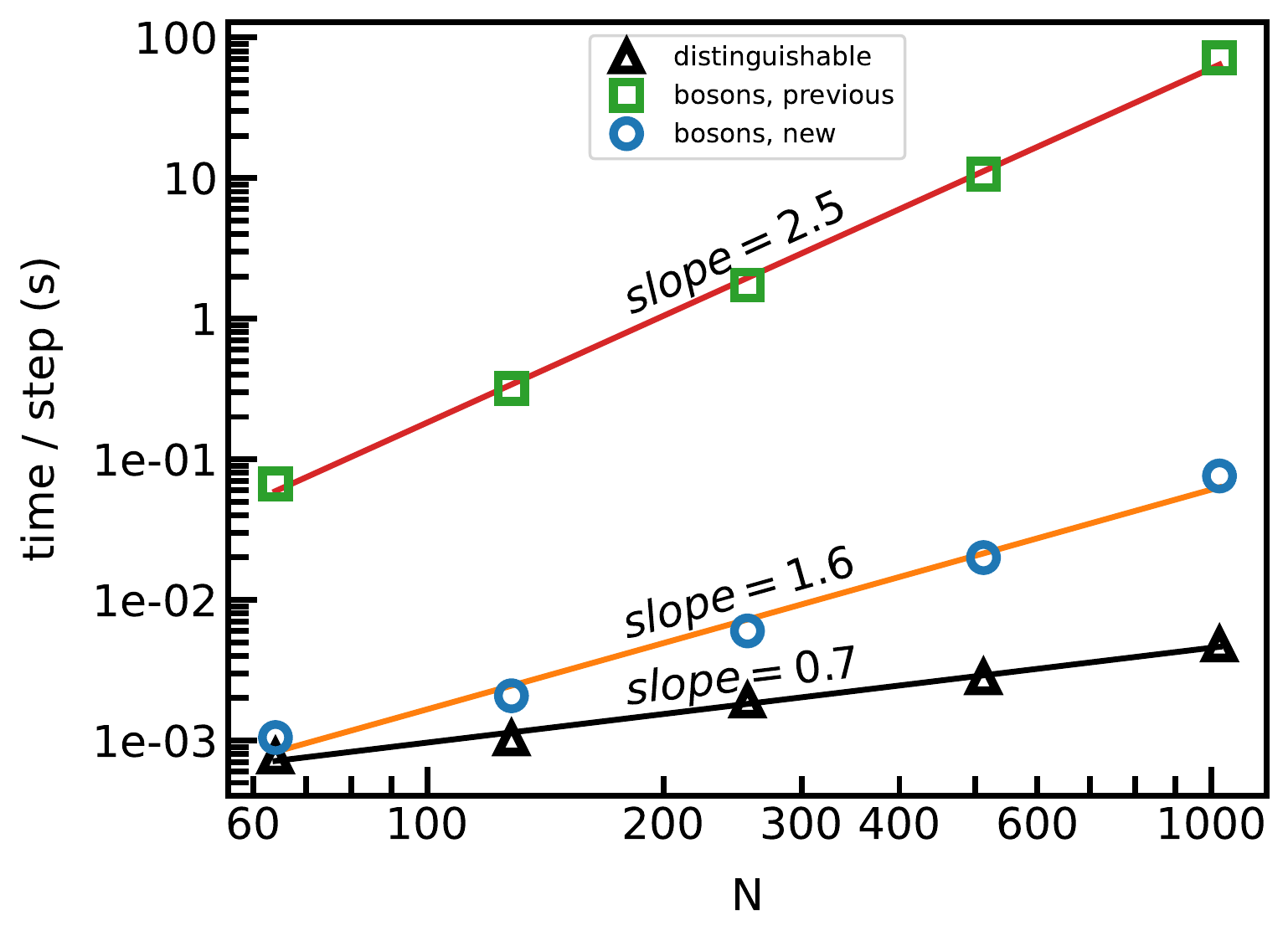}}%
\hspace{0.1in}
\subfloat[\!\!\!\!\!\!\!\!\!\!\!\!\!\!\!\!\!\!\!\!\!\!\!\!\!\!\label{fig:speedup-speedup}]{\includegraphics[width=0.42\textwidth]{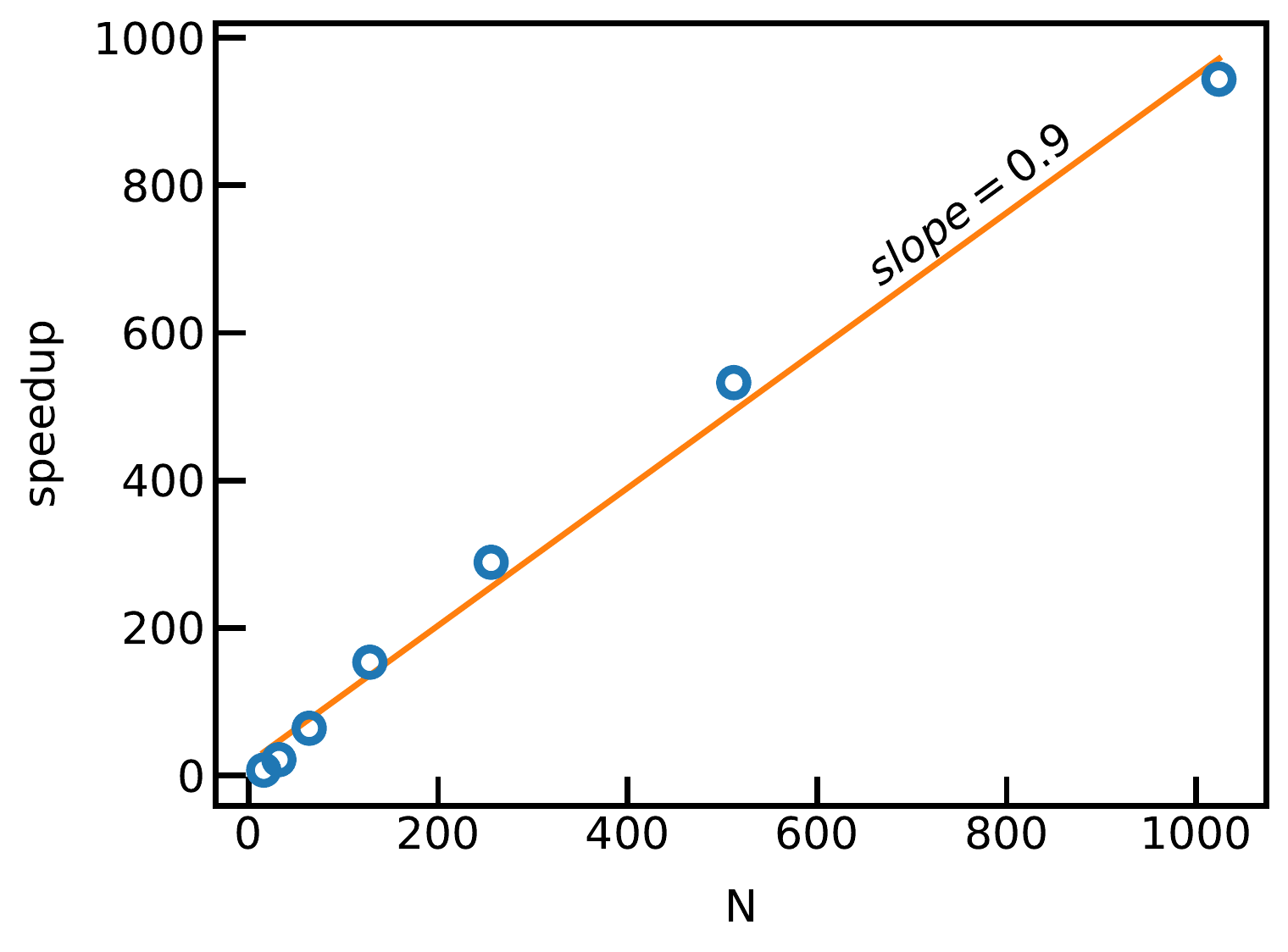}}%
\caption{
\label{fig:speedup}
  (a) The scaling of the original and current algorithms with the number of particles for a system of $N$ non-interacting cold trapped bosons on a log-log scale. Blue circles and green squares represent the current and original algorithms, respectively. Black triangles represent simulations of distinguishable particles. Lines represent a linear fit to the simulation data. (b) The speedup gained by using the proposed algorithm, defined as the ratio between time per step in the original and current implementations.  %
}
\end{figure*}

\begin{figure}
\center
    \includegraphics[width=0.42\textwidth]{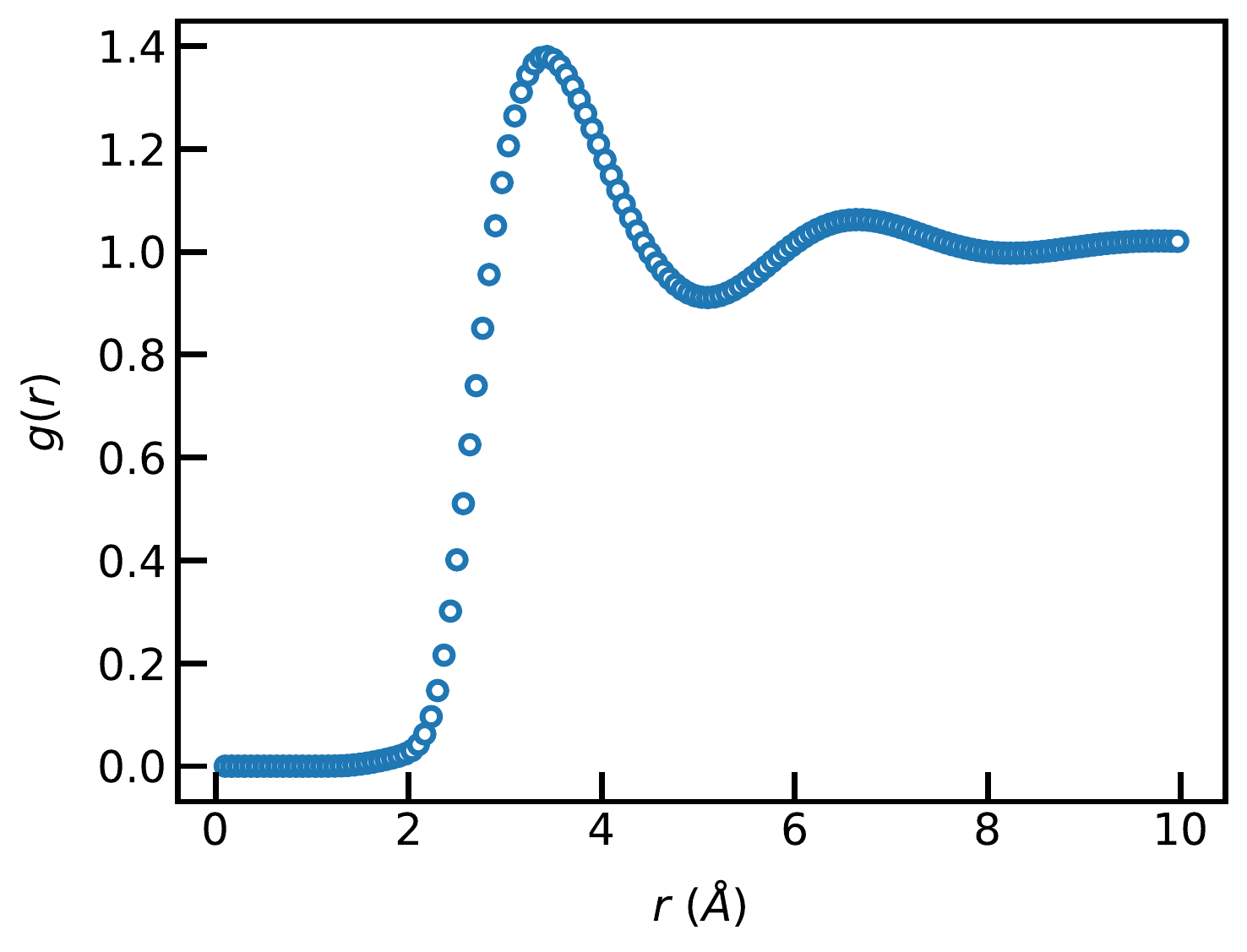}
\caption{
\label{fig:liquid-he}
  The pair correlation of $N=1372$ \ce{^{4}He} atoms in the superfluid liquid phase at temperature $1.379$K with $P=96$ beads. %
  The simulation lasts $\leq 7$ days, instead of $> 14$ years with the previous algorithm.
}
\end{figure} 
\paragraph{Scaling.}
The improved scaling of our algorithm, quadratic rather than cubic, is shown numerically in~\Cref{fig:speedup-scaling} for our LAMMPS~\cite{LAMMPS} implementation. Here, we measure the time per simulation step as a function of the number of particles $N$ on a log-log scale, for a system of trapped cold non-interacting bosons with $P=32$ beads.
The slope of the linear fit
is $1.6$ for our new algorithm (blue), compared to $2.5$ for the original algorithm (green). %
In our i-PI~\cite{KAPIL2019214} implementation, the slopes are $1.9$ and $3.2$ respectively 
\iflong
(see~\refappendix{sec:i-pi-scalability}). 
\else
(see the SI).
\fi

\paragraph{Speedup.}
The reduced scaling of the current algorithm implies a speedup of $\bigO(PN)$ over the original algorithm.
\Cref{fig:speedup-speedup} shows the speedup (the ratio of the time per step between the original and current algorithms) in the same system as in panel (a).
The new algorithm achieves acceleration by orders of magnitude, ranging from $\times 21$ faster simulations of $N=32$ bosons to $\times 944$ speedup for $N = 1024$.
The speedup is close to linear with the number of particles ($R^2=0.993$), with a slope of $\mathord{\sim} 0.9$.
We note that the cost of evaluating pairwise interactions, if present, is the same in the current and original algorithm (and in ordinary PIMD). This reduces the speedup in such systems, but it remains significant; we find a linear speedup with slope $0.8$ in liquid \ce{^{4}He}, $\times 6$ for $N=32$, and $\times 1069$ for $N=1372$ (see~\refappendix{sec:benchmarks-supplementary}).

\paragraph{Simulating $> 1000$ particles.}
Such speedups allow us to simulate systems far beyond what was previously possible in bosonic PIMD.
\Cref{fig:liquid-he} shows a simulation of superfluid liquid \ce{^{4}He}, where we compute the pair correlation of $N=1372$ atoms, with $P=96$ beads. %
The pair correlation is in very good agreement with the PIMC results of Ceperley~\cite{ceperley1995path} (see~\refappendix{sec:benchmarks-supplementary}).
This simulation takes $\mathord{\sim} 7$ days in our algorithm, but would take \emph{longer than 14 years} with the original algorithm (see~\hyperref[sec:methods]{Methods}).
Similarly, simulations of $N=1600$ trapped cold bosons, with and without Gaussian repulsive interaction (see~\refappendix{sec:benchmarks-supplementary}), take less than $\mathord{\sim} 9$ days, whereas the previous algorithm is estimated to require \emph{over 20 years} to complete on the same hardware (see~\hyperref[sec:methods]{Methods}).

\paragraph{Bosonic vs.\ ordinary PIMD.}
The new algorithm eliminates most of the added computational cost of bosonic exchange, and brings the scale of bosonic PIMD to be largely on par with ordinary PIMD.
\Cref{fig:speedup-scaling} compares the new algorithm to the PIMD algorithm in LAMMPS for distinguishable particles (black). 
For $64$ particles, there is only very little difference ($37\%$) in the time per simulation step. For $1024$ particle, bosonic simulations are merely $\mathord{\sim} 16$ times slower than for distinguishable particles. In the original algorithm, bosonic simulations of $1024$ particles would have been $\mathord{\sim} 15000$ times more expensive than simulations of distinguishable particles.
\section{Conclusions}
\label{sec:conclusions}
This paper presents an algorithm for bosonic PIMD with improved computational complexity, reducing the cost from cubic to quadratic with the number of particles. %
Our improved algorithm relies on two observations: 1) that the interior beads of each ring polymer are not affected by exchange interaction, and 2) that the bosonic ring polymer force can be evaluated efficiently as the expectation value of the force over a Boltzmann distribution of particle permutations. %

Using our method, simulations of condensed phase bosonic systems with thousands of particles can be done in days instead of years, allowing the first PIMD simulations of thousands of interacting particles including bosonic exchange symmetry.

We believe that our algorithm is a significant step towards including bosonic exchange effects in PIMD simulations as routinely as performing standard simulations for distinguishable particles.
The new algorithm will enable studies of systems such as exotic phases of superfluids under confinement using PIMD. 
Our insights might also be used to accelerate other algorithms where a similar recursion has been identified~\cite{PhysRevResearch.2.043206,10.1063/5.0026606,PhysRevE.107.055302}.
Finally, it would be exciting to see whether approximate methods for obtaining time correlation functions based on PIMD, such as Matsubara dynamics, Ring Polymer Molecular Dynamics or Centroid Molecular Dynamics~\cite{Althorpe2021}, could be devised also for bosonic systems. 
\section{Methods}
\label{sec:methods}
\subsection{Simulation details}
We implemented our algorithm in development branches of LAMMPS~\cite{LAMMPS} and i-PI~\cite{KAPIL2019214}. 
Unless otherwise stated, the results are obtained using the implementation in LAMMPS.
We validated the proposed algorithm by reproducing results obtained before using the original method~\cite{hirshberg2019path} on several bosonic systems with and without interaction, and extending some of them to larger system sizes, as detailed in~\refappendix{sec:benchmarks-supplementary}.

For the trapped bosons presented in~\Cref{fig:speedup}, we use $P=36$, a time step of $1$ fs, and average the time over $1000$ simulation steps. %
For the superfluid liquid \ce{^{4}He} presented in~\Cref{fig:liquid-he}, the simulation was performed with the interaction potential of Aziz~\textit{et al.}~\cite{Aziz79}, at a temperature of $1.379$K %
and a density of $0.02182 \AA^{-3}$, in periodic boundary conditions in all spatial dimensions, and for $2.5 \cdot 10^6$ steps of $0.5$ fs each, where the first 20\% were used as equilibration. 
For the two-dimensional density of trapped bosons with and without Gaussian repulsive interaction (whose results are displayed in~\refappendix{sec:benchmarks-supplementary}) we use $P=36$, appropriate for the temperature of $11.6$K ($\beta \hbar \omega=3$), and a time step of $1$ fs. The simulations were performed for $3 \cdot 10^6$ steps with the first 20\% discarded as equilibration. The repulsive potential is the same as described in ref.~\citenum{hirshberg2019path}, with $g=3$.

\subsection{Estimating the run-time of the original algorithm}
To estimate the time required for the implementation of the original algorithm to complete the simulation of $N=1600$ trapped cold bosons with a Gaussian repulsive interaction, we simulated $1000$ steps of the implementations of both the current and original algorithm, and extrapolated to $3 \cdot 10^6$ steps. The result was in agreement with the actual time required to perform the simulation in the new method, and actually slightly shorter, rendering our prediction conservative. Specifically, with the new method, $1000$ steps lasted $\mathord{\sim} 200$ seconds for $g=0$ and $\mathord{\sim} 150$ seconds for $g=3$, which is extrapolated to $\mathord{\sim} 7$ days for $g=0$ and $\mathord 5$ days for $g=3$; our actual simulation was slightly slower, $\mathord{\sim} 9$ days for $g=0$ and $\mathord 7$ days for $g=3$, which is likely due to the presence of other CPU-heavy processes on the same machine. In contrast, with the original method, $1000$ steps took $\mathord{\sim} 80$ hours for $g=0$ and $\mathord{\sim} 64$ hours for $g=3$, which is extrapolated to $\mathord{\sim} 27$ years and $\mathord{\sim} 22$ years, respectively, for $3 \cdot 10^6$ steps.

We estimated the time to simulate the system of superfluid \ce{^{4}He} in a similar manner.
With the current algorithm, simulating $1000$ steps on $N=1372$ and $P=64$ took $\mathord{\sim} 171$ seconds, which is extrapolated to $\mathord{\sim} 5$ days. %
In contrast, with the original algorithm, $1000$ steps required $\mathord{\sim} 51$ hours with $P=64$, which is extrapolated to $\mathord{\sim} 14.5$ years.
This is a conservative estimate for the simulation with the original algorithm and $P=96$, because increasing the number of beads increases the required time.
We did not perform this estimation based on the time to perform $1000$ steps with $P=96$ because the original implementation uses a simple numerical stability procedure when evaluating sums of exponentials (see~\refappendix{sec:numerical-stability}), %
which is not sufficiently stable for large $P$ and $N$.

\subsection{Definition of $\rep{\sigma}$}
\label{sec:def-repsym}
In the context of rewriting the potential as a sum over representative permutations (\Cref{eq:vall-sum-representatives}),
for a general $N$, the representative permutation $\rep{\sigma}$ is defined by transforming the cycle notation of $\sigma$, in two steps. First, the cycles are sorted in ascending order according to the largest element in each cycle.
For example, $\sigma = \cyclenotate{13}\cyclenotate{2}$ is rewritten as $\cyclenotate{2}\cyclenotate{13}$, because the highest element in the cycle $\cyclenotate{13}$ is larger than the highest element in $\cyclenotate{2}$. Second, the elements are replaced by the numbers $1,\ldots,N$ consecutively while maintaining the same length and order of cycles. In the same example, $\cyclenotate{2}\cyclenotate{13}$ turns into $\cyclenotate{1}\cyclenotate{23}$, and overall $\rep{\cyclenotate{13}\cyclenotate{2}} = \cyclenotate{1}\cyclenotate{23}$.

\subsection{Expressions for connection probabilities}
\label{sec:connection-probabilities-all-exprs}
In the context of evaluating the forces (\Cref{eq:force-on-1-sum-neighbors,eq:force-on-P-sum-neighbors}), we describe how to compute the connection probabilities based on the potentials defined in~\Cref{eq:backward-potential-recursion-main-text}.

Consider the connection probability $\Prrepnext{\ell}{\ell'}$.
For $\ell' > \ell$, the probability is nonzero only when particle $\ell$ is not the particle with the highest index in the ring, due to the properties of $\rep{\sigma}$ (see~\Cref{fig:all-permutations}). In that case, $\ell$ is connected to the particle $\ell+1$.
We prove 
\iflong
in~\Cref{lem:direct-link-probability} (\Cref{sec:proofs})
\else
in the SI 
\fi
that the expression for this probability is
the expression in~\Cref{eq:direct-link-probability-main-text} above.

On the other hand, a connection $\ell' \leq \ell$ happens only when particle $\ell$ is the particle with highest index in the ring.
We prove in~\refappendix{lem:close-cycle-probability}
\iflong
(\Cref{sec:proofs})
\else
\fi
that for $\ell' \leq \ell$,
\begin{equation}
\label{eq:close-cycle-probability-main-text}
\begin{alignedat}{1}
    &\Prrepnext{\ell}{\ell'} = 
    \\
    &\qquad \qquad \frac{1}{\ell} \frac{1}{\boltzmann{\Vall}} {\boltzmann{\left(\Vto{\ell'-1} + \Efromto{\ell'}{\ell} + \Vfrom{\ell+1}\right)}}. 
\end{alignedat}
\end{equation}

The intuition underlying the proof is that $\Vfrom{\ell+1}$ accounts for all the permutations over the particles $\ell+1,\ldots,N$, similarly to how $\Vto{\ell}$ accounts for the permutations over $1,\ldots,\ell$.
Therefore, in~\Cref{eq:direct-link-probability-main-text}, we compute the complement of the probability that $\ell,\ell+1$ \emph{do not} belong to the same cycle.
Similarly, in~\Cref{eq:close-cycle-probability-main-text}, we account for all configurations that include a ring polymer that starts with particle $\ell'$ and ends with particle $\ell$, and all possible permutations over the other particles. 

\begin{acknowledgments}
Barak Hirshberg acknowledges support by the USA-Israel Binational Science Foundation (grant No. 2020083) and the Israel Science Foundation (grants No.\ 1037/22 and 1312/22). Yotam Feldman was supported by the Ratner Center Fellowship, as well as Schmidt Science Fellows, in partnership with the Rhodes Trust.
\end{acknowledgments}

\bibliography{refs}%

\iflong
\onecolumngrid
\clearpage
\appendix

\newtheorem{theorem}{Theorem}
\newtheorem{lemma}{Lemma}
\newtheorem{definition}{Definition}

\section{Algorithm Pseudocode}
\label{sec:alg-pseduocode}
\begin{algorithm}
\DontPrintSemicolon
$\label{ln:potential-start} \label{ln:interior-energy-start}$\tcp{compute cycle energies $\Enk{\particledown}{\particleup}$}
\For{$\particleup = 1 \ldots N$} {
	$\begin{aligned}
	\label{ln:enk-internal-energy}
	\Einterior{\particleup} = \interparticleforce{\particleup}.
	\label{ln:interior-energy-end}
	\end{aligned}$ \;
}
$\label{ln:cycle-energies-start}$\For{$\particleup = 1 \ldots N$} {
$\Efromto{\particleup}{\particleup} = \Einterior{\particleup} + \springforceprefix \rdiffsquared{\particleup}{P}{\particleup}{1}$ \;
	\For{$\particledown = \particleup-1 \ldots 1$} {
		$\begin{aligned}
		\label{ln:enk-diff}
		\Efromto{\particledown}{\particleup} = \Efromto{\particledown+1}{\particleup}
                            &- \springforceprefix \rdiffsquared{\particleup}{P}{\particledown+1}{1}
                            \\
                            & + \springforceprefix \rdiffsquared{\particledown}{P}{\particledown+1}{1}
                            + \Einterior{\particledown}
                            \\
                            &+ \springforceprefix \rdiffsquared{\particleup}{P}{\particledown}{1}
		\label{ln:cycle-energies-end}
		\end{aligned}$
		\;
	}
}
$\label{ln:forward-potentials-start}$\tcp{compute potentials $\Vto{\particleup}$}
$\Vto{0} = 0$ \;
\For{$\particleup = 1 \ldots N$} {
	$\begin{aligned}
	\boltzmann{\Vto{\particleup}} = \frac{1}{\particleup} \sum_{k=1}^{\particleup}{\boltzmann{\left(\Vto{\particleup-k} + \Enk{\particleup}{k}\right)}} \label{ln:forward-potential-recursion}
	\label{ln:forward-potentials-end} \label{ln:potential-end}
	\end{aligned}$ \; 
}
$\label{ln:backward-potentials-start} \label{ln:forces-start}$
\tcp{compute potentials $\Vfrom{\particledown}$}
$\Vfrom{N+1} = 0$ \;
\For{$\particledown = N \ldots 1$} {
	$\begin{aligned}
	\boltzmann{\Vfrom{\particledown}} = \sum_{\ell=\particledown}^{N}{\frac{1}{\ell} \boltzmann{\left(\Efromto{\particledown}{\ell} + \Vfrom{\ell+1}\right)}} \label{ln:backward-potentials-recursion}
	\end{aligned}
	\label{ln:backward-potentials-end}$ \;
}
\tcp{compute connection probabilities}
$\label{ln:connection-probabilities-start}$ $\label{ln:direct-link-probability}$
\For{$\ell = 1 \ldots N-1$}{
	$\begin{aligned}
	\Prrepnext{\ell}{\ell+1} = 1 - \frac{1}{\boltzmann{\Vall}} \boltzmann{\left(\Vto{\ell} + \Vfrom{\ell + 1}\right)}
	\end{aligned}$ \;
}
$\label{ln:close-cycle-probability}$
\For{$\ell' = 1 \ldots N$} {
	\For{$\ell = \ell' \ldots N$} {
	    $\begin{aligned}
	    \Prrepnext{\ell}{\ell'} = \frac{1}{\ell} \frac{1}{\boltzmann{\Vall}} {\boltzmann{\left(\Vto{\ell'-1} + \Efromto{\ell'}{\ell} + \Vfrom{\ell+1}\right)}}
	    \label{ln:connection-probabilities-end}
	    \end{aligned}$ \;
	}
}
\tcp{compute force on interior beads}
$\label{ln:interior-force-start}$
\For{$\ell = 1 \ldots N$} {
	\For{$j = 2 \ldots P-1$} {
		$\begin{aligned}
		\beadforce{\ell}{j}{\Vall} = -\mass \springconstant^2 (2\posbead{\ell}{j} - \posbead{\ell}{j+1} - \posbead{\ell}{j-1})
		\end{aligned}$ %
		$\label{ln:forces-computation-end}$ $\label{ln:interior-force-end} \label{ln:forces-end}$
	}
} 
\tcp{compute force on exterior beads}
$\label{ln:forces-computation-start}$ $\label{ln:exterior-force-start}$
\For{$\ell = 1 \ldots N$} {
	$\begin{aligned}
	\beadforce{\ell}{1}{\Vall} &= -\sum_{\ell'=\ell-1}^{N}{\Prrepnext{\ell'}{\ell} \cdot \mass \springconstant^2 (2\posbead{\ell}{1} - \posbead{\ell}{2} - \posbead{\ell'}{P})}
	\\
	\beadforce{\ell}{P}{\Vall} &= -\sum_{\ell'=1}^{\ell+1}{\Prrepnext{\ell}{\ell'} \cdot \mass \springconstant^2 (2\posbead{\ell}{P} - \posbead{\ell'}{1} - \posbead{\ell}{P-1})}
	\end{aligned}
	\label{ln:exterior-force-end}$ \;
}
\caption{Fast PIMD-B\label{alg:full-pseudocode}}
\end{algorithm} In this section, we provide the full pseudocode of our algorithm that was presented in the main text.
The code appears in~\Cref{alg:full-pseudocode}.
The algorithm first computes the potential $\Vall$ (\crefrange{ln:potential-start}{ln:potential-end}), and then the forces $\beadforce{\ell}{j}{\Vall}$ (\crefrange{ln:forces-start}{ln:forces-end}).

For the potential, the algorithm first sums the spring energies due to interior springs (\crefrange{ln:interior-energy-start}{ln:interior-energy-end}), which are then used to evaluate all cycle energies $\Efromto{\particledown}{\particleup}$ through a recurrence relation on $\particledown$ (\crefrange{ln:cycle-energies-start}{ln:cycle-energies-end}).
A recurrence relation on $\Vto{1},\Vto{2},\ldots,\Vto{N-1},\Vto{N}$, which uses the cycle energies, yields the potential (\crefrange{ln:forward-potentials-start}{ln:forward-potentials-end}).

For the forces, the algorithm first evaluates the potentials $\Vfrom{1},\Vfrom{2},\ldots,\Vfrom{N-1},\Vfrom{N}$ through another recurrence relation (\crefrange{ln:backward-potentials-start}{ln:backward-potentials-end}). The connection probabilities are computed based on all aforementioned potentials and cycle energies (\crefrange{ln:connection-probabilities-start}{ln:connection-probabilities-end}). These are used to evaluate all the forces, separately on beads $1,P$ of all particles (\crefrange{ln:exterior-force-start}{ln:exterior-force-end}), and on the interior beads $j=2,\ldots,P-1$ of all particles (\crefrange{ln:interior-force-start}{ln:interior-force-end}). The expressions for the force on beads $1,P$ are slightly different from the expressions in the main text. There, the sum is over all potential neighbors $\ell=1,\ldots,N$, whereas \cref{ln:exterior-force-start} have different limits to the sum, excluding terms for connection probabilities that vanish, $\Prrepnext{\ell}{\ell'} = 0$ for $\ell' > \ell + 1$.

\begin{algorithm}
\DontPrintSemicolon
$\label{ln:original:potential-start} $\tcp{compute cycle energies $\Enk{\particledown}{\particleup}$}
\For{$\particleup = 1 \ldots N$} { $\label{ln:original:cycle-energies-start}$
	\For{$\particledown = 1 \ldots \particleup$} {
		$\begin{aligned}
		\Efromto{\particledown}{\particleup} &= 
								\springforceprefix \sum_{\ell=N-k+1}^{N}{\sum_{j=1}^{P-1}{\rdiffsquared{\ell}{j+1}{\ell}{j}}}
								\\
								&+
								\springforceprefix \sum_{\ell=N-k+1}^{N-1}{\rdiffsquared{\ell+1}{1}{\ell}{P}}
								+
								\springforceprefix \rdiffsquared{N-k+1}{1}{N}{P}
		\label{ln:original:cycle-energies-end}
		\end{aligned}$ 
		\;
	}
}
$\label{ln:original:forward-potentials-start}$\tcp{compute potentials $\Vto{\particleup}$}
$\Vto{0} = 0$ \;
\For{$\particleup = 1 \ldots N$} {
	$\begin{aligned}
	\boltzmann{\Vto{\particleup}} = \frac{1}{\particleup} \sum_{k=1}^{\particleup}{\boltzmann{\left(\Vto{\particleup-k} + \Enk{\particleup}{k}\right)}}
	\label{ln:original:forward-potentials-end}
	\end{aligned}$ \; \label{ln:original:potential-end}
}
\tcp{compute forces} $\label{ln:original:forces-computation-start}$
\For{$\ell = 1 \ldots N$} { $\label{ln:original:forces-cycle-energy-derivative-start}$
	\For{$j = 1 \ldots P$} {
		$\begin{aligned}
		&\beadderive{\ell}{j}{\Enk{\particleup}{k}} = \mass \springconstant^2 (2\posbead{\ell}{j} - \posbead{\ell}{j+1} - \posbead{\ell}{j-1}), 
		\\
		&\mbox{where $\posbead{\ell}{0} = \posbead{\ell-1}{P}, \, \posbead{\ell}{P+1} = \posbead{\ell+1}{1}$ except $\posbead{N-k+1}{0} = \posbead{N}{P}, \, \posbead{N}{P+1} = \posbead{N-k+1}{1}$.}
		\end{aligned}$ \; $\label{ln:original:forces-cycle-energy-derivative-end}$
	}
}
\For{$\ell = 1 \ldots N$} { $\label{ln:original:forces-recurrence-start}$
	\For{$j = 1 \ldots P$} {
		\For{$\particleup = 1 \ldots N$} {
			$\begin{aligned}
			\beadforce{\ell}{j}{\Vto{\particleup}} = 
			-\frac{
				\sum_{k=1}^{\particleup}{
					\left(\beadderive{\ell}{j}{\Enk{\particleup}{k}} + \beadderive{\ell}{j}{\Vto{\particleup - k}}\right) 
					\boltzmann{\left(\Enk{\particleup}{k} + \Vto{\particleup}\right)}
				}
			}
			{\particleup \cdot \boltzmann{\Vto{\particleup}}}
			\end{aligned}$ %
			$\label{ln:original:forces-recurrence-end}$ $\label{ln:original:forces-computation-end}$
		}
	}
} 
\caption{Original PIMD-B\label{alg:original-full-pseudocode}}
\end{algorithm} For comparison, a pseudocode for the original method~\cite{hirshberg2019path} is shown in~\Cref{alg:original-full-pseudocode}.
It has the same overall structure: the potential is computed (\crefrange{ln:original:potential-start}{ln:original:potential-end}) by evaluating the cycle energies (\crefrange{ln:original:cycle-energies-start}{ln:original:cycle-energies-end}) and using them in the recurrence relation (\crefrange{ln:original:forward-potentials-start}{ln:original:forward-potentials-end}). However, the evaluation of the cycle energies is less efficient in this algorithm, and scales as $\bigO(PN^3)$ rather than $\bigO(N^2 + PN)$.
The original LAMMPS implementation did not separate the evaluation of the cycle energies from the recurrence relation. This has no effect on the asymptotic complexity, but may change the constant factor.
Then, the forces are evaluated by a recurrence relation (\crefrange{ln:original:forces-computation-start}{ln:original:forces-computation-end}), first evaluating spatial derivatives of cycle energies (\crefrange{ln:original:forces-cycle-energy-derivative-start}{ln:original:forces-cycle-energy-derivative-end}) and then using a recurrence relation for the derivatives of the potential (\crefrange{ln:original:forces-recurrence-start}{ln:original:forces-recurrence-end}). The latter step takes $\bigO(N^2)$ time for every bead, and $\bigO(PN^3)$ overall, worse than in the new algorithm.

\subsection{Numerical Stability}
\label{sec:numerical-stability}
Computing the potentials $\Vto{\particleup}$ (\crefrange{ln:forward-potentials-start}{ln:forward-potentials-end} in~\Cref{alg:full-pseudocode}) and $\Vfrom{\particledown}$ (\crefrange{ln:backward-potentials-start}{ln:backward-potentials-end}) involve a sum of exponentials, each of which may be too large to represent in ordinary floating-point representations. As described in the SI of Ref.~\citenum{hirshberg2019path}, the trick is to shift all exponents by constant amount $\tilde{E}$. \Cref{ln:forward-potential-recursion} of~\Cref{alg:full-pseudocode} is replaced by
\begin{align*}
	\Vto{\particleup} &= \tilde{E}_1 - \frac{1}{\beta}\ln{\left(\frac{1}{\particleup} \sum_{k=1}^{\particleup}{\boltzmann{\left(\Vto{\particleup-k} + \Enk{\particleup}{k} - \tilde{E}_1\right)}}\right)}, \\
\intertext{and~\cref{ln:backward-potentials-recursion} by}
	\Vfrom{\particledown} &= \tilde{E}_2 - \frac{1}{\beta} \ln{\left(\sum_{\ell=\particledown}^{N}{\frac{1}{\ell} \boltzmann{\left(\Efromto{\particledown}{\ell} + \Vfrom{\ell+1} - \tilde{E}_2\right)}}\right)},
\end{align*}
where every choice of $\tilde{E}_1,\tilde{E}_2$ is valid.
Following~\citet{PhysRevE.106.025309}, we used the value leading to the largest exponent in each recursive step:
\begin{align}
	\tilde{E}_1 &= \min_{k=1,\ldots,\particleup}{\left(\Vto{\particleup-k} + \Enk{\particleup}{k}\right)}, \label{eq:full-elong}
	\\
	\tilde{E}_2 &= \min_{\ell=\particledown,\ldots,N}{\left(\Efromto{\particledown}{\ell} + \Vfrom{\ell+1}\right)}.
\end{align}
We used a simpler version for non-interacting and small interacting trapped systems,
\begin{align}
	\tilde{E}_1 &= \min{\left(\Efromto{1}{\particleup}, \Vto{\particleup-1} + \Efromto{\particleup}{\particleup}\right)}, \label{eq:original-elong}
	\\
	\tilde{E}_2 &= \min{\left(\Efromto{\particledown}{\particledown} + \Vfrom{\particledown+1}, \Efromto{\particledown}{N}\right)},
\end{align}
which is similar to the original implementation in LAMMPS. However, this was inadequate for larger systems or in the presence of strong interaction.
\section{Additional Computational Details}
\label{sec:benchmarks-supplementary}

All simulations were performed on a cluster of servers, each with two Intel Xeon Platinum 9242 CPU @ 2.30GHz,  386GB RAM, and a total of 96 cores. $P$ cores were used for each simulation in LAMMPS (based on its replica mechanism built on top of OpenMPI). A single core for each simulation in i-PI was used. A whole server was blocked out for performance measurements.

\subsection{Scalability of the Implementation in i-Pi}
\label{sec:i-pi-scalability}

\begin{figure}[t]
    \includegraphics[width=0.47\textwidth]{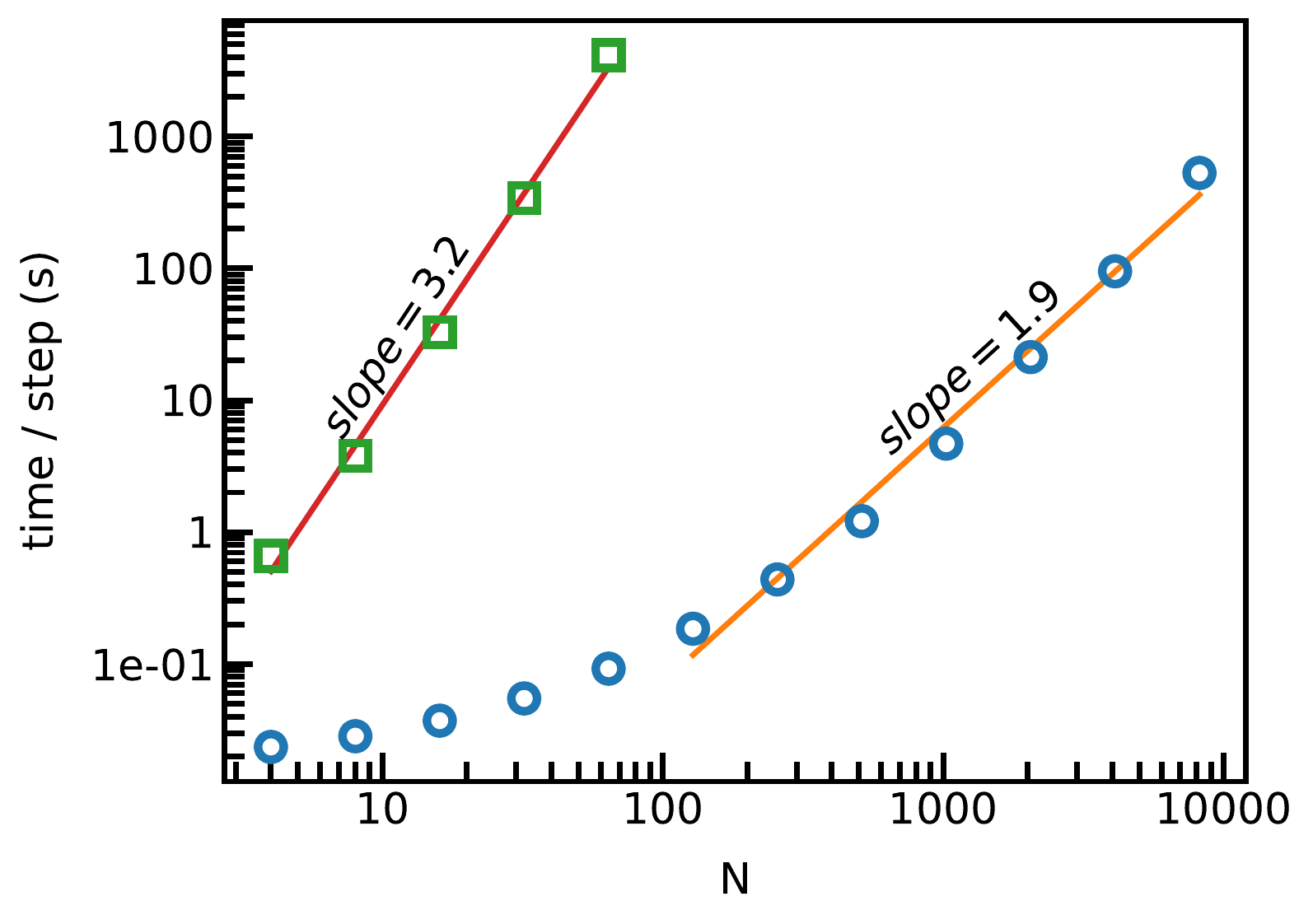}%
\caption{
\label{fig:i-pi-speedup}
  The scaling in i-PI of the original and new methods as with the number of particles for a system of $N$ non-interacting cold trapped bosons on a log-log scale. Blue circles, green squares represent the new and original algorithms, respectively.
}
\end{figure}

 \Cref{fig:i-pi-speedup} presents the scalability of the implementation of the algorithm in i-Pi~\cite{KAPIL2019214}, showing the time per simulation step as a function of the number of particles $N$ in log-log scale, for a system of trapped cold non-interacting bosons with $P=32$. The time measured is of the code that calculates the exchange potential and forces only, excluding the external force calculation and the propagator. We see that in i-Pi, the new algorithm indeed scales better than the original algorithm. The slope of the linear fit from sufficiently large $N$ %
is $1.9$ for our new algorithm, compared to $3.2$ for the original algorithm in the i-PI implementation.

\subsection{Numerical Validation}
\begin{figure}[t]
    \subfloat{\includegraphics[width=0.47\textwidth]{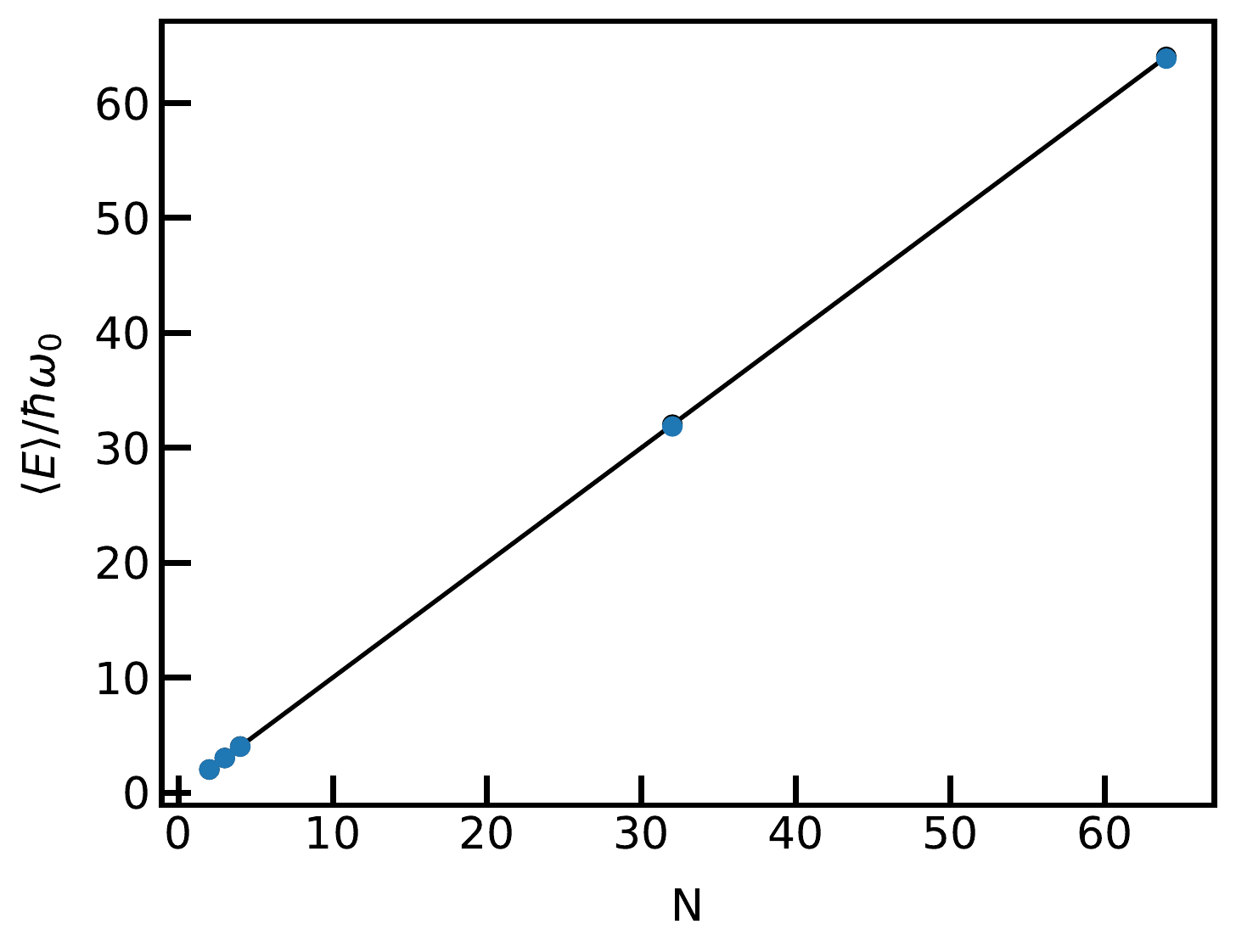}}%
\hspace{0.1in}
    \subfloat{\includegraphics[width=0.47\textwidth]{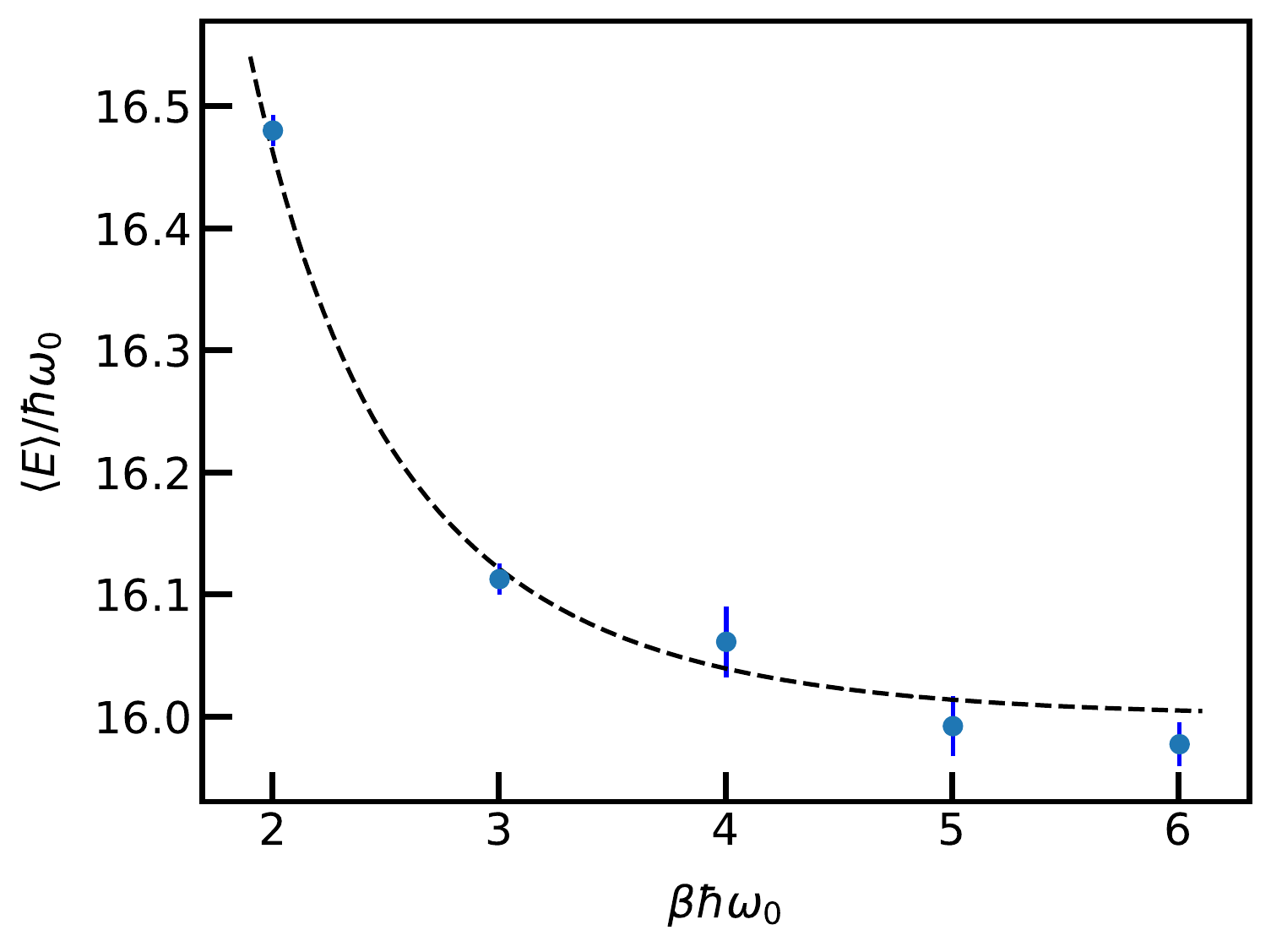}}%
\caption{
  \label{fig:harmonic-energy-experiments}
  The energy of non-interacting particles in a harmonic trap as a function of $N$ with fixed temperature corresponding to $\beta \hbar \omega =6$ (left), and as function of the temperature with fixed $N=16$ (right), and their agreement with the analytical result. $P=36$.
}
\end{figure} %
\begin{figure}[t]
    \subfloat{\includegraphics[width=0.47\textwidth]{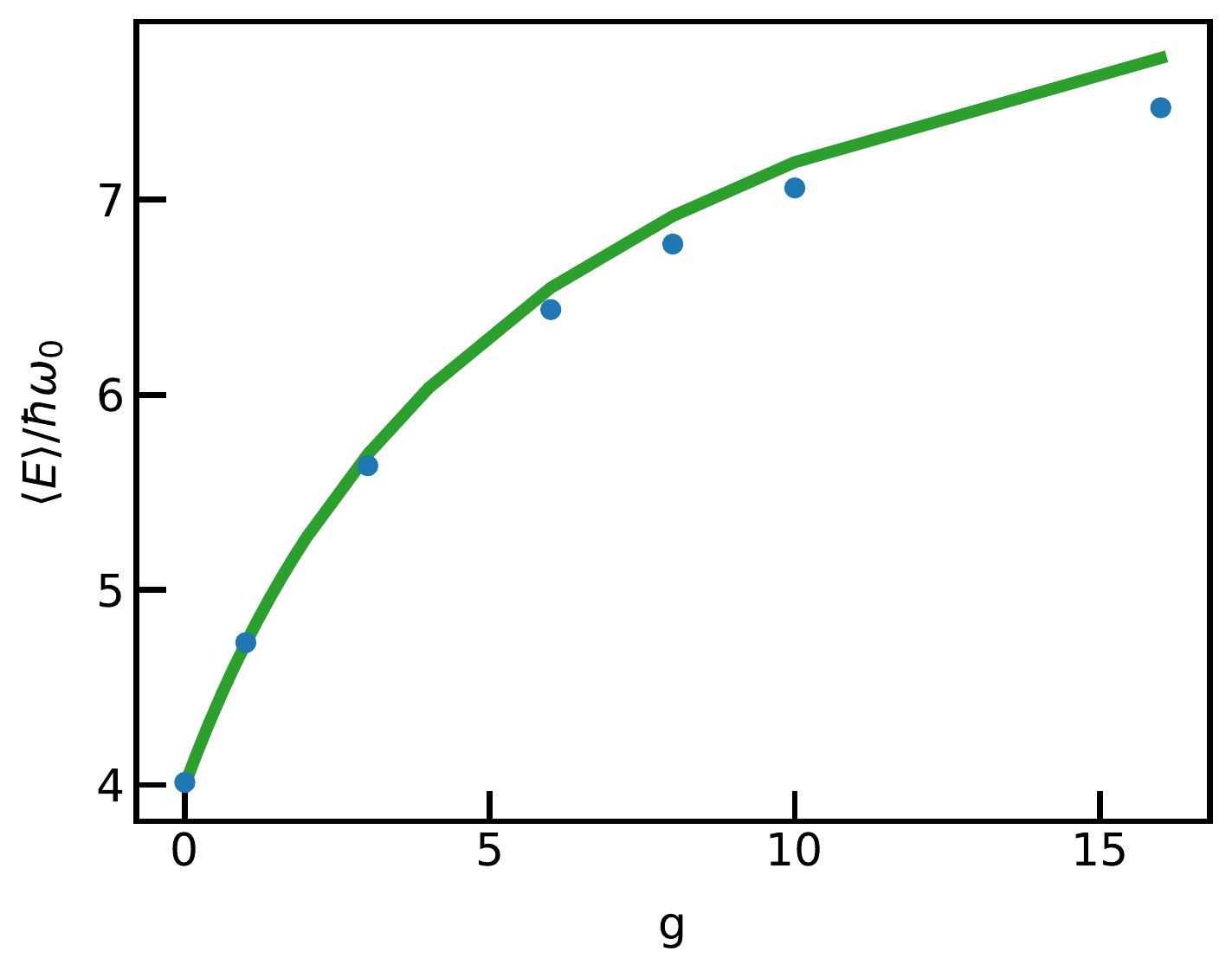}}%
\hspace{0.1in}
    \subfloat{\includegraphics[width=0.47\textwidth]{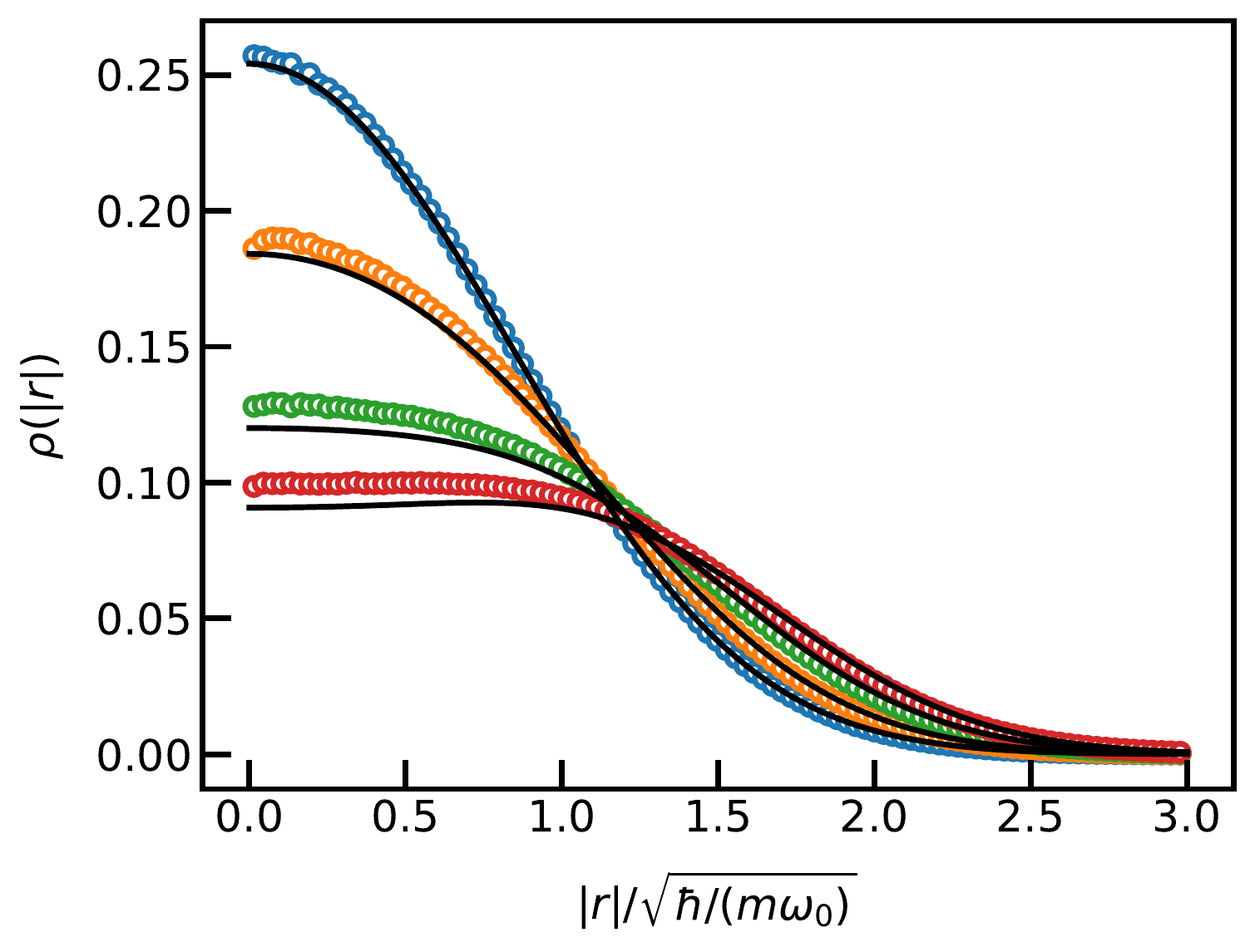}}%
\caption{
  \label{fig:gaussian-experiments}
  The energy and radial density of $N=4$ interacting particles with Gaussian repulsion of strength $g$ (in harmonic-oscillator units) in a harmonic trap and their agreement with previous results from numerically diagonalizing the Hamiltonian~\cite{Mujal2017}.
}
\end{figure} 
In this section, we present simulation results validating the correctness of our improved algorithm.
\Cref{fig:harmonic-energy-experiments} shows the energy of trapped bosons in a harmonic trap:
1) as a function of $N$, at a temperature corresponding to $\hbar \beta \omega_0 = 6$, and 2) as a function of temperature, with $N=16$.
We use $P=36$, and a time step of $1$ fs. The simulations were performed for $10^6$ steps with the first 20\% discarded as equilibration.
Three independent runs were used; the error bars of one standard deviation are smaller than the marker size.
The results of the simulation are compared with the analytical result.
The maximal relative deviation is 0.5\%.
\Cref{fig:gaussian-experiments} shows results for trapped bosons interacting through a Gaussian repulsive interaction (see Ref.~\citenum{hirshberg2019path} for details).
We calculate the energy and radial density for $N=4$ bosons for different interaction strengths, controlled by the parameter $g$.
We use $P=72$, and a time step of $1$ fs. The simulations were performed for $10 \cdot 10^6$ steps with the first 20\% discarded as equilibration.
Three independent runs were used; the error bars of one standard deviation are smaller than the marker size.
The results are compared to the results of~\citet{Mujal2017}. The maximum relative deviation in the energy is 3.3\% and the mean deviation is 1.5\%, close to the deviations of~\citet{hirshberg2019path}, which were 2.9\% and 1.3\% respectively.  
The mean relative deviation in the radial density is at most $\mathord{\sim} 5.3\%$.

\subsection{Additional benchmarks and applications}
\begin{figure*}[t]
\center
    \subfloat{
    \includegraphics[width=0.42\textwidth]{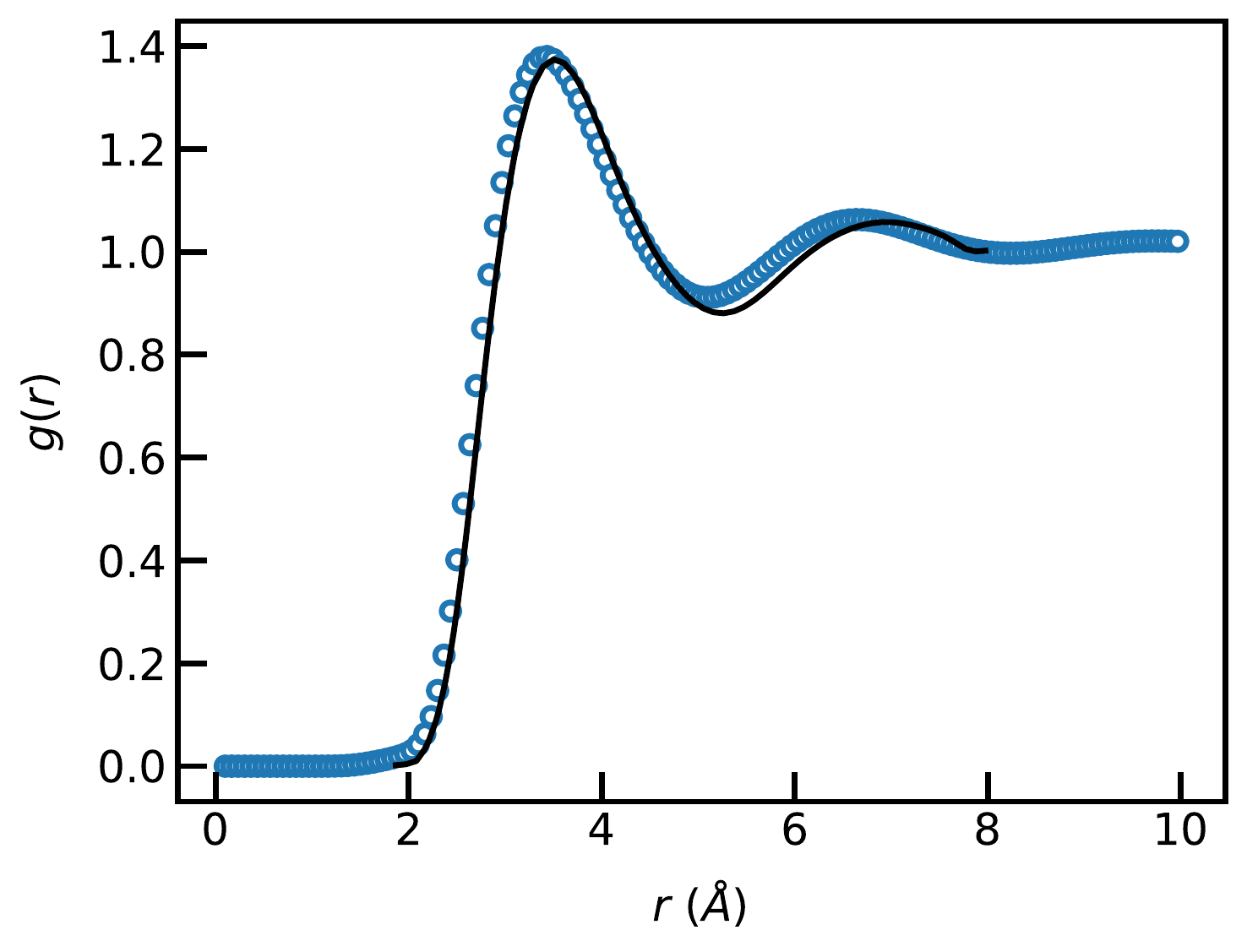}
    }%
\hspace{0.1in}
    \subfloat{\includegraphics[width=0.42\textwidth]{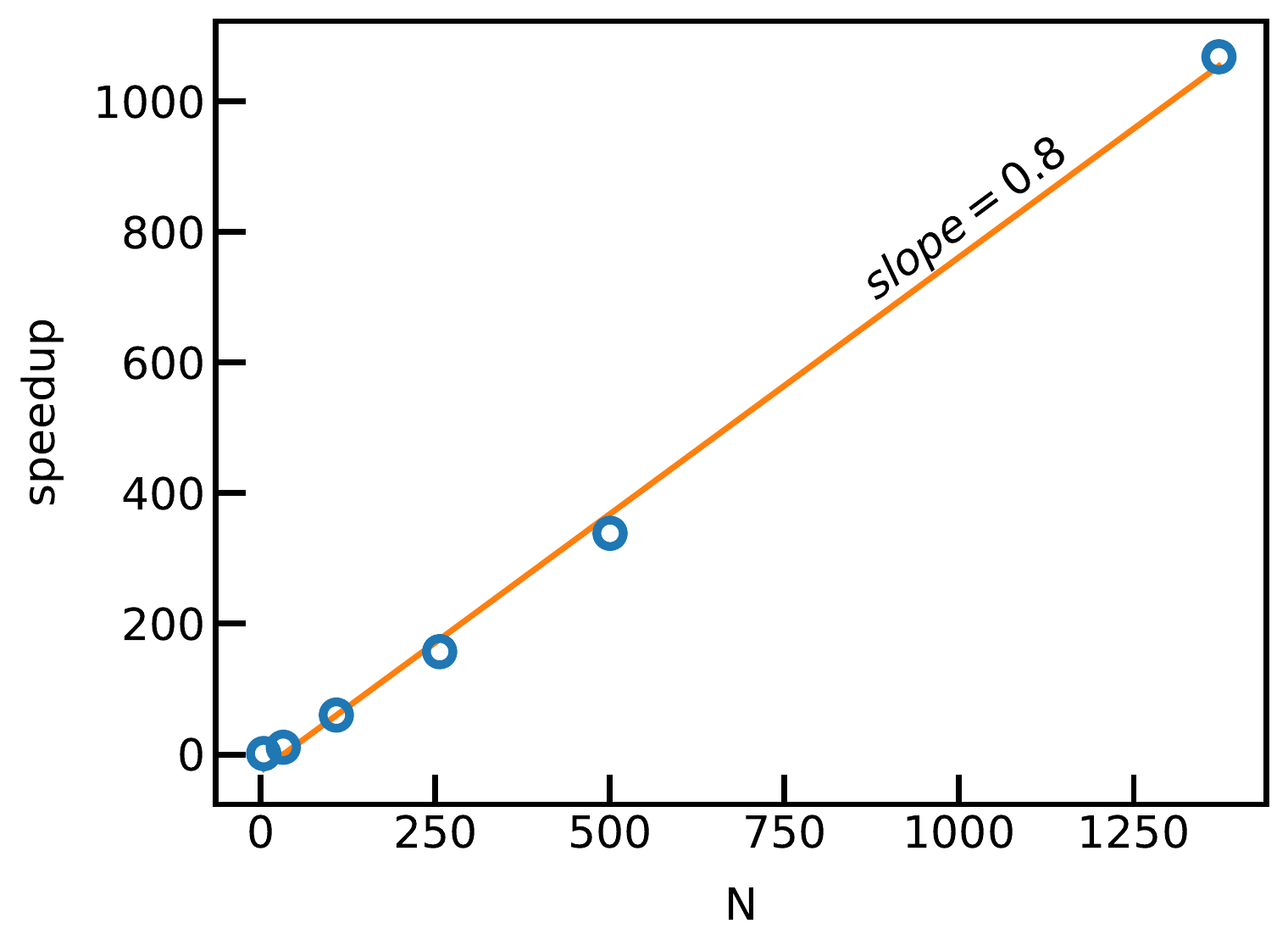}}%
\caption{
\label{fig:liquid-he-supplementary}
  Simulations of superfluid liquid \ce{^{4}He}. Left: the pair correlation of $N=1372$ atoms at temperature $1.379$K with and a density of $0.02182 \AA^{-3}$, compared with the PIMC results of~\citet{ceperley1995path}. %
  Right: the speedup of the improved algorithm over the original method as a function of the number of \ce{^{4}He} molecules, with $P=64$ beads, computed at $1.21$K and the same density. %
}
\end{figure*} %
\begin{figure*}[t]
\center
    \subfloat{
    \includegraphics[width=0.5\textwidth]{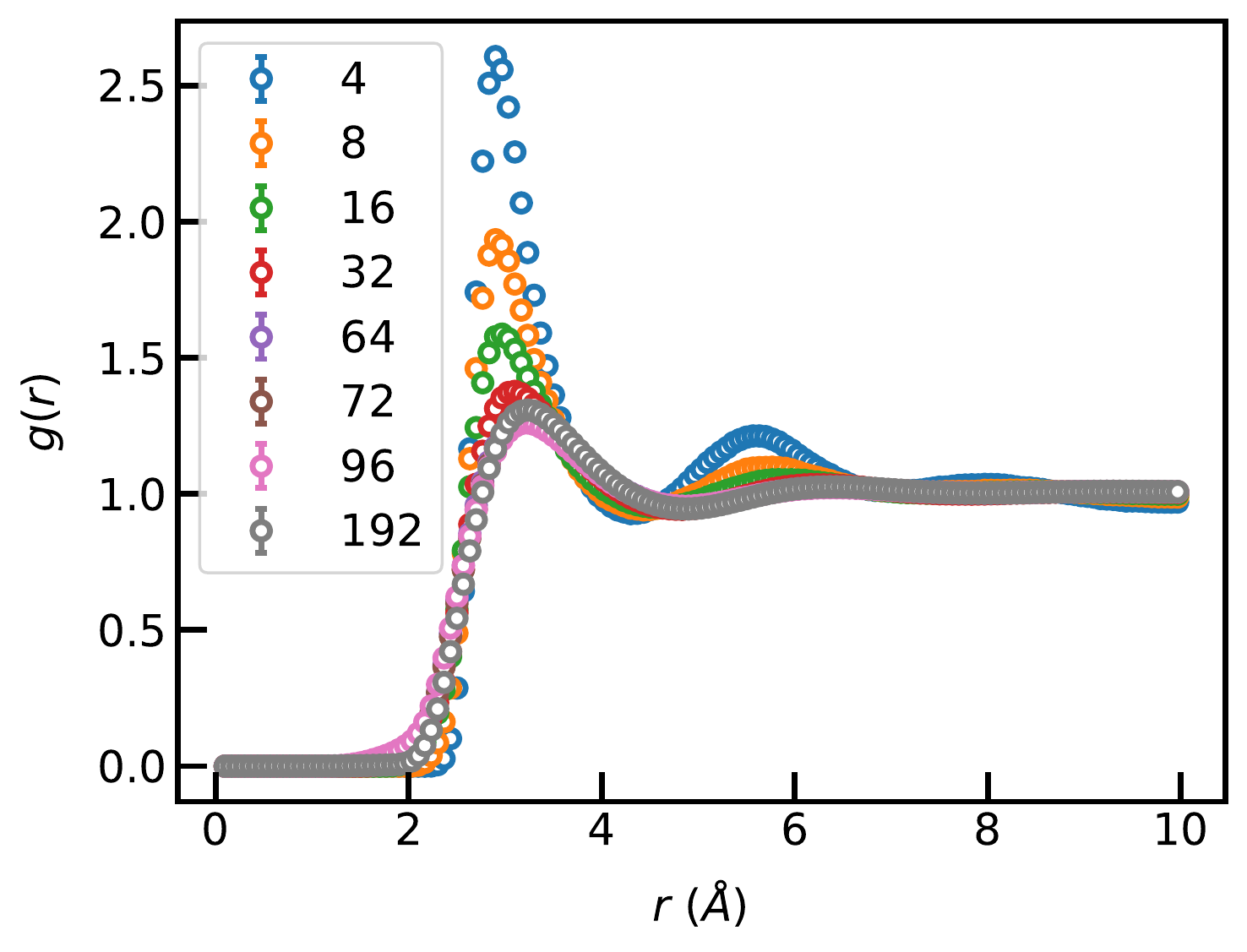}
    }%
\caption{
\label{fig:liquid-he-bead-convergence}
  Bead convergence of the pair correlation for $N=108$ \ce{^{4}He} atoms, with different values of $P$. The temperature is $1.21$K and the density is $0.02182 \AA^{-3}$.
}
\end{figure*} \Cref{fig:liquid-he-supplementary} compares the pair correlation function for superfluid liquid \ce{^{4}He} to the pair correlation from a PIMC simulation of~\cite{ceperley1995path} (see Fig.\ 16 there).  %
The number of beads was chosen based on a convergence graph for $N=108$ atoms at $1.21$K (see~\Cref{fig:liquid-he-bead-convergence}).

The speedup of the new algorithm over the original one in a system of superfluid liquid atoms is displayed in~\Cref{fig:liquid-he-supplementary}.
The simulation speedup as a function of the number of atoms $N$ 
 at a temperature of $1.21$K and a density of $0.02182 \AA^{-3}$
 with $P=64$ is %
 approximately linear ($R^2=0.997$) with a slope of $\mathord{\sim} 0.8$. We attribute the reduced speedup factor compared to the case of trapped bosons (displayed in the main text) to the cost of evaluating particle-particle interactions, which is not sped up by the new algorithm. %

\begin{figure*}
\center
    \subfloat{\includegraphics[width=0.35\textwidth]{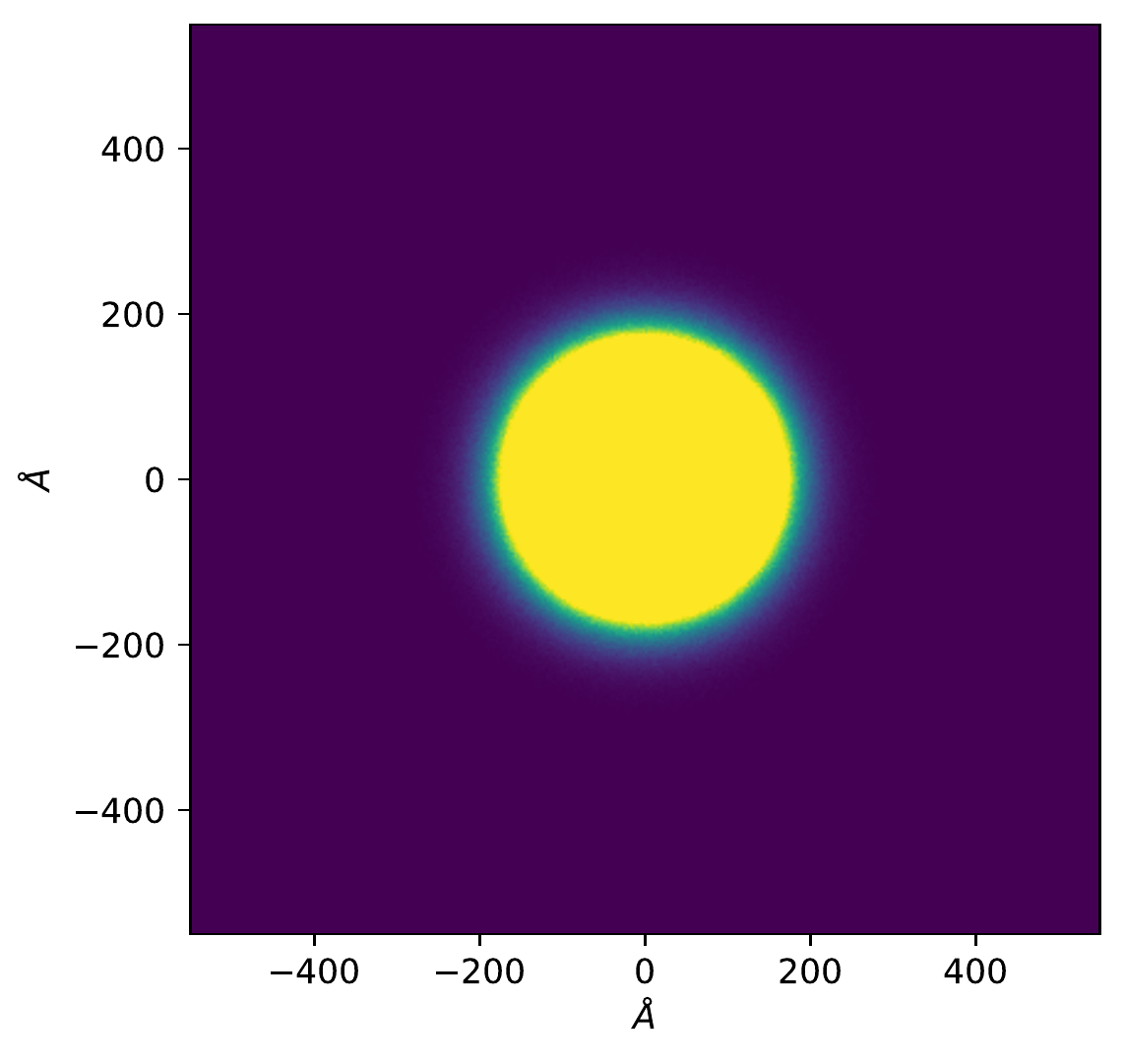}}%
\hspace{0.1in}
    \subfloat{\includegraphics[width=0.35\textwidth]{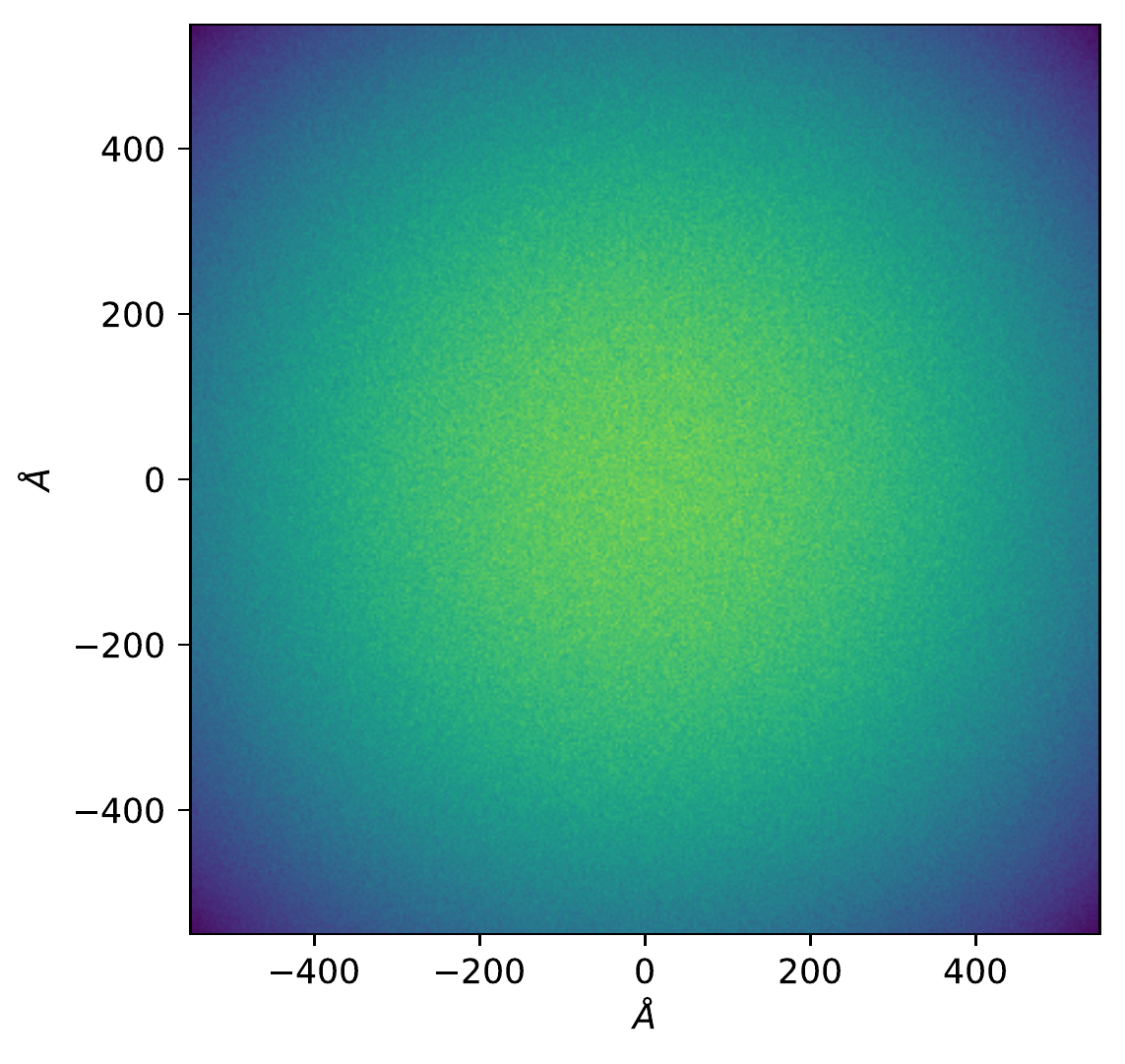}}%
\caption{
  \label{fig:2d-density}
  Two-dimensional density of $N=1600$ bosons in a harmonic trap without interaction (left) and with Gaussian repulsive interaction $g=3$ (right). The simulations last $\leq 9$ days, instead of $> 20$ years with the previous algorithm.
}
\end{figure*} \Cref{fig:2d-density} shows the result of simulations of $N=1600$ trapped cold bosons, with and without Gaussian repulsive interaction. (For comparison, the largest trapped system studied in previous bosonic PIMD methods was for $N=64$~\cite{hirshberg2019path}.)
We use $P=36$, appropriate for the temperature of $11.6$K ($\beta \hbar \omega=3$) and a time step of $1$ fs. The simulations were performed for $3 \cdot 10^6$ steps with the first 20\% discarded as equilibration. The repulsive potential is the same as described in ref.~\citenum{hirshberg2019path}, with $g=3$. 
As expected, the repulsive interaction between the particles pushes them to spread in the trap resulting in a wider two-dimensional density.

\section{Cycle Notation of Permutations}
\label{sec:cycle-notation}
The cycle notation is a common way to identify permutations.
The cycle notation of a permutation $\sigma$ is constructed in the following way: begin with particle $1$, then add to the cycle the particles obtained by applying $\sigma$ successively, until $\cyclenotate{1, \sigma(1), \sigma(\sigma(1)), \ldots}$ returns to particle 1. The properties of the permutation guarantee that the cycle will close. Then, to construct the other cycles of $\sigma$, repeat the process with the particle with smallest index that was not yet included in previous cycles, until all particles are assigned. The set of cycles constructed in this way is the cycle notation of $\sigma$. For example, the cycle notation for the permutation $\permutationthree{2}{1}{3}$ is $\cyclenotate{12}\cyclenotate{3}$. 

The cycle notation is especially appealing in our context, because 
the ring polymer configuration is identified from the cycle notation in the following way:
each cycle of $\sigma$ translates to a separate ring polymer in $\Eperm{\sigma}$ in which the corresponding particles are connected in the same order. %
\section{Some Combinatorics of Permutations}
\label{sec:combinatorics}

In this section, we detail some of the preceding claims on the combinatorics of permutations satisfying certain properties.
\begin{itemize}
	\item The number of permutations on $N$ elements is $\fact{N}$, which asymptotically is $\Theta((N/e)^N)$ (by Stirling's approximation), growing (faster than) exponentially.

	\item The number of representative permutations $\rep{\sigma}$ over $N$ elements is $2^{N-1}$, which is exponential. 
	To see this, imagine the process of constructing $\rep{\sigma}$ as scanning the interval $[1,N]$ from left to right, and simultaneously the cycle notation of the permutation, and ``cutting'' a new interval whenever a cycle ends.
	There are $N-1$ places in which $\rep{\sigma}$ may cut, and every combination of the choices---whether to cut after $1$, after $2$, and so forth---yields a different representative permutation. There are $2$ choices in each location---whether to cut or not---so this process can generate $2^{N-1}$ representative permutations.

    \item In the main text, when discussing how to compute connection probabilities, we argue that the number of permutations $\sigma$ on $N$ elements in which $\ell$ is connected to $\ell'$ grows exponentially. The number is $\fact{N}/N = \pfact{N-1}$ (from symmetry considerations), which indeed grows (faster than) exponentially. %

	\item The number of permutations with unique topologies (unique cycle type) is the number of partitions of the number $N$, i.e., the number of unique ways to write $N$ as a sum of positive integers, which is known~\cite{10.1112/plms/s2-17.1.75} to be exponential in $\sqrt{N}$.
\end{itemize}
\section{Proof of the Force Evaluation Algorithm}
\label{sec:proofs}
In this section, we present in detail the central derivations leading to the improved algorithm for evaluating the forces.
We first define the ring polymer potential using a sum over representative permutations, and show that it coincides with the recursive definition of the potential from the original method (\Cref{sec:proof-sum-of-representatives}).
From this we then derive a probabilistic expression for evaluating the forces (\Cref{sec:proof-force-as-expected-force}).
Finally, we derive expressions for the connection probabilities (\Cref{sec:connection-probabilities-proofs}) on which the efficient calculation of the forces is based.

\paragraph{Notation}
We begin with some notation.
We denote the set of permutations over $[1,N] = \set{1,\ldots,N}$ by $\Symfromto{1}{N}$. 
A \emph{cycle} is a list of elements. For example, $(N-k+1,\ldots,N)$ denotes the cycle over these elements. We sometimes write a permutation $\sigma$ as a list of (disjoint) cycles, e.g., $\sigma = c_1 \cdot \ldots \cdot c_m$ (the product means adding another cycle).
In a permutation $\sigma$, the cycle to which an element $i$ belongs is denoted $\cycleof{\sigma}{i}$.
Recall from the main text that the ring polymer energy of a permutation $\sigma$ is
\begin{equation*}
    \Eperm{\sigma} = \springforceprefix \sum_{\ell=1}^{N}{ \sum_{j=1}^{P}{\rdiffsquared{\ell}{j}{\ell}{j+1}}}, \mbox{ where $\beadpos{\ell}{P+1} = \beadpos{\sigma(\ell)}{1}$.}
\end{equation*}

\subsection{Sum Over Representative Permutations}
\label{sec:proof-sum-of-representatives}

Our alternative form for the potential uses the concept of \emph{representative permutations}:
\begin{definition}
\label{def:representative-permutation}
Given a permutation $\sigma$ over $N$ elements, the \emph{representative permutation} $\rep{\sigma} \in \SymN{N}$ is the permutation obtained from the cycle notation of $\sigma$ as follows (and as explained in the main text):
Start from the canonical cycle notation of $\sigma$, which crucially sorts the cycles of $\sigma$ in increasing order in their highest element, $\sigma=c_1 \ldots c_m$, where $\max{c_i} < \max{c_{i+1}}$.
Construct $\rep{\sigma}$ by replacing the elements in the cycles with $1,\ldots,N$ in that order, that is, $\rep{\sigma}=c'_1 \ldots c'_m$ where
$$c'_{i+1} = (1 + \max{c'_i},\ldots,\card{c_{i+1}} + \max{c'_i})$$ with $\max{c'_0}=0$.
\end{definition}
Note that $\rep{\sigma}$ is indeed always a permutation of $1,\ldots,N$, provided that $\sigma$ is, and that it has the same cycle type, since $\card{c'_i} = \card{c_i}$.

Our expression for the bosonic ring polymer potential is as follows:
\begin{definition}
\label{def:forward-potential}
We define the potential $\Vto{N}$ by
\begin{equation*}
    \boltzmann{\Vto{N}} = \frac{1}{\fact{N}} \sum_{\sigma \in \SymN{N}}{\boltzmann{\Eperm{\rep{\sigma}}}}.
\end{equation*}
\end{definition}

Our first insight is that this definition captures exactly the bosonic ring polymer potential of~\citet{hirshberg2019path}, as the following theorem shows:
\begin{theorem}
\label{thm:forward-potential-recurrence}
The boson potentials $\Vto{1},\ldots,\Vto{N}$ defined in~\Cref{def:forward-potential} satisfy the recursion formula
\begin{equation*}
	\boltzmann{\Vto{N}} = \frac{1}{N} \sum_{k=1}^{N}{\boltzmann{\left(\Vto{N - k} + \Enk{N}{k}\right)}},
\end{equation*}
with the notation $\Vto{0} = 0$.
\end{theorem}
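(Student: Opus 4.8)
## Proof Plan for Theorem~\ref{thm:forward-potential-recurrence}

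The plan is to prove the recursion by partitioning the set $\SymN{N}$ of permutations according to the cycle containing the element $N$, and then recognizing that each block of the partition contributes exactly one term of the claimed sum. The key structural fact I would exploit is that the representative map $\repsym$ places the cycle containing $N$ \emph{last} in the canonical ordering (since $N$ is the largest element overall, its cycle has the largest maximum), so in $\rep{\sigma}$ that cycle always occupies the final slot and consists of the consecutive elements $N-k+1, \ldots, N$ where $k$ is its length. Consequently, for any $\sigma$ whose cycle $\cycleof{\sigma}{N}$ has length $k$, the ring-polymer energy splits cleanly as $\Eperm{\rep{\sigma}} = \Eperm{\rep{\sigma'}} + \Enk{N}{k}$, where $\sigma'$ is the permutation on $\set{1,\ldots,N-k}$ obtained by deleting that cycle: the springs decompose into those internal to the first $m-1$ cycles (which by construction of $\repsym$ are laid out on beads $1,\ldots,N-k$ exactly as $\rep{\sigma'}$ lays them out) plus those of the final consecutive cycle, which is precisely the cycle energy $\Enk{N}{k}$ from~\Cref{eq:Ek}.

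With that decomposition in hand, I would carry out the following steps. First, fix $k \in \set{1,\ldots,N}$ and count: the number of $\sigma \in \SymN{N}$ for which $\cycleof{\sigma}{N}$ has length exactly $k$ is $\binom{N-1}{k-1}(k-1)! \cdot (N-k)!$, since we choose the $k-1$ partners of $N$ from the remaining $N-1$ elements, arrange them in a cycle with $N$ in $(k-1)!$ ways, and permute the leftover $N-k$ elements arbitrarily. Second, observe that as $\sigma$ ranges over all permutations in this block, the deleted permutation $\sigma'$ ranges over $\SymN{N-k}$, and each particular $\sigma' \in \SymN{N-k}$ arises from exactly $\binom{N-1}{k-1}(k-1)!$ choices of the $N$-cycle — but, crucially, what matters for $\Eperm{\rep{\sigma}}$ is only $\rep{\sigma'}$, not $\sigma'$ itself. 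Here I must be slightly careful: different $\sigma'$ can have the same representative, so I would instead argue directly at the level of the sum. Summing $\boltzmann{\Eperm{\rep{\sigma}}}$ over the length-$k$ block gives $\binom{N-1}{k-1}(k-1)! \cdot \boltzmann{\Enk{N}{k}} \sum_{\sigma' \in \SymN{N-k}} \boltzmann{\Eperm{\rep{\sigma'}}}$, because the $N$-cycle's element set is always $\set{N-k+1,\ldots,N}$ regardless of which specific $k-1$ elements accompanied $N$ in $\sigma$, so the energy $\Enk{N}{k}$ is the same for every member of the block with a fixed $\sigma'$, and the multiplicity of each $\sigma'$ is the count of $N$-cycles. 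By~\Cref{def:forward-potential} the inner sum equals $\fact{(N-k)} \cdot \boltzmann{\Vto{N-k}}$.

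Finally I would assemble the pieces: summing over all block sizes $k$,
\begin{equation*}
\fact{N} \cdot \boltzmann{\Vto{N}} = \sum_{k=1}^{N} \binom{N-1}{k-1}(k-1)! \cdot \fact{(N-k)} \cdot \boltzmann{\left(\Vto{N-k} + \Enk{N}{k}\right)},
\end{equation*}
and since $\binom{N-1}{k-1}(k-1)! \cdot \fact{(N-k)} = \frac{(N-1)!}{(N-k)!}\cdot(N-k)! = \fact{(N-1)}$, every coefficient equals $\fact{(N-1)}$; dividing by $\fact{N} = N \cdot \fact{(N-1)}$ yields the stated recursion, with the base case $\Vto{0}=0$ matching the empty sum. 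The main obstacle is the middle step: making rigorous the claim that the energy $\Eperm{\rep{\sigma}}$ depends on $\sigma$ only through the pair (length of $\cycleof{\sigma}{N}$, representative of the complementary permutation), i.e.\ that deleting the last cycle of $\rep{\sigma}$ gives exactly $\rep{\sigma'}$. This requires checking that the canonical-ordering-then-relabeling procedure of~\Cref{def:representative-permutation} commutes with restriction to the first $m-1$ cycles — which holds because removing the cycle with the globally largest maximum does not disturb the relative order of the other cycles' maxima nor the consecutive relabeling $1,\ldots,N-k$ they receive — and I would spell this commutation out as the technical heart of the argument.
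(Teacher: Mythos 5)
Your proposal is correct and follows essentially the same route as the paper: partition $\SymN{N}$ by the length $k$ of the cycle containing $N$, use the fact that $\repsym$ places that cycle last as the consecutive block $(N-k+1,\ldots,N)$ so that $\Eperm{\rep{\sigma}}=\Eperm{\rep{\sigma'}}+\Enk{N}{k}$ (the paper's Lemma~\ref{lem:representative-energy-recurrence} together with its order-preserving relabeling map $f$), count $\binom{N-1}{k-1}\pfact{k-1}$ preimages per $\sigma'\in\SymN{N-k}$, and simplify $\binom{N-1}{k-1}\pfact{k-1}\pfact{N-k}/\fact{N}=1/N$. The ``commutation with restriction'' step you flag as the technical heart is exactly what the paper isolates and proves as its lemma.
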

The intuition underlying the proof %
is that the recursive definition is a manifestation of a recursive structure in $\repsym$.
The last cycle in $\rep{\sigma}$ is of length $k$ and always includes $N$. The other cycles in $\rep{\sigma}$ repeat this process for the remaining $N-k$ particles.
The sum in~\Cref{def:forward-potential} corresponds to the different possible values of $k=1,...,N$, and the prefactor accounts for the fraction of permutations in which $N$ is in a cycle of length $k$.
The recursive structure of $\repsym$ and the ring polymer energy is formalized in the following lemma:
\begin{lemma}
\label{lem:representative-energy-recurrence}
Let $\sigma \in \SymN{N}$ be a permutation where the cycle to which $N$ belongs contains $k$ elements, that is, $\card{\cycleof{\sigma}{N}}=k$.
Consider the permutation $\sigma \setminus \cycleof{\sigma}{N}$, that is, the permutation over $N-k$ elements that is constructed from the cycles of $\sigma$ \emph{excluding} the cycle $\cycleof{\sigma}{N}$.
Then
\begin{equation*}
	\Eperm{\rep{\sigma}} = \Eperm{\rep{\sigma \setminus \cycleof{\sigma}{N}}} + \Enk{N}{k}.
\end{equation*}
\end{lemma}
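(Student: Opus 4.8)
The plan is to prove the identity by decomposing the ring polymer energy of $\rep{\sigma}$ over its cycles and then matching the resulting pieces one at a time. The key preliminary fact is that for \emph{any} permutation $\tau \in \SymN{N}$ written in cycle notation as $\tau = d_1 \cdots d_r$, the spring energy splits as $\Eperm{\tau} = \sum_{i=1}^{r}\Eperm{d_i}$, where $\Eperm{d_i}$ denotes the spring energy of the single closed ring polymer threading the particles of $d_i$ in cyclic order. This is immediate from~\Cref{eq:eperm}: in the double sum over $\ell$ and $j$, the springs with $j \le P-1$ only couple beads of particle $\ell$, while the spring with $j=P$ couples $\beadpos{\ell}{P}$ to $\beadpos{\tau(\ell)}{1}$, and $\tau(\ell)$ lies in the same cycle as $\ell$; grouping the $\ell$-terms by cycle therefore yields disjoint closed chains whose energies sum to $\Eperm{\tau}$.

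Next I would unpack the structure of $\rep{\sigma}$ using~\Cref{def:representative-permutation}. Writing $\sigma = c_1 \cdots c_m$ in canonical cycle notation with $\max c_1 < \cdots < \max c_m$, the cycle containing $N$ is the last one, $\cycleof{\sigma}{N} = c_m$, with $\card{c_m} = k$. The recursive relabeling in~\Cref{def:representative-permutation} produces $\rep{\sigma} = c'_1 \cdots c'_m$ in which each $c'_i$ is the ascending run of $\card{c_i}$ consecutive integers starting at $\max c'_{i-1}+1$. Since $c'_1,\ldots,c'_{m-1}$ exhaust the labels $1,\ldots,N-k$, we get $\max c'_{m-1}=N-k$, and hence $c'_m = (N-k+1,\, N-k+2,\, \ldots,\, N)$ in ascending order. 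Its closed ring polymer connects $\beadpos{\ell}{P}$ to $\beadpos{\ell+1}{1}$ for $\ell=N-k+1,\ldots,N-1$ and $\beadpos{N}{P}$ to $\beadpos{N-k+1}{1}$, and includes the interior springs of each of these particles; comparing term by term with the definition of $\Enk{N}{k}$ in~\Cref{eq:Ek} (using $\rdiffsquared{\ell}{j}{\ell}{j+1} = \rdiffsquared{\ell}{j+1}{\ell}{j}$) gives $\Eperm{c'_m} = \Enk{N}{k}$.

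It then remains to recognize the remaining cycles. By definition $\sigma \setminus \cycleof{\sigma}{N} = c_1 \cdots c_{m-1}$, and deleting the last cycle preserves the canonical ordering $\max c_1 < \cdots < \max c_{m-1}$, so~\Cref{def:representative-permutation} applied to this permutation on $N-k$ elements builds its cycles by the same recursion with the same lengths $\card{c_1},\ldots,\card{c_{m-1}}$ — i.e.\ it reproduces exactly $c'_1,\ldots,c'_{m-1}$. Hence $\rep{\sigma \setminus \cycleof{\sigma}{N}} = c'_1 \cdots c'_{m-1}$, and by the cycle decomposition $\Eperm{\rep{\sigma \setminus \cycleof{\sigma}{N}}} = \sum_{i=1}^{m-1}\Eperm{c'_i}$. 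Putting everything together, $\Eperm{\rep{\sigma}} = \sum_{i=1}^{m}\Eperm{c'_i} = \Eperm{\rep{\sigma \setminus \cycleof{\sigma}{N}}} + \Enk{N}{k}$, which is the claim.

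The main obstacle is this last step: confirming that the representative of $\sigma$ with $N$'s cycle removed coincides with the \emph{prefix} $c'_1 \cdots c'_{m-1}$ of $\rep{\sigma}$ rather than some reshuffled permutation. This reduces to the observation that the construction in~\Cref{def:representative-permutation} is ``prefix stable'': the first $m-1$ relabeled cycles depend only on the ordered list of lengths $\card{c_1},\ldots,\card{c_{m-1}}$, which is unchanged by dropping $c_m$, since $c_m$ has the largest maximum and so is always placed last in the canonical order. Everything else is routine bookkeeping of the spring-energy terms.
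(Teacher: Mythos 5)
Your proposal is correct and follows essentially the same route as the paper's proof: decompose the spring energy as a sum over cycles, note that the cycle containing $N$ is last in the canonical ordering so that $\rep{\sigma} = \rep{\sigma \setminus \cycleof{\sigma}{N}} \cdot (N-k+1,\ldots,N)$, and identify the energy of that last cycle with $\Enk{N}{k}$. You merely spell out in more detail the ``prefix stability'' of the relabeling and the term-by-term matching with \Cref{eq:Ek}, which the paper leaves implicit.
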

\begin{proof}
In the canonical cycle notation of $\sigma$, the last cycle is always $\cycleof{\sigma}{N}$. Hence, in the representative permutation, according to~\Cref{def:representative-permutation}, this is transformed to 
$$\rep{\sigma} = \rep{\sigma \setminus \cycleof{\sigma}{N}} \cdot (N-k+1,\ldots,N).$$
The claim follows because $\Eperm{\rep{\sigma}}$ factors to a sum over the cycles of $\rep{\sigma}$.
\end{proof}

We are now ready to prove the theorem.
\begin{proof}[Proof of~\Cref{thm:forward-potential-recurrence}]
We split the sum over all permutations according the size of the cycle to which the element $N$ belongs.
\begin{align}
		\boltzmann{\Vto{N}} 
	&= 
		\frac{1}{\fact{N}} \sum_{\sigma \in \SymN{N}}{\boltzmann{\Eperm{\rep{\sigma}}}}
	\\
	&=
	\label{eq:first-cycle-argument-start}
		\frac{1}{\fact{N}} \sum_{k=1}^{N}{\sum_{\sumcond{\sigma \in \SymN{N}}{\card{\cycleof{\sigma}{N}}=k}}{\boltzmann{\Eperm{\rep{\sigma}}}}}
	\\
	\intertext{By~\Cref{lem:representative-energy-recurrence},}
	&=
		\frac{1}{\fact{N}} \sum_{k=1}^{N}{\sum_{\sumcond{\sigma \in \SymN{N}}{\card{\cycleof{\sigma}{N}}=k}}{\boltzmann{\left(
											\Eperm{\rep{\sigma \setminus \cycleof{\sigma}{N}}}
											+ \Enk{N}{k}
										 \right)}}}.
	\\
	\intertext{Since $\Enk{N}{k}$ is independent of $\sigma$,} %
	&=
		\frac{1}{\fact{N}} \sum_{k=1}^{N}{\boltzmann{\Enk{N}{k}} 
			\sum_{\sumcond{\sigma \in \SymN{N}}{\card{\cycleof{\sigma}{N}}=k}}{\boltzmann{\Eperm{\rep{\sigma \setminus \cycleof{\sigma}{N}}}}}}. \label{eq:proof-sum-all-step}
\end{align}

Consider the term on the right, the sum over all permutations over $N$ elements where the cycle of $N$ is of length $k$. We will transform it to use a sum over all permutations over $N-k$ elements only.
To this end, define a function $f$, that given a permutation $\tilde{\sigma}$ over $N-k$ possibly non-consecutive elements, returns a permutation of $[1,N-k]$, by renaming the elements in $\tilde{\sigma}$ to $1,\ldots,N-k$ while respecting the order.
It is easy to see that the representative permutation is unaltered by $f$: $\rep{\tilde{\sigma}} = \rep{f(\tilde{\sigma})}$, seeing that the renaming of the cycles that $f$ performs does not change the relative order between them in the canonical cycle notation.
Thus, it is valid to replace the representative of $\sigma'$ with the representative of $f(\sigma')$:
\begin{align}
	&\sum_{\sumcond{\sigma \in \SymN{N}}{\card{\cycleof{\sigma}{N}}=k}}{\boltzmann{\Eperm{\rep{\sigma \setminus \cycleof{\sigma}{N}}}}}
	\\
	&=\sum_{\sumcond{\sigma \in \SymN{N}}{\card{\cycleof{\sigma}{N}}=k}}{\boltzmann{\Eperm{\rep{f(\sigma \setminus \cycleof{\sigma}{N})}}}}
	\\
	\intertext{We now introduce a sum over permutations over $[1,N-k]$. At first, this sum is trivial, over only one element, because $f$ is a function:}
	&=
			\sum_{\sumcond{\sigma \in \SymN{N}}{\card{\cycleof{\sigma}{N}}=k}}{
				\sum_{\sumcond{\sigma' \in \SymN{N-k}}{\sigma' = f(\sigma \setminus \cycleof{\sigma}{N})}}{\boltzmann{\Eperm{\rep{\sigma'}}}}}.
	\\
	\intertext{Changing summation order,}
	&=
			\sum_{\sigma' \in \SymN{N-k}}{
				\sum_{\sumcond{\sigma \in \SymN{N}}{f(\sigma \setminus \cycleof{\sigma}{N}) = \sigma'}}{\boltzmann{\Eperm{\rep{\sigma'}}}}}.
	\\
	\intertext{Noting that the summand depends on $\sigma'$ but not on $\sigma$,}
	&=
			\sum_{\sigma' \in \SymN{N-k}}{
				{\boltzmann{\Eperm{\rep{\sigma'}}}} \cdot \card{\set{\sigma \in \SymN{N} \ | \ f(\sigma \setminus \cycleof{\sigma}{N}) = \sigma'}}}. \label{eq:step-num-squeezed}
\end{align}
Consider the number from the set cardinality in~\Cref{eq:step-num-squeezed}. We now show that it is $\binom{N-1}{k-1} \cdot \pfact{k-1}$ for every $\sigma'$. To see this, consider $h_c(\sigma) = f(\sigma \setminus \cycleof{\sigma}{N})$ that is defined the set of permutations $\sigma \in \SymN{N}$ s.t.\ that $\cycleof{\sigma}{N}$ is determined to a specific set $c$. We argue that $h$ is one-to-one from this set to $\SymN{N-k}$.
To see that $h_c$ is surjective, given $\sigma' \in \SymN{N-k}$, we can construct $\sigma$ s.t.\ $\cycleof{\sigma}{N}=c$ and $h_c(\sigma)=\sigma'$ by adding to $c$ the cycles of $\sigma'$ under the ``inverse'' renaming, from the elements $[1,N-k]$ to the elements $[1,N] \setminus c$ while respecting the order.
To see that $h_c$ is injective, given $\sigma_1,\sigma_2 \in \SymN{N}$ with $\cycleof{\sigma_1}{N} = \cycleof{\sigma_2}{N} = c$, if $h_c(\sigma_1) = h_c(\sigma_2)$, then, because $\sigma_1 \setminus \cycleof{\sigma_1}{N}$ and $\sigma_2 \setminus \cycleof{\sigma_2}{N}$ are both permutations over the same set of elements, than the renaming that $f$ applies to both is the same, which means that $\sigma_1 \setminus \cycleof{\sigma_1}{N} = \sigma_2 \setminus \cycleof{\sigma_2}{N}$ and thus also $\sigma_1 = \sigma_2$.
What the fact that $h_c$ is one-to-one on $\SymN{N-k}$ means is that every $\sigma' \in \SymN{N-k}$ has exactly one inverse in $\SymN{N}$ for every choice of the cycle of $N$. There are $\binom{N-1}{k-1} \cdot \pfact{k-1}$ such choices, which is what we wanted.

Plugging this result to~\Cref{eq:step-num-squeezed} yields
\begin{align}
	\sum_{\sumcond{\sigma \in \SymN{N}}{\card{\cycleof{\sigma}{N}}=k}}{\boltzmann{\Eperm{\rep{\sigma \setminus \cycleof{\sigma}{N}}}}}
	&=
	\label{eq:first-cycle-argument-end}
			\sum_{\sigma' \in \SymN{N-k}}{
				\binom{N-1}{k-1} \cdot \pfact{k-1} \cdot {\boltzmann{\Eperm{\rep{\sigma'}}}}}.
	\\
	\intertext{We can now introduce the potential over fewer elements:}
	&=
			\binom{N-1}{k-1} \cdot \pfact{k-1} \cdot \pfact{N-k} \cdot \boltzmann{\Vto{N-k}}.
\end{align}
We now plug this expression back into~\Cref{eq:proof-sum-all-step}:
\begin{align}
\boltzmann{\Vto{N}} &= 
\frac{1}{\fact{N}} \sum_{k=1}^{N}{\boltzmann{\Enk{N}{k}} \cdot \binom{N-1}{k-1} \cdot \pfact{k-1} \cdot \pfact{N-k} \cdot \boltzmann{\Vto{N-k}}}.
\end{align}

The claim follows because
\begin{equation*}
	\frac{1}{\fact{N}} \cdot \binom{N-1}{k-1} \cdot \pfact{k-1} \cdot \pfact{N-k} = \frac{1}{N}.
\end{equation*}
\end{proof}

\subsection{Force as Expected Force}
The expression of the potential as a sum over representative permutations yields an expression for the forces derived from the potential.
As in the main text, we define the \emph{connection probability} distribution over permutations by
\begin{equation*}
	\Prerepperm{\sigma} = \frac{\boltzmann{\Eperm{\rep{\sigma}}}}{\sum_{\sigma}{\boltzmann{\Eperm{\rep{\sigma}}}}} = 
    \frac{\boltzmann{\Eperm{\rep{\sigma}}}}{\fact{N} \cdot \boltzmann{\Vall}}.
\end{equation*}
Then the force due to $\Vall$ can be written as the expectation value of the force in representative permutations:
\label{sec:proof-force-as-expected-force}
\begin{lemma}
\label{lem:expected-force}
For every bead $j$ of particle $\ell$,
\begin{equation*}
	\beadforce{\ell}{j}{\Vall} = 
	\sum_{\sigma}{\Prerepperm{\sigma} \cdot \beadforce{\ell}{j}{\Eperm{\rep{\sigma}}}}.
\end{equation*}
\end{lemma}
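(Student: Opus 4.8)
The plan is to differentiate the identity of \Cref{def:forward-potential}, $\boltzmann{\Vall} = \frac{1}{\fact{N}} \sum_{\sigma}{\boltzmann{\Eperm{\rep{\sigma}}}}$, which was established to coincide with the recursive potential in \Cref{thm:forward-potential-recurrence}. First I would apply $\nabla_{\beadpos{\ell}{j}}$ to both sides. On the left, the chain rule gives $\nabla_{\beadpos{\ell}{j}} \boltzmann{\Vall} = -\beta \left(\beadderive{\ell}{j}{\Vall}\right) \boltzmann{\Vall}$. On the right, differentiating term by term (a finite sum, so this is unproblematic) gives $\frac{1}{\fact{N}} \sum_{\sigma}{\left(-\beta\right) \left(\beadderive{\ell}{j}{\Eperm{\rep{\sigma}}}\right) \boltzmann{\Eperm{\rep{\sigma}}}}$. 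Here I use that $\repsym$ is a fixed combinatorial map independent of the bead positions, so $\rep{\sigma}$ does not change as $\beadpos{\ell}{j}$ varies, and only the explicit dependence of $\Eperm{\rep{\sigma}}$ on the coordinates contributes.

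Next I would cancel the common factor $-\beta$ from both sides and divide through by $\boltzmann{\Vall}$, yielding
\begin{equation*}
    \beadderive{\ell}{j}{\Vall} = \sum_{\sigma}{\frac{\boltzmann{\Eperm{\rep{\sigma}}}}{\fact{N}\cdot\boltzmann{\Vall}} \cdot \beadderive{\ell}{j}{\Eperm{\rep{\sigma}}}} = \sum_{\sigma}{\Prerepperm{\sigma}\cdot\beadderive{\ell}{j}{\Eperm{\rep{\sigma}}}},
\end{equation*}
where the last equality is just the definition of $\Prerepperm{\sigma}$ in \Cref{eq:representative-distribution}. Negating both sides turns gradients into forces and gives exactly the claimed identity $\beadforce{\ell}{j}{\Vall} = \sum_{\sigma}{\Prerepperm{\sigma}\cdot\beadforce{\ell}{j}{\Eperm{\rep{\sigma}}}}$.

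This argument is essentially a one-line computation; there is no serious obstacle. The only point that needs care is the justification that $\rep{\sigma}$ is constant in the beads, so that the derivative of $\Eperm{\rep{\sigma}}$ involves no contribution from $\repsym$ itself — but this is immediate since $\repsym$ is defined purely combinatorially on permutations, with no reference to the configuration $\pos_1,\ldots,\pos_N$. Differentiation under the (finite) sum is trivially valid. One might also remark that $\Prerepperm{\sigma}$ is manifestly a probability distribution — each term is positive and they sum to $1$ by \Cref{def:forward-potential} — so the right-hand side is genuinely an expectation value, which is the interpretation emphasized in the main text.
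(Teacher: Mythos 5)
Your proposal is correct and follows essentially the same route as the paper: differentiating the identity $\boltzmann{\Vall} = \frac{1}{\fact{N}}\sum_{\sigma}\boltzmann{\Eperm{\rep{\sigma}}}$ (the paper differentiates the equivalent logarithmic form of $\Vall$) and recognizing the resulting weights as $\Prerepperm{\sigma}$. Your explicit remark that $\rep{\sigma}$ is a purely combinatorial map independent of the bead positions is a point the paper leaves implicit, but the argument is otherwise identical.
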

\begin{proof}
According to~\Cref{def:backward-potential}, $\Vall = -\frac{1}{\beta} \ln{\left(\frac{1}{\fact{N}} \sum_{\sigma \in \Symfromto{1}{N}}{\boltzmann{\Eperm{\rep{\sigma}}}} \right)}$. Taking the derivative gives
\begin{align*}
	\beadforce{\ell}{j}{\Vall} 
	&= 
		\frac{1}{\beta} \cdot \frac{1}{\frac{1}{\fact{N}}\sum_{\sigma \in \Symfromto{1}{N}}{\boltzmann{\Eperm{\rep{\sigma}}}}} \cdot \left(\frac{1}{\fact{N}} \sum_{\sigma \in \Symfromto{1}{N}}{\boltzmann{\Eperm{\rep{\sigma}}} \cdot \left(-\beta \cdot \beadderive{\ell}{j}{\Eperm{\rep{\sigma}}}\right)}\right)
	\\
	&=
		\frac{1}{\beta} \cdot \frac{1}{\boltzmann{\Vall}} \cdot \frac{1}{\fact{N}} \sum_{\sigma \in \Symfromto{1}{N}}{\boltzmann{\Eperm{\rep{\sigma}}} \cdot (-\beta) \cdot \beadderive{\ell}{j}{\Eperm{\rep{\sigma}}}}
	\\
	&=
		\frac{1}{\fact{N} \cdot \boltzmann{\Vall}} \sum_{\sigma \in \Symfromto{1}{N}}{\boltzmann{\Eperm{\rep{\sigma}}} \cdot \beadforce{\ell}{j}{\Eperm{\rep{\sigma}}}}
\end{align*}
which yields the desired expression since
\begin{equation*}
	\Prerepperm{\sigma} 
	=
		\frac{1}{\sum_{\sigma' \in \Symfromto{1}{N}}{\boltzmann{\Eperm{\rep{\sigma'}}}}} \boltzmann{\Eperm{\rep{\sigma}}}	
	= 
		\frac{1}{\fact{N} \cdot \boltzmann{\Vall}} \boltzmann{\Eperm{\rep{\sigma}}},
\end{equation*}
where the last equality uses~\Cref{def:forward-potential}.
\end{proof}

\subsection{Connection Probabilities}
\label{sec:connection-probabilities-proofs}
This section derives the expressions for the connection probabilities and the recurrence on the potentials used to compute these probabilities.

First we define the potentials $\Vfrom{1},\Vfrom{2},\ldots,\Vfrom{N-1},\Vfrom{N}$, which will allow us to summary the potential on subsets of particles.
First, we define them in a way that is different from the definition in the main text; we derive the latter, which is an equivalent form that is amenable to efficient computation, is reconstructed in~\Cref{thm:backward-recurrence}.
The definition that we use here is similar to the recursion relation that holds for $\Vto{1},\Vto{2},\ldots,\Vto{N-1},\Vto{N}$ per~\Cref{thm:forward-potential-recurrence}, except that here, the recursion is terminated earlier:
\begin{definition}
\label{def:backward-potential}
The potential $\Vfrom{\particledown}$ for $\particledown=1,\ldots,N$ is defined by the recurrence
\begin{align*}
    &\boltzmann{\Vfrom{\particledown}} = \frac{1}{N} \sum_{k=1}^{N - \particledown + 1}{\boltzmann{\left(\Vfromto{\particledown}{N-k} + \Enk{N}{k}\right)}}
    \\
    &\boltzmann{\Vfromto{\particledown}{\particledown-1}} = 1.
\end{align*}
\end{definition}
Notice that for $\particledown=1$ this agrees with the definition of $\Vfromto{1}{N}$, per~\Cref{thm:forward-potential-recurrence}.

Intuitively, $\Vfrom{\particledown}$ is the potential of the ring polymers on particles $[\particledown,N]$ in the representative permutations, which is why it is useful to derive the probabilities that a cycle ends at a particular particle.
Although the recurrence in this definition takes $\bigO(N^2)$ time to evaluate for each $s$, resulting in $\bigO(N^3)$ overall, we will show that this can be reduced to $\bigO(N^2)$ using a different recurrence relation for $\Vfrom{\particledown}$. For now, this definition is useful to derive expressions for the connection probabilities.

\begin{theorem}
\label{lem:direct-link-probability}
For every $1 \leq \ell < N$, %
\begin{equation*}
	\Prrepnext{\ell}{\ell+1} = 1 - \frac{1}{\boltzmann{\Vall}} \boltzmann{\left(\Vto{\ell} + \Vfrom{\ell + 1}\right)}.
\end{equation*}
\end{theorem}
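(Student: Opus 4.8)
The plan is to work entirely with representative permutations and to evaluate the complementary event, since $\Prrepnext{\ell}{\ell+1}$ is a sum of the weights $\Prerepperm{\sigma}$ over exactly the permutations $\sigma$ with $\rep{\sigma}(\ell)=\ell+1$. The starting point is a structural fact about $\repsym$: each cycle of a representative permutation is an increasing run of consecutive integers, $a+1\mapsto a+2\mapsto\cdots\mapsto a+k\mapsto a+1$. Hence $\rep{\sigma}(\ell)=\ell+1$ holds precisely when $\ell$ and $\ell+1$ lie in a common cycle of $\rep{\sigma}$, so $1-\Prrepnext{\ell}{\ell+1}$ is the total probability mass of the $\sigma$ whose representative ``breaks'' at $\ell$: those for which $\rep{\sigma}=\rho_1\cdot\rho_2^{+}$ (cycles of $\rho_1$ together with those of $\rho_2^{+}$), where $\rho_1$ is a representative permutation of $\{1,\ldots,\ell\}$, equivalently a composition of $\ell$, and $\rho_2^{+}$ is the shift by $\ell$ of a representative permutation $\rho_2$ of $\{1,\ldots,N-\ell\}$. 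Since $\Eperm{\cdot}$ is a sum over cycles and no cycle of $\rep{\sigma}$ straddles the break, $\Eperm{\rep{\sigma}}=\Eperm{\rho_1}+\Eperm{\rho_2^{+}}$ for every such $\sigma$.

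Next I would re-index the sum over these $\sigma$ as a double sum over the pair $(\rho_1,\rho_2)$, with each term weighted by the multiplicity $\card{\repsym^{-1}(\rho_1\cdot\rho_2^{+})}$ and by $\boltzmann{(\Eperm{\rho_1}+\Eperm{\rho_2^{+}})}$. The combinatorial core is the count $\card{\repsym^{-1}(\rho)}=\fact{N}/\prod_{i=1}^{m}K_i$, valid for any representative permutation $\rho$ of $\{1,\ldots,N\}$ with cycle lengths $k_1,\ldots,k_m$ in canonical order and partial sums $K_i=k_1+\cdots+k_i$; I would obtain it by induction on $m$ using the same ``peel off the cycle containing $N$'' correspondence that appears in the proof of \Cref{thm:forward-potential-recurrence} (its $h_c$ step), rather than rederiving it. Applying this to $\rho=\rho_1\cdot\rho_2^{+}$, whose partial sums are those of $\rho_1$ followed by $\ell+J_1,\ldots,\ell+J_b$ with $J_1,\ldots,J_b$ the partial sums of $\rho_2$, the multiplicity factors; summing the $\rho_1$-dependent part over all $\rho_1$ collapses it to $\boltzmann{\Vto{\ell}}$ via \Cref{def:forward-potential} (which reads $\boltzmann{\Vto{\ell}}=\frac{1}{\fact{\ell}}\sum_{\sigma_1\in\SymN{\ell}}\boltzmann{\Eperm{\rep{\sigma_1}}}$), once the constants $\binom{N}{\ell}$, $\fact{\ell}$, $\fact{N-\ell}$ are tracked. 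What remains is $\sum_{\rho_2}\boltzmann{\Eperm{\rho_2^{+}}}/\prod_{i}(\ell+J_i)$, a sum over all representative permutations of $\{\ell+1,\ldots,N\}$ carrying a composition-dependent denominator; unrolling the recursion of \Cref{def:backward-potential} for $\Vfrom{\ell+1}$ shows this equals $\boltzmann{\Vfrom{\ell+1}}$ --- the ``$1/N$'' (top-index) prefactor in that recursion is exactly what manufactures the $\ell+J_i$ factors. Dividing by $\fact{N}\cdot\boltzmann{\Vall}$, the normalization in $\Prerepperm{\sigma}=\boltzmann{\Eperm{\rep{\sigma}}}/(\fact{N}\cdot\boltzmann{\Vall})$, leaves $1-\Prrepnext{\ell}{\ell+1}=\boltzmann{(\Vto{\ell}+\Vfrom{\ell+1})}/\boltzmann{\Vall}$, as claimed.

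The main obstacle is this middle step's bookkeeping. Two things need care: (i) establishing $\card{\repsym^{-1}(\rho)}=\fact{N}/\prod K_i$ and its clean factorization at the break point; and (ii) recognizing that the ``right-hand'' sum is normalized by the factors $\ell+J_i$ rather than by $J_i$ (the per-cycle-group normalization one would naively guess from ``$\Vfrom{\ell+1}$ is the potential of particles $\ell+1,\ldots,N$''), and that this skewed normalization is precisely what \Cref{def:backward-potential} computes. Verifying (ii) amounts to checking that $\Vfromto{\ell+1}{m}$, as defined there, unrolls term by term into $\sum\boltzmann{\Eperm{\cdot}}/\prod_i(\ell+J_i)$ over compositions of $\{\ell+1,\ldots,m\}$, which can be done by a short induction on $m$ that mirrors the recursion. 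Everything else --- the characterization of the connection event through the consecutive-block structure of $\rep{\sigma}$, and the additivity of $\Eperm{\rep{\sigma}}$ over cycles --- is immediate from the definitions.
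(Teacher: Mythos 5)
Your proposal is correct, and it reorganizes the paper's inductive argument into a closed-form computation. The paper proves the same identity $\frac{1}{\fact{N}}\sum_{\sigma:\,\rep{\sigma}\text{ snips at }\ell+1}\boltzmann{\Eperm{\rep{\sigma}}}=\boltzmann{\left(\Vto{\ell}+\Vfrom{\ell+1}\right)}$ by induction on $N$, peeling off the cycle containing $N$ one step at a time and invoking the recursion of \Cref{def:backward-potential} at each step; your version iterates that same peeling to obtain the explicit fiber count $\card{\repsym^{-1}(\rho)}=\fact{N}/\prod_i K_i$ (which is correct: the per-step factor $\binom{K_i-1}{k_i-1}\pfact{k_i-1}=(K_i-1)!/K_{i-1}!$ telescopes to exactly this), then factorizes the weighted sum at the break point into a $\Vto{\ell}$ piece and a $\Vfrom{\ell+1}$ piece. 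The two items you flag as needing care are indeed the only nontrivial checks, and both go through: the decomposition $\rho=\rho_1\cdot\rho_2^{+}$ is a genuine bijection with pairs of representative permutations (since representative permutations are exactly the consecutive-block permutations and are fixed points of $\repsym$), and unrolling \Cref{def:backward-potential} does produce the denominators $\ell+J_1,\ldots,\ell+J_b$, because the $1/N$ prefactor at each level of the recursion is the endpoint of the cycle being peeled, not its length. What your route buys is a transparent explicit formula for the distribution $\Prerepperm{\sigma}$ as a function of the composition of $N$ induced by $\rep{\sigma}$, from which both this theorem and \Cref{lem:close-cycle-probability} follow by the same factorization; what it costs is the extra bookkeeping of the product formula, which the paper's induction avoids by never leaving the recursive form.
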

\begin{proof}
Given a permutation $\sigma \in \SymN{N}$, we say that \emph{$\rep{\sigma}$ snips at $\ell+1$} for $\ell+1 \leq N$ if $\rep{\sigma}$ has a cycle $[\ell+1,v]$ for some $v \geq \ell+1$. (Intuitively, as $\repsym$ ``scans'' the interval $[1,N]$ according to the cycles of $\sigma$, it chooses to cut a cycle between $\ell$ and $\ell+1$.)

Observe, from the definition of $\repsym$, that $\nextof{\ell}{\ell+1}{\rep{\sigma}}$ if $\rep{\sigma}$ contains a cycle $[s_1,s_2]$ s.t.\ $s_1 \leq \ell, \ell+1 \leq s_2$, and only if this is the case, for otherwise $\ell,\ell+1$ are in separate cycles and are not connected.
This means that $\nextof{\ell}{\ell+1}{\rep{\sigma}}$ iff $\rep{\sigma}$ does \emph{not} snip at $\ell+1$.
Hence
\begin{align*}
	\Prrepnext{\ell}{\ell+1} 
	&= 1 - \sum_{\sumcond{\sigma \in \SymN{N}}{\snip{\sigma}{\ell+1}}}{\Prerepperm{\sigma}}
	\\
	&= 1 - \frac{1}{\boltzmann{\Vall}}\left(
		\frac{1}{\fact{N}} \sum_{\sumcond{\sigma \in \SymN{N}}{\snip{\sigma}{\ell+1}}}{\boltzmann{\Eperm{\rep{\sigma}}}}	
	\right).
\end{align*}

It thus remains to prove that for every $\ell$
\begin{equation*}
	\frac{1}{\fact{N}} \sum_{\sumcond{\sigma \in \SymN{N}}{\snip{\sigma}{\ell+1}}}{\boltzmann{\Eperm{\rep{\sigma}}}} = 
	\boltzmann{\left(\Vto{\ell} + \Vfrom{\ell + 1}\right)},
\end{equation*} 
which we do by induction on $N \geq \ell$. 
For the base case, when $N = \ell$, this is simply the fact that
\begin{equation*}
	\frac{1}{\fact{N}} \sum_{\sigma \in \SymN{N}}{\boltzmann{\Eperm{\rep{\sigma}}}}	= \boltzmann{\left(\Vto{N} + \Vfrom{N+1}\right)}
	= 
	\boltzmann{\Vto{N}}
\end{equation*}
seeing that $\Vfrom{N+1} = 0$, and from the definition of $\Vto{N}$ (\Cref{def:forward-potential}).

For the induction step, assume that the claim holds for all $\particleup < N$, and prove for $N$.
By the same reasoning as in~\Crefrange{eq:first-cycle-argument-start}{eq:first-cycle-argument-end}---considering the first cycle that $G$ creates and representing the rest of the construction of $\rep{\sigma}$ as the application of $G$ to permutations on a subset of the elements---we have
\begin{align}
	&\frac{1}{\fact{N}} \sum_{\sumcond{\sigma \in \SymN{N}}{\snip{\sigma}{\ell+1}}}{\boltzmann{\Eperm{\rep{\sigma}}}}
	\\
	&\quad =
	\frac{1}{\fact{N}} \sum_{k=1}^{N-\ell}{\binom{N-1}{k-1} \cdot \pfact{k-1} \cdot \boltzmann{\Enk{N}{k}}
		\sum_{\sumcond{\sigma \in \SymN{N-k}}{\snip{\sigma}{\ell+1}}}{
			\boltzmann{\Eperm{\rep{\sigma}}}
		}
	}
	.
	\\
	\intertext{By the induction hypothesis (for $N - k$),}
	&\quad =
	\frac{1}{\fact{N}} \sum_{k=1}^{N-\ell}{\binom{N-1}{k-1} \cdot \pfact{k-1} \cdot \boltzmann{\Enk{N}{k}}
		\cdot \pfact{N-k} \cdot \boltzmann{\left(\Vto{\ell} + \Vfromto{\ell+1}{N-k}\right)}
	}
	\\
	&\quad =
	\boltzmann{\Vto{\ell}} \cdot \frac{1}{N} \sum_{k=1}^{N-\ell}{
		\boltzmann{\left(\Vfromto{\ell+1}{N-k} + \Enk{N}{k}\right)}
	},
	\\
	\intertext{which, due to the recursive definition of $\Vfrom{\particledown}$ (\Cref{def:backward-potential}, with $\particledown = \ell + 1$), is}
	&\quad =
	\boltzmann{\Vto{\ell}} \boltzmann{\Vfromto{\ell+1}{N}},
\end{align}
as required.
\end{proof}

\begin{theorem}
\label{lem:close-cycle-probability}
For every $1 \leq \particledown \leq \ell \leq N$,
\begin{equation*}
	\Prrepnext{\ell}{\particledown} = \frac{1}{\ell} \frac{1}{\boltzmann{\Vall}} {\boltzmann{\left(\Vto{\particledown-1} + \Efromto{\particledown}{\ell} + \Vfrom{\ell+1}\right)}}.
\end{equation*}
\end{theorem}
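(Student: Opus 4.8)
The plan is to follow the same route as the proof of \Cref{lem:direct-link-probability}, but with a two-sided ``snip'' event in place of the one-sided one. The first step is purely structural: since $\repsym$ writes every cycle as an increasing run of consecutive integers (\Cref{def:representative-permutation}), the only element of a block $(\particledown,\particledown+1,\ldots,\ell)$ that is mapped to something no larger than itself is $\ell$, which is mapped to $\particledown$. Hence, for $\particledown\le\ell$, one has $\nextof{\ell}{\particledown}{\rep{\sigma}}$ if and only if $\rep{\sigma}$ contains the cycle $(\particledown,\ldots,\ell)$. Summing the weights of \Cref{eq:representative-distribution} over exactly these permutations gives
\begin{equation*}
    \Prrepnext{\ell}{\particledown} = \frac{1}{\fact{N}\,\boltzmann{\Vall}} \sum_{\sumcond{\sigma\in\SymN{N}}{\rep{\sigma}\text{ contains the cycle }(\particledown,\ldots,\ell)}}{\boltzmann{\Eperm{\rep{\sigma}}}},
\end{equation*}
so it suffices to show that this constrained sum equals $\frac{\fact{N}}{\ell}\,\boltzmann{\left(\Vto{\particledown-1}+\Efromto{\particledown}{\ell}+\Vfrom{\ell+1}\right)}$, with $\Vfrom{\cdot}$ denoting the potentials in the recursive form of \Cref{def:backward-potential}. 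The key algebraic input is the factorisation $\Eperm{\rep{\sigma}} = \Eperm{\rho_{<}} + \Efromto{\particledown}{\ell} + \Eperm{\rho_{>}}$, valid whenever $\rep{\sigma}$ contains the cycle $(\particledown,\ldots,\ell)$, which splits $\rep{\sigma}$ into its blocks on $[1,\particledown-1]$, the block $[\particledown,\ell]$ itself, and its blocks on $[\ell+1,N]$; this holds because $\Eperm{\cdot}$ is a sum over cycles.

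\paragraph{Induction on $N$.}
I would prove the identity for the constrained sum by induction on $N$, peeling off the cycle $\cycleof{\sigma}{N}$ exactly as in \Crefrange{eq:first-cycle-argument-start}{eq:first-cycle-argument-end}; recall from the proof of \Cref{lem:representative-energy-recurrence} that the last cycle of $\rep{\sigma}$ is always $(N-k+1,\ldots,N)$ with $k=\card{\cycleof{\sigma}{N}}$. In the base case $\ell=N$, the cycle $(\particledown,\ldots,N)$ must be this last cycle, forcing $k=N-\particledown+1$ (so $\Enk{N}{k}=\Efromto{\particledown}{N}$); the constrained sum is then $\binom{N-1}{k-1}\pfact{k-1}\,\boltzmann{\Enk{N}{k}}$ times the unconstrained sum over $\SymN{\particledown-1}$, which by \Cref{def:forward-potential} equals $\pfact{\particledown-1}\,\boltzmann{\Vto{\particledown-1}}$. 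Using $\binom{N-1}{k-1}\pfact{k-1}\pfact{\particledown-1}=\pfact{N-1}$ together with $\Vfrom{N+1}=0$, this is $\frac{\fact{N}}{N}\,\boltzmann{\left(\Vto{\particledown-1}+\Efromto{\particledown}{N}\right)}$, as needed. For the inductive step $\ell<N$, the block $(\particledown,\ldots,\ell)$ is disjoint from the last cycle of $\rep{\sigma}$, forcing $k\le N-\ell$; for each such $k$ the map $\sigma\mapsto f(\sigma\setminus\cycleof{\sigma}{N})$ from the proof of \Cref{thm:forward-potential-recurrence} is a $\binom{N-1}{k-1}\pfact{k-1}$-to-one correspondence onto $\SymN{N-k}$ that neither destroys nor creates the cycle $(\particledown,\ldots,\ell)$ in the representative. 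Applying the induction hypothesis to $N-k$ (which replaces $\Vfrom{\ell+1}$ by $\Vfromto{\ell+1}{N-k}$ and $\fact{N}$ by $\fact{N-k}$), pulling out the $\particledown$-independent factor $\boltzmann{\left(\Vto{\particledown-1}+\Efromto{\particledown}{\ell}\right)}$, using $\binom{N-1}{k-1}\pfact{k-1}\pfact{N-k}=\pfact{N-1}$, and recognising the remaining sum $\frac{1}{N}\sum_{k=1}^{N-\ell}\boltzmann{\left(\Vfromto{\ell+1}{N-k}+\Enk{N}{k}\right)}$ as $\boltzmann{\Vfrom{\ell+1}}$ via \Cref{def:backward-potential} (with first index $\ell+1$) closes the induction.

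\paragraph{Main obstacle.}
The arithmetic with the binomial prefactors and the cycle-wise factorisation of $\Eperm{\cdot}$ is routine --- the former is essentially already carried out inside the proof of \Cref{thm:forward-potential-recurrence}. The step I expect to demand the most care is the bookkeeping for the two-sided constraint in the peeling argument: one must verify both that removing $\cycleof{\sigma}{N}$ preserves the presence of the cycle $(\particledown,\ldots,\ell)$ in the representative \emph{and} that precisely the values $k\le N-\ell$ can contribute (a $k$-cycle through $N$ with $k>N-\ell$ necessarily overlaps $[\particledown,\ell]$, so $\rep{\sigma}$ cannot then contain $(\particledown,\ldots,\ell)$). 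This matching of the summation range to the range in \Cref{def:backward-potential} is exactly what lets the recursion telescope to $\Vfrom{\ell+1}$; everything else transfers directly from the proof of \Cref{lem:direct-link-probability}.
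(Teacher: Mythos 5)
Your proposal is correct and follows essentially the same route as the paper's own proof: the same characterization of $\nextof{\ell}{\particledown}{\rep{\sigma}}$ as $\rep{\sigma}$ containing the cycle $(\particledown,\ldots,\ell)$, the same induction on $N$ with base case $\ell=N$, the same peeling of $\cycleof{\sigma}{N}$ with the $\binom{N-1}{k-1}\pfact{k-1}$ counting restricted to $k\le N-\ell$, and the same telescoping into $\Vfrom{\ell+1}$ via \Cref{def:backward-potential}. Your explicit attention to why only $k\le N-\ell$ contributes and to the cycle-wise factorisation of $\Eperm{\cdot}$ makes the bookkeeping slightly more transparent than the paper's presentation, but the argument is the same.
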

\begin{proof}
Similarly to before, given a permutation $\sigma \in \SymN{N}$, we say that \emph{$\rep{\sigma}$ snips at $\ell+1$} for $\ell+1 \leq N$ if $\rep{\sigma}$ contains a cycle $[\ell+1,v]$ for some $v \geq \ell+1$.
We say that \emph{$\rep{\sigma}$ snips from $\particledown$ to $\ell$} for $\ell \geq \particledown$ if $\rep{\sigma}$ contains the cycle $[\particledown,\ell]$.

Observe, from the definition of $\repsym$, that $\nextof{\ell}{\particledown}{\rep{\sigma}}$ for $\particledown \leq \ell$ iff $\rep{\sigma}$ contains the cycle $[\particledown,\ell]$; intuitively, under $\repsym$, a particle can be connected to a lower particle only when a cycle closes.
Hence
\begin{align*}
	\Prrepnext{\ell}{\particledown} 
	&= \sum_{\sumcond{\sigma \in \SymN{N}}{\sumcond{\snip{\sigma}{\ell+1}}{\fullsnip{\sigma}{\particledown}{\ell}}}}{\Prerepperm{\sigma}}
	\\
	&= \frac{1}{\boltzmann{\Vall}}\left(
		\frac{1}{\fact{N}} \sum_{\sumcond{\sigma \in \SymN{N}}{\sumcond{\snip{\sigma}{\ell+1}}{\fullsnip{\sigma}{\particledown}{\ell}}}}{\boltzmann{\Eperm{\rep{\sigma}}}}	
	\right).
\end{align*}
It thus remains to show that
\begin{equation*}
	\frac{1}{\fact{N}} \sum_{\sumcond{\sigma \in \SymN{N}}{\sumcond{\snip{\sigma}{\ell+1}}{\fullsnip{\sigma}{\particledown}{\ell}}}}{\boltzmann{\Eperm{\rep{\sigma}}}} = 
	\frac{1}{\ell} \frac{1}{\boltzmann{\Vall}} {\boltzmann{\left(\Vto{\particledown-1} + \Efromto{\particledown}{\ell} + \Vfrom{\ell+1}\right)}},
\end{equation*}
which we do by induction on $N \geq \ell$.
Essentially, The proof uses the same inductive proof as in the previous theorems to ``peel away'' the part $[\ell+1,N]$ through any number of cycles, then cuts the cycle $[\particledown,\ell]$, and continues to consider all permutations on the elements $[1,\particledown-1]$.

For the base case $N = \ell$, the condition that $\rep{\sigma}$ snip at $\ell+1=N+1$ is considered vacuous, and we are at exactly the same scenario we had in the proof of~\Cref{thm:forward-potential-recurrence}, where we are interested in the sum of all permutations where the cycle of $N$ has $k$ elements, for $k = N - \particledown+1$. Exactly as shown there, this is
\begin{align*}
	\frac{1}{\fact{N}} \sum_{\sumcond{\sigma \in \SymN{N}}{\fullsnip{\sigma}{\particledown}{N}}}{\boltzmann{\Eperm{\rep{\sigma}}}}
	&\quad =
	\frac{1}{N} \boltzmann{\left(\Vto{\particledown-1} + \Efromto{\particledown}{N}\right)},
\end{align*}
which is the desired $\frac{1}{\ell} \boltzmann{\left(\Vto{\particledown-1} + \Efromto{\particledown}{\ell}\right)}$ since $N = \ell$.

The induction steps follows the same pattern: summing over all the ways to choose $k$ elements in the cycle containing $N$, and summing over all the permutations on $N-k$ elements on which $\repsym$ snips at the appropriate places:
\begin{equation*}
\begin{split}
	&\frac{1}{\fact{N}} \sum_{\sumcond{\sigma \in \SymN{N}}{\sumcond{\snip{\sigma}{\ell+1}}{\fullsnip{\sigma}{\particledown}{\ell}}}}{\boltzmann{\Eperm{\rep{\sigma}}}}
	\\
	&\quad =
	\frac{1}{\fact{N}} \sum_{k=1}^{N-\ell}{\binom{N-1}{k-1} \cdot \pfact{k-1} \cdot \boltzmann{\Enk{N}{k}}
		\sum_{\sumcond{\sigma \in \SymN{N-k}}{\sumcond{\snip{\sigma}{\ell+1}}{\fullsnip{\sigma}{\particledown}{\ell}}}}{
			\boltzmann{\Eperm{\rep{\sigma}}}
		}
	}
	.
	\\
	\intertext{By the induction hypothesis (for $N - k$),}
	&\quad =
	\frac{1}{\fact{N}} \sum_{k=1}^{N-\ell}{\left[
		\begin{split}
			&\binom{N-1}{k-1} \cdot \pfact{k-1} \cdot \boltzmann{\Enk{N}{k}}
			\\
			&\cdot \pfact{N-k} \cdot \frac{1}{\ell} \cdot \boltzmann{\left(\Vto{\particledown-1} + \Efromto{\particledown}{\ell} + \Vfromto{\ell+1}{N-k}\right)}
		\end{split}
		\right]
		}
	\\
	&\quad =
	\frac{1}{\ell} \boltzmann{\left(\Vto{\particledown-1} + \Efromto{\particledown}{\ell}\right)} \cdot \frac{1}{N} \sum_{k=1}^{N-\ell}{
		\boltzmann{\left(\Vfromto{\ell+1}{N-k} + \Enk{N}{k}\right)}
	},
	\\
	\intertext{which, due to the recursive definition of $\Vfrom{\ell}$,}
	&\quad =
	\frac{1}{\ell} \boltzmann{\left(\Vto{\particledown-1} + \Efromto{\particledown}{\ell}\right)} \boltzmann{\Vfromto{\ell}{N}},
\end{split}
\end{equation*}
and the claim follows.
\end{proof}

\begin{lemma}
\label{thm:zero-connection-probability}
For $\particledown > \ell + 1$,
\begin{equation*}
	\Prrepnext{\ell}{\particledown} = 0.
\end{equation*}
\end{lemma}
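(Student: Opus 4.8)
The plan is to observe that this is essentially an immediate corollary of the structure of representative permutations, and to dispatch it with a short combinatorial argument about the cycle notation of $\rep{\sigma}$. Recall that $\Prrepnext{\ell}{\particledown}$ sums $\Prerepperm{\sigma}$ over all $\sigma$ for which $\nextof{\ell}{\particledown}{\rep{\sigma}}$, i.e.\ bead $P$ of particle $\ell$ is joined to bead $1$ of particle $\particledown$ in the representative permutation. So it suffices to show that \emph{no} permutation $\sigma \in \SymN{N}$ satisfies $\nextof{\ell}{\particledown}{\rep{\sigma}}$ when $\particledown > \ell + 1$; then the sum is empty and equals $0$.

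First I would recall the key structural fact already established in the proof of~\Cref{lem:direct-link-probability} (and implicitly in~\Cref{def:representative-permutation}): every cycle of $\rep{\sigma}$ is a \emph{contiguous interval} $[s_1,s_2]$ of integers, and these intervals tile $[1,N]$ from left to right in increasing order. This follows directly from~\Cref{def:representative-permutation}, where the $i$-th cycle $c'_i$ is constructed as a block of consecutive integers immediately following the previous block. Consequently, within a single cycle $[s_1,s_2] = (s_1, s_1+1, \ldots, s_2)$ viewed as a permutation, the image of an element $m$ with $s_1 \le m < s_2$ is $m+1$, and the image of $s_2$ (the cycle's closing edge) is $s_1$. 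In other words, in $\rep{\sigma}$, $\nextof{\ell}{\ell'}{\rep{\sigma}}$ holds only if either (i) $\ell' = \ell + 1$, with $\ell,\ell+1$ in the same cycle, or (ii) $\ell' \le \ell$, namely $\ell' = s_1$ is the start and $\ell = s_2$ the end of the same cycle. In neither case can we have $\ell' = \particledown > \ell + 1$.

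The remaining step is just to state this cleanly: since $\particledown > \ell + 1$ falls outside both possibilities, the set $\{\sigma \in \SymN{N} \mid \nextof{\ell}{\particledown}{\rep{\sigma}}\}$ is empty, so
\begin{equation*}
	\Prrepnext{\ell}{\particledown} = \sum_{\sumcond{\sigma \in \SymN{N}}{\nextof{\ell}{\particledown}{\rep{\sigma}}}}{\Prerepperm{\sigma}} = 0.
\end{equation*}
I do not anticipate a real obstacle here; the only thing to be careful about is making the ``cycles of $\rep{\sigma}$ are consecutive intervals'' claim precise by pointing at~\Cref{def:representative-permutation} rather than re-deriving it, and noting that the case $\particledown = \ell+1$ (excluded from this lemma) is precisely the boundary case handled separately in~\Cref{lem:direct-link-probability}, while $\particledown \le \ell$ is handled in~\Cref{lem:close-cycle-probability} — so this lemma simply fills the last gap and the three together cover all $\particledown$.
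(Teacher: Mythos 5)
Your proof is correct and follows essentially the same route as the paper's (one-line) argument: from \Cref{def:representative-permutation}, the cycles of $\rep{\sigma}$ are consecutive blocks, so $\ell$ is always mapped either to $\ell+1$ or to an element $\le \ell$, leaving the sum defining $\Prrepnext{\ell}{\particledown}$ empty for $\particledown > \ell+1$. Your version just spells out the interval-tiling structure more explicitly than the paper does.
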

\begin{proof}
Easy to see from the definition of $\repsym$ that all the cycles that $\rep{\sigma}$ generates connect $\ell$ to either $\ell+1$ (if they are part of the same cycle) or a to a lower element (if $\ell$ is the endpoint of a cycle).
\end{proof}

Finally, we use the above results about connection probabilities to derive a recurrence relation that enables an efficient computation of $\Vfrom{\particledown}$:
\begin{theorem}
\label{thm:backward-recurrence}
The quantities $\Vfrom{1},\Vfrom{2},\ldots,\Vfrom{N}$ (\Cref{def:backward-potential}) satisfy the recursion formula
\begin{equation*}
\label{thm:tail-recursion}
	\boltzmann{\Vfrom{\particledown}} = \sum_{\ell=\particledown}^{N}{\frac{1}{\ell} \boltzmann{\left(\Efromto{\particledown}{\ell} + \Vfrom{\ell+1}\right)}}
\end{equation*}
with $\Vfrom{N+1} = 0$.
\end{theorem}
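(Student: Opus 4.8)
The plan is to read the recursion off directly from the connection‑probability identities that have already been established --- \Cref{lem:direct-link-probability}, \Cref{lem:close-cycle-probability} and \Cref{thm:zero-connection-probability} --- together with the elementary observation that, for each fixed $\particledown$, the first bead of particle $\particledown$ is connected to the last bead of exactly one particle in $\rep{\sigma}$, namely $(\rep{\sigma})^{-1}(\particledown)$. Summing the weights in \Cref{eq:representative-distribution} over all $\sigma\in\SymN{N}$ and grouping the permutations according to that unique particle gives the normalization
\[ \sum_{\ell=1}^{N}{\Prrepnext{\ell}{\particledown}} = 1 , \]
since each $\sigma$ is counted exactly once, for $\ell = (\rep{\sigma})^{-1}(\particledown)$. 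No circularity arises, because \Cref{lem:direct-link-probability,lem:close-cycle-probability,thm:zero-connection-probability} were all proved from \Cref{def:backward-potential}, not from the recursion we are now deriving.

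Next I would use \Cref{thm:zero-connection-probability} to discard the vanishing terms in this sum: only $\ell=\particledown-1$ (present when $\particledown\geq 2$) and $\ell=\particledown,\ldots,N$ survive. Substituting the expression of \Cref{lem:direct-link-probability} into the $\ell=\particledown-1$ term and that of \Cref{lem:close-cycle-probability} into the terms with $\ell\geq\particledown$, the normalization becomes
\[ 1 = 1 - \frac{1}{\boltzmann{\Vall}}\boltzmann{\left(\Vto{\particledown-1}+\Vfrom{\particledown}\right)} + \sum_{\ell=\particledown}^{N}{\frac{1}{\ell}\frac{1}{\boltzmann{\Vall}}\boltzmann{\left(\Vto{\particledown-1}+\Efromto{\particledown}{\ell}+\Vfrom{\ell+1}\right)}} . \]
Cancelling the two $1$'s, multiplying through by $\boltzmann{\Vall}$, and dividing by the common factor $\boltzmann{\Vto{\particledown-1}}$ then yields exactly $\boltzmann{\Vfrom{\particledown}} = \sum_{\ell=\particledown}^{N}\frac{1}{\ell}\boltzmann{\left(\Efromto{\particledown}{\ell}+\Vfrom{\ell+1}\right)}$, as claimed. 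The case $\particledown=1$ needs no special treatment: there is no $\ell=0$ term, every surviving term is of the close‑cycle type (\Cref{lem:close-cycle-probability}), and using $\Vto{0}=0$ and $\Vfrom{1}=\Vall$ the same algebra gives the identity (equivalently, a \Cref{lem:direct-link-probability}‑type term at $\ell=0$ would formally read $1-1=0$ and may be added harmlessly).

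There is essentially no hard step here: the entire content is the recognition that the desired recursion is precisely the normalization of the connection probabilities, with the known closed forms substituted in. If instead one wanted a proof that avoids the connection‑probability lemmas, the alternative is to prove the slightly more general identity $\boltzmann{\Vfromto{\particledown}{m}} = \sum_{\ell=\particledown}^{m}\frac{1}{\ell}\boltzmann{\left(\Efromto{\particledown}{\ell}+\Vfromto{\ell+1}{m}\right)}$ by induction on the length $m-\particledown+1$ of the index range: expand $\boltzmann{\Vfromto{\particledown}{m}}$ with \Cref{def:backward-potential}, split off the top term $k=m-\particledown+1$ (which contributes $\frac{1}{m}\boltzmann{\Efromto{\particledown}{m}}$ because $\Vfromto{\particledown}{\particledown-1}=0$), apply the induction hypothesis to each remaining $\Vfromto{\particledown}{m-k}$, and interchange the order of the resulting double sum so that the inner sum collapses back to $m\cdot\boltzmann{\Vfromto{\ell+1}{m}}$ by \Cref{def:backward-potential} again; the one mildly delicate point on this route is the reindexing of the double sum, with everything else being bookkeeping. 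Setting $m=N$ recovers the theorem.
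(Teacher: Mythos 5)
Your proposal is correct and follows essentially the same route as the paper's own proof: both derive the recursion from the normalization $\sum_{\ell=1}^{N}{\Prrepnext{\ell}{\particledown}} = 1$, discard the vanishing terms via \Cref{thm:zero-connection-probability}, substitute the closed forms from \Cref{lem:direct-link-probability,lem:close-cycle-probability}, and cancel the common factor $\boltzmann{\Vto{\particledown-1}}$, treating $\particledown=1$ separately. Your remark on non-circularity and your sketched alternative direct induction are sound but not needed beyond what the paper does.
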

\begin{proof}
Every element $\particledown$ is the successor of another element in every $\rep{\sigma}$, so with probability $1$,
\begin{equation*}
	\sum_{\ell=1}^{N}{\Prrepnext{\ell}{\particledown}} = 1.
\end{equation*}

Consider first the case that $\particledown > 1$.
Then
$
\Prrepnext{\ell}{\ell+1} + \sum_{\ell=\particledown}^{N}{\Prrepnext{\ell}{\particledown}} = 1.
$
Using~\Cref{lem:direct-link-probability,lem:close-cycle-probability,thm:zero-connection-probability}, this means that
\begin{align*}
	\quad &\left(1 - \frac{1}{\boltzmann{\Vall}} \boltzmann{\left(\Vto{\particledown-1} + \Vfrom{\particledown}\right)}\right)
	\\
	+ &\sum_{\ell=\particledown}^{N}{\frac{1}{\ell} \frac{1}{\boltzmann{\Vall}} \boltzmann{\left(\Vto{\particledown-1} + \Efromto{\particledown}{\ell} + \Vfrom{\ell+1}\right)}}
	= 1.
\end{align*}
Rearranging,
\begin{align*}
	\sum_{\ell=\particledown}^{N}{\frac{1}{\ell} \boltzmann{\left(\Vto{\particledown-1} + \Efromto{\particledown}{\ell} + \Vfrom{\ell+1}\right)}}
	=
	\boltzmann{\left(\Vto{\particledown-1} + \Vfrom{\particledown})\right)}.
\end{align*}
Isolating $\Vfrom{\particledown}$,
\begin{align*}
	\boltzmann{\Vfrom{\particledown}} = \sum_{\ell=\particledown}^{N}{\frac{1}{\ell} \boltzmann{\left(\Efromto{\particledown}{\ell} + \Vfrom{\ell+1}\right)}},
\end{align*}
as desired.

For the case $\particledown=1$, we have
$
	\sum_{\ell=1}^{N}{\Prrepnext{\ell}{\particledown}} = 1,
$
meaning
\begin{align*}
	&\sum_{\ell=1}^{N}{\frac{1}{\ell} \frac{1}{\boltzmann{\Vall}} \boltzmann{\left(\Efromto{1}{\ell} + \Vfrom{\ell+1}\right)}}
	= 1.
\end{align*}
Moving $\Vfrom{1}$ to the other side,
\begin{equation*}
	\boltzmann{\Vfrom{1}} = \sum_{\ell=1}^{N}{\frac{1}{\ell} \boltzmann{\left(\Efromto{1}{\ell} + \Vfrom{\ell+1}\right)}},
\end{equation*}
as desired.
\end{proof}  \fi

\end{document}